\numberwithin{equation}{section}
\newtheorem*{prop*}{Proposition}
\newcommand*{\dt}[1]{%
  \accentset{\mbox{\large\bfseries .}}{#1}}
\title{Double bracket formulation for the distribution function approach to multibead-chain suspensions}
\author{Ching Lok Chong}
\date{17 September 2019 \\\quad\\OCIAM, Mathematical Institute,
University of Oxford,
Andrew Wiles Building,
\\ Radcliffe Observatory Quarter,
Woodstock Road,
Oxford,
OX2 6GG,
UK}
\begin{document}

\maketitle

\begin{abstract}
A suspension of elastic chains of small beads in a Newtonian fluid is a common model for a viscoelastic polymer solution.
The configuration of these multibead chains can be described by a distribution function that evolves according to a Liouville or Fokker--Planck equation.
For example, when we consider pairs of beads connected with Hookean springs suspended in an incompressible fluid, this approach leads to the well-known upper-convected Maxwell model. This is done by taking the second moment of the Fokker--Planck equation that governs the distribution function, which is also sufficient to describe the stress of the bead-spring pairs on the fluid.
The evolution of these multibead-chain suspensions can be described using a double bracket formulation with a Hamiltonian functional. The conservative part of the dynamics is described by a Poisson bracket, and the dissipative part by an additional symmetric bracket.
We treat the configuration space of multibead chains as a higher order tangent bundle. Lifting the fluid velocity field to the bundle leads naturally to a semidirect product Lie--Poisson bracket for the conservative dynamics.
The elastic stress exerted by the multibead chains on the fluid then follows directly from the same Hamiltonian functional that governs the internal dissipative mechanics of the multibead chains.
For chains with three or more beads, the possible bending of the chain introduces an angular momentum flux that is absent for chains with two beads. This flux appear as an asymmetric elastic stress whose antisymmetric part is the divergence of a rank-$3$ tensor, as in the Cosserats' theory of couple stresses in media with no internal angular momentum density.
We investigate the possibility of an exact closure, passing from a distribution function description to a closed internal state variable description of the fluid suspension, and obtain some sufficient conditions for their existence. The resulting exactly closable models are generalisations of the upper-convected Maxwell model to Hookean bead-spring chains instead of Hookean bead-spring pairs.
\\\quad\\
\textbf{Keywords:} viscoelastic fluids, kinetic description, higher order tangent bundles, semidirect product Lie--Poisson brackets
\end{abstract}

\section{Introduction}\label{sec:intro}

A common microscopic model for a non-Newtonian fluid, e.g. a viscoelastic polymer solution, is that of a \emph{fluid suspension}, which consists of small bodies with internal structure suspended in a Newtonian fluid. The evolution of these small bodies can be described by a \emph{hydrodynamic} part, where the small bodies are advected like Lagrangian markers, and a \emph{dissipative} part, which describes the internal relaxation of the small bodies, in addition to diffusive effects due to stochastic Brownian forces. The exact configurations of the small bodies are difficult to measure directly and uninteresting for macroscopic experiments, so one often resorts to a probabilistic, or ensemble, description of the small bodies, using an ensemble \emph{distribution function} $\psi(x,y)$ to encode the number density of bodies at position $x$ and with internal configuration $y$. The stress exerted on the fluid by the suspended bodies is a macroscopic quantity i.e. a function of position only, so it can often be given in terms of some statistical properties of the internal degrees of freedom in the ensemble, or in other words certain $y$-integrals (or $y$-moments) of the distribution function $\psi(x,y)$.

A familiar example of such a system would be a dilute suspension of Hookean bead-spring pairs, which serves as a microscopic model for a non-Newtonian viscoelastic fluid described by the upper-convected Maxwell model \cite{Bird87, Renardy00}. In this case, the internal configuration $y$ would be a vector describing the relative displacement of the two beads. The evolution equation for the distribution function $\psi(x,y)$ is a \emph{Liouville equation}, or a \emph{Fokker--Planck equation}, which describes the advection of the beads as Lagrangian markers, the relative motion of the beads to the background flow due to the force of the spring, and diffusion due to stochastic Brownian forces exerted by the suspending fluid. For a Hookean spring, the particle-contributed stress depends solely on the \emph{number density} $n(x) = \int \mathrm{d}^n y \ \psi(x,y)$ and the \emph{conformation tensor} $C^{jk}(x) = \int \mathrm{d}^n y \ y^jy^k\psi(x,y)$, which are the zeroth and second $y$-moments of $\psi$ respectively. Moreover, if we take these moments of the Liouville equation for $\psi$, we can show that they form a \emph{closed system} -- the time evolution of $n(x)$ and $C^{jk}(x)$ do not depend on any other $y$-moments of $\psi$. Upon closer inspection, the reduced system of equations we have obtained this way is precisely the upper-convected Maxwell model for the elastic stress.

The above observations can be recast into more abstract geometrical language. The fluid velocity field can be thought of as a vector field $\mathbf{u}$ on a manifold $M$, while the configuration space for each of the suspended bodies can be thought of as a \emph{fibre bundle} $E\rightarrow M$, which (loosely speaking) is a smoothly varying assignment of a standard fibre $F$ to each point $x$ in $M$, written as $F\mapsto F_x$. The standard fibre $F$ can be thought of as the internal configuration space of a reference copy of the small body, whereas the total space $E$ can be thought of as the full configuration space of the body in the fluid domain. Certain fibre bundles $E$ admit lifts of vector field $\mathbf{u}$ on the base manifold $M$ to a vector field $\mathbf{u}^{\#}$ on $E$, such that the lift is a \emph{Lie algebra homomorphism}. This allows us to think of the flow of the lifted vector field as the flow of Lagrangian markers in a suitable configuration space. For example, if $E=TM$ is the \emph{tangent bundle}, then the \emph{complete lift} (or \emph{tangent lift}) of vector fields describe the motion of a line element (or a tangent vector) which is frozen into a fluid -- the basepoint of the line element moves with the fluid, while the displacement across the line element is stretched by the local velocity gradient. If the evolution of the small bodies can be described purely as Lagrangian markers, then the resulting coupled dynamical system for the fluid and the (ensemble) distribution function $\psi$ can be formulated as a \emph{non-canonical Hamiltonian system}, whose dynamics is completely described by an \emph{(abstract) Poisson bracket} $\{\cdot,\cdot\}$ and a \emph{Hamiltonian functional} $H$. The system is called non-canonical, because the Poisson bracket $\{\cdot,\cdot\}$ is an abstract, coordinate-free generalisation of the usual Poisson bracket given in canonical coordinates. The Poisson brackets that we will consider belong to the family of \emph{semidirect product Lie--Poisson brackets} \cite{Marsden84a, Marsden84b}. One of the main advantages for such a formulation is that, given an arbitrary Hamiltonian functional for the suspended small bodies, the body force (and hence the stress) exerted on the fluid by the bodies can be calculated from a straightforward manipulation of the Poisson bracket \cite{Grmela89, Beris94book}. Semidirect product Lie--Poisson structures have been used to describe ideal fluids \cite{Salmon88, Morrison98}, complex fluids \cite{Grmela88, Grmela89, Beris90, Edwards90, Beris94book, Mackay19} and magnetohydrodynamics \cite{Morrison80, Marsden84a}.

The dissipative part of the dynamical system, which typically consists of the relaxation of the internal degrees of freedom e.g. motions of the body caused by internal elastic forces, as well as Brownian diffusive effects, can be modelled by a symmetric \emph{dissipation bracket} \cite{Beris90, Edwards90,Beris94book, Morrison86}. This can be thought of as an implementation of the \emph{mobility relations} in Stokes flow that convert any extra forces to a velocity difference with the background flow \cite{Beris94book, Grmela88}. Such internal interactions typically do not couple to the macroscopic fluid flow, so the procedure of calculating the stress using the Poisson bracket is still valid. This gives a complete description of the fluid suspension in terms of the usual fluid variables and the distribution function $\psi$. It is also possible to account for the Newtonian viscous stress of the fluid using dissipation brackets \cite{Morrison84,Beris94book,EnzTurski79}. For the case of a Hookean bead-spring pair suspension, the inclusion of the Newtonian viscous stress leads to the Oldroyd-B model, a generalisation of the upper-convected Maxwell model with nonzero fluid viscosity.

However, the distribution function $\psi(x,y)$ depends on both the internal and macroscopic degrees of freedom, and the evolution equation of $\psi(x,y)$ is a partial differential equation in $\mathrm{dim}(E)$ dimensions (not including time), which is much more computationally expensive to solve than a system of partial differential equations in $\mathrm{dim}(M)$ dimensions. For example, for a bead-spring pair suspension in a $3$-dimensional domain, the evolution equation for the distribution function will be a $6$-dimensional partial differential equation. Fortunately, for some cases, it is not necessary to solve for the full distribution function $\psi(x,y)$ -- it is possible that there exist a finite number of $y$-integrals (more abstractly, integrals over the fibres $F_x$) of the distribution function $\psi(x,y)$ that form a closed system of evolution equations, in addition to being able to fully describe the stress tensor. This property is called \emph{(finite and exact) closure}. When this is possible, we can understand the particle-contributed stress fully by evolving such $y$-integrals of $\psi(x,y)$, which will be $x$-dependent fields, without having to solve for the full distribution function $\psi(x,y)$. In abstract terms, the $y$-integrals of $\psi(x,y)$ will be geometric objects living on $M$ (sections of some naturally constructed fibre bundles), e.g. tensor fields. This is indeed the case for the Hookean bead-spring pair, for which there exists a closed system of evolution equations for the macroscopic number density $n(x)$ and the conformation tensor $C^{jk}(x)$. Together they are sufficient to describe the particle-contributed stress.

While the upper-convected Maxwell model and its connection with a suspension of bead-spring pairs is well known \cite{Bird87, Renardy00}, a straightforward extension of the strategy above produces models for multibead-chain suspensions, namely by looking at fibre bundles known as \emph{higher order tangent bundles} \cite{Yano73book}. Loosely speaking, a point in the $N^{th}$ order tangent bundle $T^{(N)}M$ can be thought of as a local $(N+1)$-point approximation to a path attached to $M$. This is a generalisation of the tangent bundle $TM$, which is the configuration space for tangent vectors, or material line elements. Such fibre bundles admit analogues of the complete lift of vector fields to tangent bundles, also called the \emph{complete lift}, such that the hydrodynamic part of the evolution of multibead-chains (as Lagrangian markers) can be captured by a semidirect product Lie--Poisson bracket. If we choose a dissipation bracket corresponding to a linear mobility relation to describe the response to non-hydrodynamic forces on the multibead-chains, and choose a quadratic energy function analogous to that of Hookean bead-spring pairs, we can show that the resulting system is \emph{closed} -- there exists a finite set of moments whose evolution equations do not depend on moments outside of the set, and are also sufficient to describe the stress tensor. These models can be considered as generalisations of the upper-convected Maxwell model from bead-spring pairs to bead-spring chains.

A crucial difference between a multibead-chain with $3$ or more beads and a bead-spring pair is that the former can exchange \emph{angular momentum} between fluid parcels i.e. the particle-contributed stress tensor is asymmetric. This can be understood as follows: a multibead-chain with $3$ or more beads is sensitive to the \emph{second derivative} of the background fluid velocity, which means it can detect \emph{vorticity gradients} across fluid parcels. Equivalently, the multibead-chain can \emph{bend}, in the sense that the middle beads are not necessarily aligned with the end-to-end displacement of the chain. This is an effect that is not captured by a bead-spring pair, since it is only sensitive to the local velocity gradient. We will show that for reasonable choices of the internal energy, the antisymmetric part of the stress tensor can be written as the divergence of a $3$-index tensor, which we can think of as an \emph{angular momentum flux} across material surfaces. This is consistent with the picture that the suspended multibead-chains have no inertia, and hence have zero internal angular momentum. Thus we have a family of microscopic models for an asymmetric non-Newtonian stress, which supports an angular momentum flux due to internal structure. These are microscopic realisations of the generalised continuum systems considered in \cite{CondiffDahler64, Cosserat09book,Rosensweig85} that support asymmetric stress tensors. These continuum systems allow the interactions of the internal degrees of freedom to transmit an angular momentum flux, called the \emph{couple stress}. The internal angular momentum density of the multibead-chain suspension is identically zero everywhere, so the instantaneous torque balance is maintained by the divergence of a couple stress. By contrast, in continuum models for suspensions with internal \enquote{spin} degrees of freedom such as ferrofluids, couple stresses alone cannot balance the asymmetric hydrodynamic stress, so the internal angular momentum density must evolve dynamically \cite{Shliomis74, Rosensweig85}.

The outline of the paper is as follows. In section \textbf{\ref{sec:maths}} we give an overview of the general mathematical strategy on forming semidirect product Lie--Poisson systems relevant to fluid suspensions, and in section \textbf{\ref{sec:2bead}} we will apply the strategy to recover the upper-convected Maxwell model for bead-spring pairs, which will serve as a benchmark for the double bracket systems considered later. In section \textbf{\ref{sec:TNMmodel}} we will consider higher order tangent bundles, and the resulting semidirect product Lie--Poisson bracket obtained from such a construction. As an illustration, we will work out the case for the $3$-bead chain in section \textbf{\ref{sec:3beads}}, and examine its similarities and differences with the bead-spring pair in detail. Finally, we will investigate the closure and conservation properties of the general multibead-chain model in section \textbf{\ref{sec:TNMmodelcont}}, by working out the explicit form of the Poisson bracket.

\section{Mathematical preliminaries}\label{sec:maths}
This section is a brief exposition on some of the geometrical machinery that is used in constructing Poisson brackets relevant for the dynamics of fluid suspensions, so that we can formulate the conservative part of the dynamics as a \emph{non-canonical Hamiltonain system}. The main references are \cite{Marsden83, Marsden84a, Marsden84b, Khesin08, Marsden13book, Morrison98}. Most technical hypotheses will be suppressed. In particular, various smoothness and decay assumptions on infinite-dimensional spaces will be implicit. We will assume our manifolds to be orientable to simplify the theory of integration. To extend our exposition to non-orientable manifolds, replace every instance of \enquote{top-degree differential form} or \enquote{volume form} with \enquote{density}.

A complementary approach, which we will not pursue in this exposition, formulates the conservative part of the suspension dynamics in terms of \emph{Hamilton's principle of extremal action}. The review article \cite{Salmon88} contains an exposition on variational principles for ideal fluids. These are usually formulated in terms of Lagrangian particle labels and their material time derivatives. 
The \emph{Euler--Poincar\'e variational principle} is a reduced description using only Eulerian fields \cite{Holm98,Marsden13book}. It only considers variations of the Eulerian fields that can be generated by variations of the Lagrangian label-to-particle map. This is the variational analogue of non-canonical Hamiltonian mechanics formulated using a \emph{Lie--Poisson bracket} (see section \textbf{\ref{subsec:liepoi}}).
The {Euler--Poincar\'e variational principle} has been used to study the geometric structure of various types of ideal complex fluids in \cite{Holm02, GayBalmaz09}.

\subsection{Lie algebras}\label{subsec:liealg}
An \emph{abstract Lie algebra} $\mathfrak{g}$ is a vector space together with a bilinear $\mathfrak{g}$-valued map $[\cdot ,\cdot ]: \mathfrak{g} \times \mathfrak{g} \rightarrow \mathfrak{g}$ called the \emph{Lie bracket}, such that, for all $u,v,w \in \mathfrak{g}$:
\begin{align}
[u,v] &= -[v,u],  & (\emph{antisymmetry})\\
[[u,v],w] + [[v,w],u] + [[w,u],v] &= 0. & (\emph{Jacobi identity}) 
\end{align}
Our primary example will be the \emph{Lie algebra of vector fields} on a manifold $M$, denoted $\mathrm{Vect}(M)$.
In tensor calculus and differential geometry, one associates vector fields $\mathbf{u}$, whose components will be written as $u^{i}$, with first order differential operators $u^{i}{\partial}/{\partial x^i}$ corresponding to the \emph{directional derivative} of a function on $M$ along $\mathbf{u}$. This is sometimes written as $\mathbf{u} \cdot \nabla$.

The Lie bracket in this case is the commutator of first order differential operators:
\begin{align}
    [u,v]^i\frac{\partial f}{\partial x^i} :&= u^j\frac{\partial}{\partial x^j}\left(v^i\frac{\partial f}{\partial x^i}\right) - v^j\frac{\partial}{\partial x^j}\left(u^i\frac{\partial f}{\partial x^i}\right) = \left(u^j\frac{\partial v^i}{\partial x^j} - v^j\frac{\partial u^i}{\partial x^j}\right)\frac{\partial f}{\partial x^i}.
\end{align}
The components of Lie bracket of two vector fields $[u,v]^i$ transforms like a contravariant vector (i.e. like $u^i$) under a change of coordinates, so it is a vector field. The Jacobi identity can be also checked by direct verification. This shows that $\mathrm{Vect}(M)$ endowed with the commutator bracket is indeed a Lie algebra.

The above identities for an abstract Lie algebra originate from the commutator of a Lie group, or in looser language the commutator of infinitesimal continuous symmetry transformations.
The corresponding infinitesimal symmetry transformations for a given vector field $u^i$ is the \emph{flow} of $u^i$, i.e. transformations on the manifold $M$ constructed by following the flow lines of the vector field $u^i$ (up to some technical conditions).

Other (finite-dimensional) examples of Lie algebras are $\mathfrak{gl}(n)$, the Lie algebra of $n \times n$ real (or complex) matrices, or $\mathfrak{so}(n)$, the Lie algebra of $n \times n$ antisymmetric matrices. In both cases, the Lie bracket is the matrix commutator $[A,B] = AB-BA$. In particular, $\mathfrak{so}(3)$ is also isomorphic to the cross product algebra on $\mathbb{R}^3$.

\subsection{Non-canonical Hamiltonian mechanics}\label{subsec:hammec}
Many conservative physical systems can be formulated in terms of non-canonical Hamiltonian systems. We will provide a brief sketch of the key ideas, without delving too deeply into the mathematical details. In particular, we shall only concern ourselves with the formal aspects of the theory of functionals. For a more complete exposition of non-canonical Hamiltonian mechanics in the context of infinite-dimensional Lie groups from a mathematical point of view, see \cite{Khesin08,Marsden83,Marsden84b}. The review articles \cite{Salmon88, Morrison98, ZakharovKuznetsov97} cover non-canonical Hamiltonian mechanics from a physical point of view.

\subsubsection{Functional derivatives}\label{subsubsec:fundif}
Let $V$ be a Banach or Frech\'{e}t space, not necessarily finite dimensional. Let $V^{*}$ be its (smooth) dual. We will implicitly assume smoothness and decay properties at infinity as we require throughout. If $F: V \rightarrow \mathbb{R}$ is a functional on $V$, then its \emph{functional derivative} at $v\in V$ is an element ${\delta F}/{\delta v} \in V^{*}$, such that
\begin{align}
    \left\langle \frac{\delta F}{\delta v}[v] , w \right\rangle = \lim_{\epsilon \rightarrow 0} \frac{F[v + \epsilon w] - F[v]}{\epsilon},
\end{align}
where $\langle \cdot,\cdot\rangle$ is the dual pairing between $V$ and $V^{*}$. 

Note that if $G[\alpha]$ is a functional on the dual space $V^{*}$, where $\alpha\in V^{*}$, then ${\delta G}/{\delta \alpha}$ will be an element in the double dual $V^{**}$, which in general is not naturally isomorphic to $V$. Nonetheless, there is a natural embedding $V \hookrightarrow V^{**}$, and henceforth we shall assume $F$ is sufficiently regular, so that ${\delta G}/{\delta \alpha}$ lies in the image of the embedding of $V$ for all $\alpha$ in a smooth subspace of $V^{*}$ with sufficient decay properties.

For example, if $C^\infty(\mathbb{R}^n)$ is the space of smooth functions on $\mathbb{R}^n$, then its (smooth) dual space can be thought of as the space of smooth $n$-forms (or oriented volume elements) $\Omega^n(\mathbb{R}^n)$, with typical element $\alpha \mathrm{d}^n {x}$, and the pairing between $f \in C^\infty (\mathbb{R}^n)$ and $\alpha \mathrm{d}^n {x} \in \Omega^n(\mathbb{R}^n)$ is by \emph{integration}:
\begin{align}
    \left\langle \alpha \mathrm{d}^n {x}, f \right\rangle = \int \mathrm{d}^n {x} \ \alpha f .
\end{align}
If $F[f] = \int \mathrm{d}^n {x} \  \varphi \left( f, {\partial f}/{\partial x^i}\right)$, then integrating by parts and assuming boundary terms vanish gives
\begin{align}
     \frac{\delta F}{\delta f} = \left(\frac{\partial \varphi}{\partial f} - \frac{\partial}{\partial x^i} \frac{\partial \varphi}{\partial \left(\frac{\partial f}{\partial x^i}\right)}\right) \mathrm{d}^n {x} ,
\end{align}
which is the usual Euler--Lagrange variational derivative of functionals. The factor $\mathrm{d}^n {x}$ is usually suppressed -- this reflects a choice of a standard volume element $\mathrm{d}^n {x}$ in $\mathbb{R}^n$.

Similarly, if $F[\alpha \mathrm{d}^n {x}] = \int \mathrm{d}^n {x} \  \varphi \left( \alpha, {\partial \alpha}/{\partial x^i}\right)$, then integrating by parts and assuming boundary terms vanish gives
\begin{align}
     \frac{\delta F}{\delta \alpha} = \frac{\partial \varphi}{\partial \alpha} - \frac{\partial}{\partial x^i} \frac{\partial \varphi}{\partial \left(\frac{\partial \alpha}{\partial x^i}\right)} ,
\end{align}
this time without the volume factor $\mathrm{d}^n{x}$, since ${\delta F}/{\delta \alpha}$ lives in the space of functions, which is dual to the space of volume forms.

Although we will later suppress the volume factor $\mathrm{d}^n{x}$ to avoid notational clutter, it is useful to keep it in mind, since it can keep track of factors of the Jacobian determinant when we change coordinates.

\subsubsection{Poisson brackets}\label{subsubsec:poibra}
Consider the space of functionals $\mathcal{F}(V)$ on a vector space $V$. An (abstract) \emph{Poisson bracket} is a bilinear operation on functionals, $\{\cdot , \cdot \}: \mathcal{F}(V) \times \mathcal{F}(V) \rightarrow \mathcal{F}(V)$, such that for all $F,G,H \in \mathcal{F}(V)$:
\begin{align}
    \{F,G\} &= -\{G,F\}, &(\emph{antisymmetry}) \\
    \{F,GH\} &= \{F,G\}H + \{F,H\}G, &(\emph{Leibniz/product rule}) \\
    0 &= \{\{F,G\},H\} + \{\{G,H\},F\} + \{\{H,F\},G\}, &(\emph{Jacobi identity})
\end{align}
whenever the expressions are well-defined. The resemblance with Lie algebra is more than superficial -- there is a crucial Poisson bracket that can be constructed from a Lie algebra, called the \emph{Lie--Poisson bracket}, which we will make use of extensively.

In usual (finite-dimensional) Hamiltonian mechanics on $\mathbb{R}^{2n}$ with canonical coordinates $(\mathbf{q},\mathbf{p})$, we have the \emph{canonical Poisson bracket} $\{\cdot,\cdot\}$, defined by
\begin{align}
    \{F,G\} = \frac{\partial F}{\partial \mathbf{q}} \cdot \frac{\partial G}{\partial \mathbf{p}} - \frac{\partial G}{\partial \mathbf{q}} \cdot \frac{\partial F}{\partial \mathbf{p}}.
\end{align}
Functions $F(\mathbf{q},\mathbf{p})$ are evolved in time with a \emph{Hamiltonain function} $H = H(\mathbf{q},\mathbf{p})$, given in terms of the canonical Poisson bracket as
\begin{align}
    \dt{F} = \{F,H\}.
\end{align}
By using the chain rule $\dt{F} = ({\partial F}/{\partial \mathbf{q}}) \cdot \dt{\mathbf{q}} + ({\partial F}/{\partial \mathbf{p}}) \cdot \dt{\mathbf{p}}$, and letting $F$ be arbitrary, we obtain the trajectory of a point in phase space as
\begin{align}
    \dt{\mathbf{q}} &= \{\mathbf{q},H\} =  \frac{\partial H}{\partial \mathbf{p}}, \\
    \dt{\mathbf{p}} &= \{\mathbf{p},H\} =  -\frac{\partial H}{\partial \mathbf{q}}.
\end{align}
This argument using the chain rule to obtain the phase space trajectory can be generalised to infinite-dimensional systems in a straightforward manner, as we shall see below.
The Poisson bracket can be thought of as a structure that encodes the fact that the coordinates are canonically conjugate, while the Hamiltonian function contains the energetic, or dynamic, information of the system.

In light of this, we can define a \emph{non-canonical Hamiltonian system} on a space of sufficiently regular functionals $\mathcal{F}(V)$ as follows:

The time evolution of functionals $F[v]$ is given in terms of an (abstract) Poisson bracket $\{\cdot , \cdot \}$ and a chosen Hamiltonian functional $H[v]$ as follows:
\begin{align}
    \dt{F}[v] = \{F,H\}[v].
\end{align}
We can obtain the trajectory of a point $w \in V$ using the chain rule for arbitrary $F[w]$:
\begin{align}
\label{chainrule}
\left\langle \frac{\delta F}{\delta v}[w] , \dt{w} \right\rangle = \dt{F}[w] =  \{F,H\}[w] = \left\langle \frac{\delta F}{\delta v}[w] , J[w]\left(\frac{\delta H}{\delta v}[w]\right) \right\rangle.
\end{align}
The bracket is guaranteed to be bilinear on ${\delta F}/{\delta v}, {\delta H}/{\delta v}$, by the axioms of an abstract Poisson bracket, so we can always write $\{F,H\}$ in the form shown in the last expression in (\ref{chainrule}), using the linear operator $J[w]:V^{*}\rightarrow V$ called the \emph{Poisson tensor}. In contrast to Hamiltonian mechanics on $\mathbb{R}^{2n}$ in canonical coordinates, the Poisson tensor can vary with $w$ in general. Thus we have
\begin{align}
    \dt{w} = J[w]\left( \frac{\delta H}{\delta v}[w]\right).
\end{align}
The reason for calling the dynamics \emph{non-canonical} is that the Poisson bracket need not be expressible, even locally, as a canonical Poisson bracket. In fact, this formalism allows for functionals $C[v]$ that are not locally constant such that $\{F,C\}=0$ for arbitrary functionals $F$. Any such $C$ is called a \emph{Casimir functional}. The local theory of non-canonical Poisson brackets in finite dimensions has been extensively studied in \cite{Weinstein83}.

Casimir functionals can be thought of as kinematic invariants of the Poisson bracket, independent of the energy of the system. For example, the total mass of a compressible ideal fluid is one such invariant. When Casimir functionals (which are not locally constant) exist, the Poisson bracket is called \emph{degenerate}. It is the existence of such kinematically conserved quantities that distinguishes canonical and non-canonical Hamiltonian mechanics.

\subsection{Lie--Poisson dynamics on the dual of a Lie algebra}\label{subsec:liepoi}
The Jacobi identity for Poisson brackets on functionals is typically difficult and unenlightening to verify from direct computation, and as such they are difficult to construct {ab initio}. In the following we will construct the \emph{Lie--Poisson bracket} associated to a Lie algebra $\mathfrak{g}$, which is a non-canonical Poisson bracket for functionals on the dual $\mathfrak{g}^*$ of $\mathfrak{g}$. This provides a way to construct a wide class of Poisson brackets that automatically satisfy the Jacobi identity.

Given a Lie algebra $\mathfrak{g}$, consider the (smooth) dual $\mathfrak{g}^{*}$ of $\mathfrak{g}$, with duality pairing written as $\langle\cdot , \cdot \rangle$ as usual.

For example, if $\mathfrak{g} = \mathrm{Vect}(M)$ is the Lie algebra of vector fields $u^i$ on a manifold $M$, the dual space will consist of elements of the form $m_i \mathrm{d}^n {x}$, where the dual pairing is given by integration:
\begin{align}
    \left\langle m_i \mathrm{d}^n {x}, u^i \right\rangle = \int \mathrm{d}^n{x} \ m_i u^i .
\end{align}
The elements $m_i \mathrm{d}^n {x}$ can be thought of as the (canonical) momentum density of a fluid when $u^i$ is the vector field generating the flow of a fluid. The factor $\mathrm{d}^n {x}$ is usually suppressed from the notation.

In this set-up, the space of sufficiently regular functionals $\mathcal{F}(\mathfrak{g}^{*})$ on the dual $\mathfrak{g}^{*}$ of the Lie algebra $\mathfrak{g}$ is automatically equipped with a Poisson bracket, called the \emph{$(\pm)$-Lie--Poisson bracket}:
\begin{align}
\{F,G\}_{LP, \pm}[\mathbf{m}] := \pm \left\langle \mathbf{m}, \left[ \frac{\delta F}{\delta \mathbf{m}}[\mathbf{m}], \frac{\delta G}{\delta \mathbf{m}}[\mathbf{m}] \right] \right\rangle ,
\end{align}
where $[\cdot , \cdot ]$ is the Lie bracket on $\mathfrak{g}$. Note that in general ${\delta F}/{\delta \mathbf{m}}$ is an element in $\mathfrak{g}^{**}$, but we will only work with sufficiently regular $F$, such that ${\delta F}/{\delta \mathbf{m}}$ lies in the image of the natural embedding $\mathfrak{g}\hookrightarrow \mathfrak{g}^{**}$. When this condition is met, the Lie brackets of such functional derivatives are well-defined.

The fact that $[\cdot , \cdot ]$ satisfies the axioms of a Lie bracket implies that $\{\cdot,\cdot \}_{LP, \pm}$ satisfies the axioms of a Poisson bracket, which can be proved by direct computation, or by reduction from a canonical Poisson bracket. The details of each argument can be found in \cite{Marsden83, Marsden84b, Marsden13book}, but we will provide a brief sketch here:

Given a Lie group $G$, we can form the \emph{cotangent bundle} $T^{*}G$ of $G$. All cotangent bundles are naturally equipped with a \emph{canonical symplectic $2$-form}, analogous to the $2$-form $\omega = \mathrm{d}\mathbf{q} \wedge \mathrm{d}\mathbf{p}$ in usual classical mechanics. Since $G$ is a Lie group, it acts on itself canonically via left (or right) multiplication, and this action can be lifted to the cotangent bundle $T^{*}G$ of $G$. By left (or right) translating all the fibres of $T^{*}G$ to the identity element $e\in G$, and using the theory of \emph{momentum maps} for Poisson manifolds (sometimes called \emph{Poisson maps} in this context to distinguish between the case for symplectic manifolds), we obtain a Poisson bracket on $T^{*}_eG \simeq \mathfrak{g}^{*}$, which is precisely the Lie--Poisson bracket. Furthermore, if the Hamiltonian $H$ is left-invariant (or right-invariant), then the aforementioned translation process can be applied to $H$, so that it becomes a function on $\mathfrak{g}^{*}$ only.

When applied to $G=SO(3)$, this construction reproduces the fact that the dynamics of a free rigid body can be completely described by the three components of its angular momentum vector (together with  a set of initial conditions for the $3$ Euler angles).

\subsection{Semidirect product Lie algebras}\label{subsec:semdir}
Knowing that a Lie algebra $\mathfrak{g}$ automatically induces a Lie--Poisson bracket for functionals on its dual $\mathcal{F}(\mathfrak{g}^*)$, the next step is to find a strategy to enlarge the Lie algebra $\mathfrak{g}$, so that we can obtain new but related Lie--Poisson dynamical systems from old ones.

The two main strategies used in physical applications are forming \emph{semidirect products} and making \emph{central extensions}. We will focus on the first one, since we will later use the semidirect product construction to construct Lie--Poisson dynamical systems that describe some dilute suspensions in fluids with internal degrees of freedom. The Poisson bracket for ideal compressible hydrodynamics is also constructed this way.

Although we will not pursue central extensions in this exposition, we should mention that they are used in constructing the \emph{Virasoro algebra}, which is relevant to the Hamiltonian structure of the Korteweg--de Vries equation. For a thorough account, see \cite{Khesin08}.

Let $\mathfrak{g}$ be a Lie algebra, $V$ a vector space, and let $\mathrm{End}(V)$ denote the space of linear maps on $V$. A \emph{representation} of $\mathfrak{g}$ on $V$ is an assignment $\phi : \mathfrak{g} \rightarrow \mathrm{End}(V)$, such that, for all $X,Y \in \mathfrak{g}$ and all $v \in V$,
\begin{align}
    \phi(X+Y) \cdot v &= \phi(X) \cdot v + \phi(Y) \cdot v, &(\emph{linearity}) \\
    \phi([X,Y]) \cdot v &= \left[\phi(X) , \phi(Y) \right]\cdot v, &(\emph{homomorphism property})
\end{align}
where $\left[\phi(X) , \phi(Y) \right] = \phi(X) \phi(Y) - \phi(Y) \phi(X)$ is the operator commutator of linear maps on $V$.
The main example is the action of vector fields on smooth functions, given by 
\begin{align}
    \phi(\mathbf{u}) \cdot f = u^i \frac{\partial f}{\partial x^i}, \quad \text{for $\mathbf{u}\in\mathrm{Vect}(M)$ and $f\in C^\infty(M)$} .
\end{align}
A direct computation shows that $\phi$ is indeed a representation. The symbol $\phi$ is sometimes suppressed, when the context is clear. So $\mathbf{u} \cdot f$ will mean the vector field $\mathbf{u}$ acting on the function $f$ as a differential operator, which is more commonly written as $\mathbf{u} \cdot \nabla f$. We will avoid the latter notation because we will need vector fields to act on other objects in less obvious ways later.

Given a Lie algebra $\mathfrak{g}$ and a representation $\phi$ of $\mathfrak{g}$ on a vector space $V$, we can form the \emph{semidirect product Lie algebra} $\mathfrak{g}_s = \mathfrak{g} \ltimes V$. As a vector space, $\mathfrak{g} \ltimes V$ can be considered as the direct sum $\mathfrak{g} \oplus V$, so we can write a general element of $\mathfrak{g} \ltimes V$ as $(X,v)$ for $X \in \mathfrak{g}, v \in V$.

The Lie bracket on $\mathfrak{g} \ltimes V$ is defined by 
\begin{align}
    [(X,v),(Y,w)] := ([X,Y], \phi(X) \cdot w - \phi(Y) \cdot v ), \quad \text{for $(X,v),(Y,w) \in \mathfrak{g} \ltimes V$.}
\end{align}
It can be checked that this is indeed a Lie bracket -- it is manifestly bilinear and antisymmetric, and the Jacobi identity can be deduced from the homomorphism property of $\phi$. 

For example, in the semidirect product Lie algebra $\mathrm{Vect}(M) \ltimes C^\infty(M)$, the Lie bracket is given by
\begin{align}
    [(u^i,f),(v^i,g)] = ([u,v]^i, u^i\frac{\partial g}{\partial x^i} - v^i\frac{\partial f}{\partial x^i}),
\end{align}
for $\mathbf{u},\mathbf{v} \in \mathrm{Vect}(M)$ and $f,g \in C^\infty(M)$.

If we consider a Lie algebra $\mathfrak{g}$ and multiple representations $\phi_a: \mathfrak{g} \rightarrow \mathrm{End}(V_a)$ of $\mathfrak{g}$ on $V_a$ ($a = 1, \ldots N$), then the semidirect product Lie algebra with all such representations $\mathfrak{g} \ltimes (V_1 \oplus \ldots V_N)$ can be thought to be $\mathfrak{g} \oplus V_1 \oplus \ldots V_N$ as a vector space, with the following Lie bracket:
\begin{align}
    [(X,v_1,\ldots,v_N),(Y,w_1,\ldots,w_N)] =([X,Y], \phi_1(X) \cdot w_1 - \phi_1(Y) \cdot v_1, \ldots, \phi_N(X) \cdot w_N - \phi_N(Y) \cdot v_N) ,
\end{align}
for $X,Y\in \mathfrak{g}$ and $v_a,w_a \in V_a$.

When $\mathfrak{g} = \mathrm{Vect}(M)$, this physically corresponds to attaching more \enquote{advected degrees of freedom} to the fluid system when we consider the Lie--Poisson dynamics on the dual of the semidirect product Lie algebra. This has been used to construct, for example, the ideal compressible magnetohydrodynamics (MHD) equations, in which the magnetic field is \enquote{frozen} into the fluid \cite{Morrison80}.

\subsection{The Lie--Poisson bracket on the dual of a semidirect product Lie algebra}\label{subsec:liepoisem}
Let $\mathfrak{g}_s = \mathfrak{g} \ltimes (V_1 \oplus \ldots \oplus V_N)$ be a semidirect product Lie algebra, with notation as above. The dual space $\mathfrak{g}_s^*$ of $\mathfrak{g}_s$ is isomorphic to $\mathfrak{g}_s^* = \mathfrak{g}^* \oplus V_1^* \oplus \ldots \oplus V_N^*$ as a vector space. The Lie--Poisson bracket for functionals $\mathcal{F}(\mathfrak{g}_s^*)$ can be written as, for $\mathbf{m} \in \mathfrak{g}^*$ and $\mu_a \in V_a^*$ for $a = 1,\ldots,N$:
\begin{align}
\{F,G\}_{LP,\pm}[\mathbf{m},\mu_1,\ldots\,\mu_N] 
=& \pm \left\langle (\mathbf{m},\mu_1,\ldots\,\mu_N), \left[ \left(\frac{\delta F}{\delta \mathbf{m}},\frac{\delta F}{\delta \mu_1},\ldots,\frac{\delta F}{\delta \mu_N}\right), \left(\frac{\delta G}{\delta \mathbf{m}},\frac{\delta G}{\delta \mu_1},\ldots,\frac{\delta G}{\delta \mu_N}\right)\right] \right\rangle, \nonumber \\
=& \pm \left( \left\langle \mathbf{m},\left[\frac{\delta F}{\delta \mathbf{m}},\frac{\delta G}{\delta \mathbf{m}}\right] \right\rangle + \sum_{a = 1}^{N} \left\langle \mu_a, \phi_a\left(\frac{\delta F}{\delta \mathbf{m}}\right) \cdot \frac{\delta G}{\delta \mu_a} - \phi_a\left(\frac{\delta G}{\delta \mathbf{m}}\right) \cdot \frac{\delta F}{\delta \mu_a} \right\rangle \right).
\end{align}
We immediately notice several properties of this bracket:
\begin{enumerate}
\item When we add a representation $V_a$ of $\mathfrak{g}$ to the semidirect product, the resulting Lie--Poisson bracket on functionals $\mathcal{F}(\mathfrak{g}_s^*)$ acquires a new piece
\begin{align}
\pm \left\langle \mu_a, \phi_a\left(\frac{\delta F}{\delta \mathbf{m}}\right) \cdot \frac{\delta G}{\delta \mu_a} - \phi_a\left(\frac{\delta G}{\delta \mathbf{m}}\right) \cdot \frac{\delta F}{\delta \mu_a} \right\rangle = \{F,G\}_{\mu_a}.
\end{align}
This piece is called the \emph{sub-bracket due to $V_a$} (or by abuse of notation $\mu_a$). While it is not a Poisson bracket in itself, such pieces can be added to Lie--Poisson brackets constructed from semidirect products of $\mathfrak{g}$ without violating the Jacobi identity, corresponding to adding more representations to the semidirect product.
\item Suppose one has a Hamiltonian functional $H=H[\mathbf{m},\mu_1,\ldots,\mu_N]$. The noncanonical Lie--Poisson dynamics on the space of functionals $\mathcal{F}(\mathfrak{g}_s^*)$ given by this Hamiltonian is $\dt{F} = \{F,H\}$. In particular:
\begin{align}
&\left\langle \frac{\delta F}{\delta \mathbf{m}}, \dt{\mathbf{m}} \right\rangle + \sum_{a=1}^{N} \left\langle \frac{\delta F}{\delta \mu_a}, \dt{\mu}_a \right\rangle \nonumber \\
=& \pm \left( \left\langle \mathbf{m},\left[\frac{\delta F}{\delta \mathbf{m}},\frac{\delta H}{\delta \mathbf{m}}\right] \right\rangle + \sum_{a = 1}^{N} \left\langle \mu_a, \phi_a\left(\frac{\delta F}{\delta \mathbf{m}}\right) \cdot \frac{\delta H}{\delta \mu_a} \right\rangle - \sum_{a = 1}^{N} \left\langle \mu_a, \phi_a\left(\frac{\delta H}{\delta \mathbf{m}}\right) \cdot \frac{\delta F}{\delta \mu_a} \right\rangle\right).
\end{align}
Since $F$ is arbitrary, we can take $F=F[\mu_b]$ only for some $b = 1,\ldots N$. This gives
\begin{align}
\left\langle \frac{\delta F}{\delta \mu_b}, \dt{\mu_b} \right\rangle = \mp \left\langle \mu_b, \phi_b\left(\frac{\delta H}{\delta \mathbf{m}}\right) \cdot \frac{\delta F}{\delta \mu_b} \right\rangle = \mp \left\langle \phi_b\left(\frac{\delta H}{\delta \mathbf{m}}\right)^\dagger\cdot\mu_b,  \frac{\delta F}{\delta \mu_b} \right\rangle,
\end{align}
where $\phi_b\left({\delta H}/{\delta \mathbf{m}}\right)^\dagger : V_b^* \rightarrow V_b^*$ is the (formal) adjoint to $\phi_b\left({\delta H}/{\delta \mathbf{m}}\right): V_b \rightarrow V_b$. For the applications relevant to fluid dynamics, \enquote{taking the adjoint} usually means integrating by parts, assuming boundary terms vanish. This gives the dynamics as
\begin{align}\label{advect}
\dt{\mu_b} = \mp \phi_b\left(\frac{\delta H}{\delta \mathbf{m}}\right)^\dagger\cdot\mu_b.
\end{align}
In usual fluid dynamics, ${\delta H}/{\delta \mathbf{m}} = \mathbf{u}$ will be the fluid velocity. So we can think of (\ref{advect}) as corresponding to $\mu_b$ being \enquote{advected} by ${\delta H}/{\delta \mathbf{m}}$, as Lagrangian markers attached to fluid parcels.
\end{enumerate}

As an example, consider the Lie--Poisson dynamics for functionals on $(\mathrm{Vect}(\mathbb{R}^n) \ltimes C^\infty(\mathbb{R}^n))^*$. A typical element of $(\mathrm{Vect}(\mathbb{R}^n) \ltimes C^\infty(\mathbb{R}^n))^*$ can be written as a pair $(m_i,\rho)$ (the volume element $\mathrm{d}^n{x}$ has been suppressed), considered as the fluid momentum density and fluid mass density, respectively.

The minus Lie--Poisson bracket is, for functionals $F,G \in \mathcal{F}((\mathrm{Vect}(\mathbb{R}^n) \ltimes C^\infty(\mathbb{R}^n))^*)$:
\begin{align}
   & \{F,G\}[\mathbf{m},\rho] = - \left\langle \mathbf{m},\left[\frac{\delta F}{\delta \mathbf{m}},\frac{\delta G}{\delta \mathbf{m}}\right] \right\rangle - \left\langle \rho, \frac{\delta F}{\delta \mathbf{m}} \cdot \frac{\delta G}{\delta \rho} - \frac{\delta G}{\delta \mathbf{m}} \cdot \frac{\delta F}{\delta \rho} \right\rangle, \nonumber \\
    =& - \int \mathrm{d}^n{x} \ m_i \left[\frac{\delta F}{\delta m_j}\frac{\partial}{\partial x^j}\left(\frac{\delta G}{\delta m_i}\right) - \frac{\delta G}{\delta m_j}\frac{\partial}{\partial x^j}\left(\frac{\delta F}{\delta m_i}\right) \right] - \int \mathrm{d}^n{x} \ \rho \left[\frac{\delta F}{\delta m_j}\frac{\partial}{\partial x^j}\left(\frac{\delta G}{\delta \rho}\right) - \frac{\delta G}{\delta m_j}\frac{\partial}{\partial x^j}\left(\frac{\delta F}{\delta \rho}\right) \right] .
\end{align}
If we choose the Hamiltonian functional
\begin{align}\label{Hfluids}
    H = H_{fluids}[\mathbf{m},\rho] = \int \mathrm{d}^n{x} \ \left(\frac{m_i m_j \delta^{ij}}{2\rho} + \rho U(\rho)\right)
\end{align}
and write $u^i = {\delta H}/{\delta m_i} = {m_j}\delta^{ij}/{\rho}$, then after some computation, one recovers the usual equations for ideal compressible hydrodynamics. (See, for example, \cite{Marsden84b, Morrison80}.)

\section{The bead-spring pair equations}\label{sec:2bead}
This section covers the formulation of the Lie--Poisson bracket for the distribution function of bead-spring pairs. It turns out that the configuration space of bead-spring pairs is most naturally described as a tangent bundle, and the Lie--Poisson bracket obtained coincides with the one in \cite{Grmela88}. The double bracket formulation is also considered in order to incorporate dissipative and diffusive effects, and it turns out that the kinetic equation for bead-spring pairs in \cite{Bird87, Renardy00} can be recovered this way.
\subsection{Tangent bundles}\label{subsec:defineTM}
The configuration space of small bead-spring pairs is most naturally described as a tangent bundle. We will briefly recall the notion of a tangent bundle and explain why this is the case.

Consider an $n$-dimensional manifold $M$ (it is sufficient to take $M=\mathbb{R}^n$ for the applications later), and the family of \emph{smooth paths} on $M$, i.e. smooth maps $\alpha : I \rightarrow M$, where $I$ is a (fixed) closed interval in $\mathbb{R}$ containing $0$ in its interior.

The space of all smooth paths on $M$ is generally infinite-dimensional, which makes the space of functions on the space of paths difficult to deal with. Thus we look for finite-dimensional approximations to such a space.

Consider the equivalence class of paths under the equivalence relation $\stackrel{(1)}{\sim}$, given by
\begin{align}
    \alpha \stackrel{(1)}{\sim} \beta \ \text{if and only if} \ \alpha (0) = \beta (0), \ \text{and} \ \alpha{'}(0) = \beta{'}(0) .
\end{align}
If $\alpha^{i}$ and $\beta^{i}$ are the coordinates for the paths $\alpha, \beta$ respectively, then the equivalence relation is given by $2n$ equations:
\begin{align}
    \alpha^{i}(0) &= \beta^{i}(0), \\
    \frac{\mathrm{d}\alpha^i }{\mathrm{d}t}(0) &= \frac{\mathrm{d}\beta^i}{\mathrm{d}t}(0).
\end{align}
One can check that this equivalence relation is coordinate independent. We will denote each equivalence class by $(x^i,y^i)$, where $x^i$ is the coordinates of $\alpha(0)$ and $y^i$ the coordinates of $\mathrm{d}\alpha/\mathrm{d}t(0)$, for $\alpha(t)$ a representing element of the equivalence class. Moreover, by constructing a path $\alpha^i(t) = x^i + y^it + O(t^2)$ using coordinates, we can show that each of these equivalence classes is nonempty. The space of all such equivalence classes is called the \emph{tangent bundle} $TM$ of $M$, and is a $2n$-dimensional manifold.

A general point on the tangent bundle $TM$ can be thought of as a point $x^i$ on $M$, together with a tangent vector $y^i$ attached to the point $x^i \in M$. Informally, this is like a two-point approximation to a small segment of a curve in $M$.
This physical picture of a tangent bundle suggests that $TM$ is indeed the correct configuration space of small bead-spring pairs living on $M$.

\subsection{Complete/Tangent lift of vector fields on $M$ to $TM$}\label{subsec:liftTM}
Since $TM$ is constructed naturally from $M$, we expect vector fields $\mathbf{u} \in \mathrm{Vect}(M)$ on $M$ to act on geometrical objects living on $TM$ (functions, tensors, etc.) in a nice way. This will in fact allow us to construct a semidirect product $\mathrm{Vect}(M) \ltimes C^\infty(TM)$ of vector fields on $M$ acting on functions on $TM$.

The dual space $(C^\infty(TM))^*$ can be thought of as $\Omega^{2n}(TM)$, the space of volume forms on $TM$ with typical element $\psi(x,y) \mathrm{d}^n{x}\mathrm{d}^n{y}$, with integration being the dual pairing. It has a natural interpretation as a \emph{distribution function} of bead-spring pairs in configuration space. As we shall see later, the semidirect product formulation will describe the dynamics of the bead-spring pairs as Lagrangian markers embedded in the fluid.

The vector field $\mathbf{u}\in\mathrm{Vect}(M)$ induces a vector field $\mathbf{u}^{\#}\in\mathrm{Vect}(TM)$, called the \emph{complete lift} or \emph{tangent lift} \cite{Yano73book}, which can be described as follows. Consider how the flow of $u^i$ affects a path $\alpha^i(t)$ on $M$. For some small flow parameter $s$, the path $\alpha^i(t)$ will be deformed by the flow, as in figure \ref{fig:completelift}):
\begin{align}
\widetilde{\alpha}^i(t,s) = \alpha^i(t) + su^i(\alpha(t)) + O(s^2)
\end{align}

\begin{figure}
\centering
\begingroup%
  \makeatletter%
  \providecommand\color[2][]{%
    \errmessage{(Inkscape) Color is used for the text in Inkscape, but the package 'color.sty' is not loaded}%
    \renewcommand\color[2][]{}%
  }%
  \providecommand\transparent[1]{%
    \errmessage{(Inkscape) Transparency is used (non-zero) for the text in Inkscape, but the package 'transparent.sty' is not loaded}%
    \renewcommand\transparent[1]{}%
  }%
  \providecommand\rotatebox[2]{#2}%
  \newcommand*\fsize{\dimexpr\f@size pt\relax}%
  \newcommand*\lineheight[1]{\fontsize{\fsize}{#1\fsize}\selectfont}%
  \ifx\svgwidth\undefined%
    \setlength{\unitlength}{300.49190324bp}%
    \ifx\svgscale\undefined%
      \relax%
    \else%
      \setlength{\unitlength}{\unitlength * \real{\svgscale}}%
    \fi%
  \else%
    \setlength{\unitlength}{\svgwidth}%
  \fi%
  \global\let\svgwidth\undefined%
  \global\let\svgscale\undefined%
  \makeatother%
  \begin{picture}(1,0.4)%
    \lineheight{1}%
    \setlength\tabcolsep{0pt}%
    \put(0,0){\includegraphics[width=\unitlength]{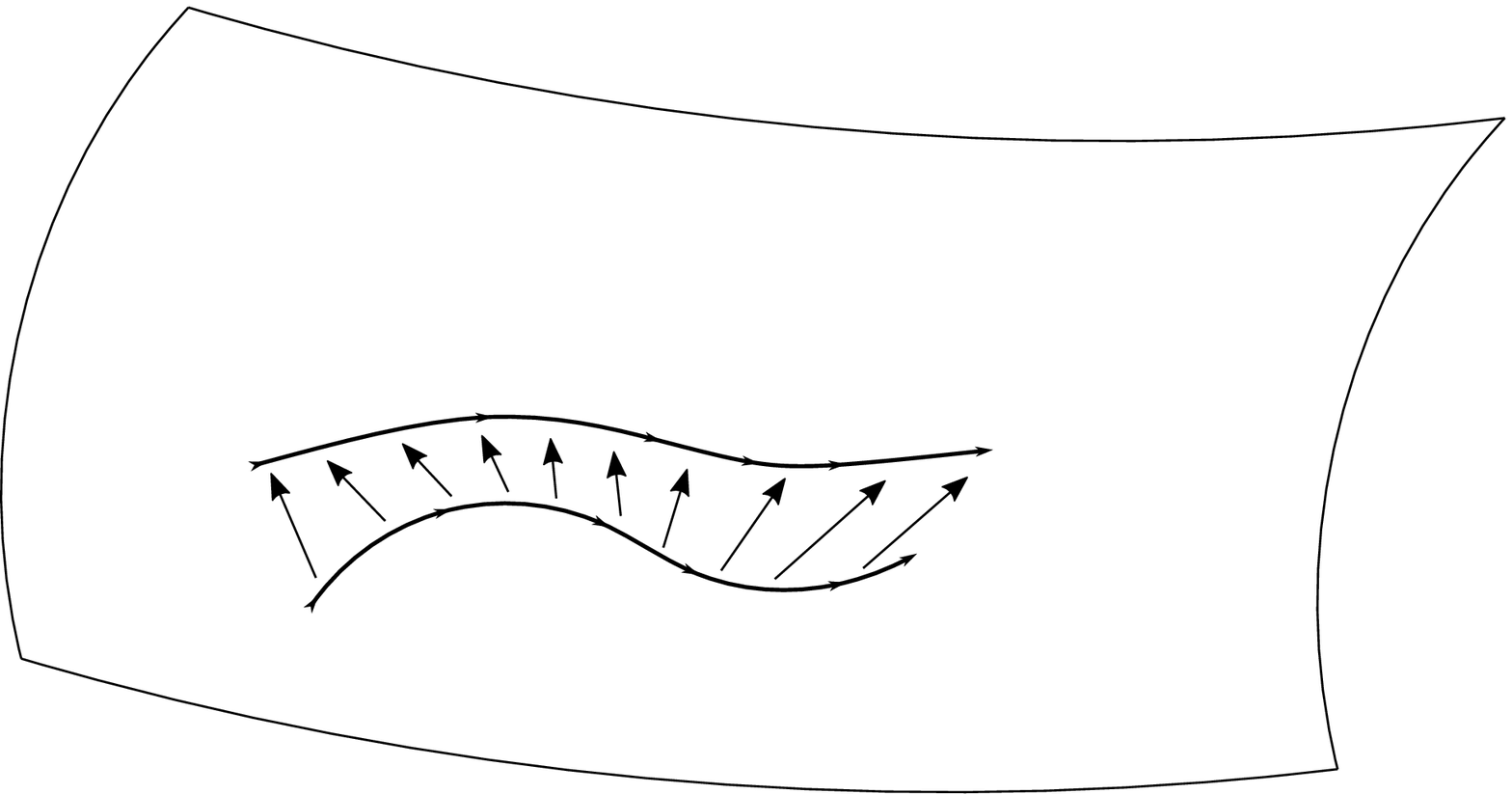}}%
    \put(0.81677297,0.46368427){\color[rgb]{0,0,0}\makebox(0,0)[lt]{\lineheight{1.25}\smash{\begin{tabular}[t]{l}\Huge$M$\end{tabular}}}}%
    \put(0.24943377,0.27332322){\color[rgb]{0,0,0}\makebox(0,0)[lt]{\lineheight{1.25}\smash{\begin{tabular}[t]{l}\large$\widetilde{\alpha}^i(t,s) = \alpha^i(t) + su^i(\alpha(t)) + O(s^2)$\end{tabular}}}}%
    \put(0.65145905,0.17718139){\color[rgb]{0,0,0}\makebox(0,0)[lt]{\lineheight{1.25}\smash{\begin{tabular}[t]{l}\large$\mathbf{u}\in\mathrm{Vect}(M)$\end{tabular}}}}%
    \put(0.25343377,0.11552557){\color[rgb]{0,0,0}\makebox(0,0)[lt]{\lineheight{1.25}\smash{\begin{tabular}[t]{l}\large$\alpha^{i}(t)$\end{tabular}}}}%
  \end{picture}%
\endgroup%

\caption{The effect of the flow of $\mathbf{u}\in\mathrm{Vect}(M)$ on the path $\alpha^i(t)$.}\label{fig:completelift}
\end{figure}

Now consider how the $\stackrel{(1)}{\sim}$-equivalence class of $\widetilde{\alpha}^i(t,s)$, i.e. its value at $t=0$ and its first $t$-derivative at $t=0$, changes with $s$. Let
\begin{align}
    \widetilde{x}^i(s) &= \widetilde{\alpha}^i(0,s),& x^i &= \widetilde{x}^i(0), \\
    \widetilde{y}^i(s) &= \frac{\partial\widetilde{\alpha}^i}{\partial t}(0,s),& y^i &= \widetilde{y}^i(0).
\end{align}
Then
\begin{align}
    \widetilde{x}^i(s) &=& &\alpha^i(0) + su^i(\alpha(0)) + O(s^2) &=& &&x^i + su^i(x) + O(s^2), \\
    \widetilde{y}^i(s) &=& &\frac{\mathrm{d}\alpha^i}{\mathrm{d} t}(0) + s \frac{\partial u^i}{\partial x^j}(\alpha(0))\frac{\mathrm{d}\alpha^j}{\mathrm{d} t}(0) + O(s^2) &=& &&y^i + s \frac{\partial u^i}{\partial x^j}(\alpha(0)) y^j + O(s^2).
\end{align}
By taking the order $s$ terms in the above equations, we have obtained the complete lift of a vector field $\mathbf{u}$ on $M$ to a vector field $\mathbf{u}^{\#}$ on the tangent bundle $TM$. The components of $\mathbf{u}^{\#}$ with respect to the coordinate system $(x^i,y^i)$ are
\begin{align}
 \left(u^i(x),\frac{\partial u^i}{\partial x^j}(x) y^j\right). 
\end{align}
Now the associated differential operator for $\mathbf{u}^{\#}\in\mathrm{Vect}(TM)$ that acts on smooth functions on $TM$ will be
\begin{equation}
u^i(x)\frac{\partial}{\partial x^i} + \frac{\partial u^i}{\partial x^j}(x)y^j \frac{\partial}{\partial y^i}.
\end{equation}
The crucial property of complete lifts is that the assignment $\mathbf{u}\mapsto\mathbf{u}^{\#}$ is a \emph{Lie algebra homomorphism} $\mathrm{Vect}(M)\rightarrow\mathrm{Vect}(TM)$ with respect to the commutator bracket of vector fields on $M$ and $TM$ respectively. Symbolically this is
\begin{align}\label{homo1}
    \left([\mathbf{u},\mathbf{v}]_M\right)^{\#} = \left[\mathbf{u}^{\#},\mathbf{v}^{\#}\right]_{TM},
\end{align}
which can be checked by direct computation or otherwise. (The subscripts indicate the space on which the vector fields live.) A proof sketch can be found in appendix \textbf{\ref{app:diff}}.

This immediately implies that vector fields $\mathbf{u}\in\mathrm{Vect}(M)$ on $M$ can act on functions $f\in C^\infty(TM)$ on $TM$ via $f\mapsto \mathbf{u}^{\#}\cdot f$, and moreover this action is a Lie algebra representation:
\begin{align}
\left[\mathbf{u}^{\#},\mathbf{v}^{\#}\right]_{TM} \cdot f = \mathbf{u}^{\#}\cdot\mathbf{v}^{\#}\cdot f - \mathbf{v}^{\#}\cdot\mathbf{u}^{\#}\cdot f = \left([\mathbf{u},\mathbf{v}]_M\right)^{\#} \cdot f.
\end{align}
Thus we can add this representation of $\mathrm{Vect}(M)$ on $C^\infty(TM)$ to the semidirect product $\mathrm{Vect}(M)\ltimes C^\infty(M)$ relevant to ideal compressible fluid dynamics to obtain the Lie algebra
\begin{align}
    \mathfrak{g}_s = \mathrm{Vect}(M)\ltimes \left(C^\infty(M) \oplus C^\infty(TM) \right),
\end{align}
which will be relevant to describing suspensions of bead-spring pairs in compressible fluids.

\subsection{The $\psi$-subbracket for the distribution function of bead-spring pairs}\label{psisub1}
Now we take the dual $\mathfrak{g}_s^{*}$ of $\mathfrak{g}_s = \mathrm{Vect}(M)\ltimes \left(C^\infty(M) \oplus C^\infty(TM) \right)$ and consider the Lie--Poisson dynamics for functionals on $\mathfrak{g}_s^{*}$. As a vector space, $\mathfrak{g}_s^{*}$ consists of elements of the form $\left({m}_i, \rho, \psi\right)$, where ${m}_i\mathrm{d}^n{x}$ is the momentum density of the fluid, $\rho \mathrm{d}^n{x}$ is the mass density of the fluid, and $\psi(x,y)\mathrm{d}^n{x}\mathrm{d}^n{y}$ can be interpreted as the number density of bead-spring pairs with centre located at $x^i$ and a relative displacement vector $y^i$ between beads. The dual pairing between $C^\infty(TM)$ and $\Omega^{2n}(TM)$ is given by integration over $TM$ i.e. over all $x, y$.

The minus Lie--Poisson bracket for functionals $F[\mathbf{m},\rho,\psi], G[\mathbf{m},\rho,\psi]$ will be
\begin{align} \label{psibracket0}
    \{F,G\}[\mathbf{m},\rho,\psi] =& -\left\langle \mathbf{m},\left[\frac{\delta F}{\delta \mathbf{m}},\frac{\delta G}{\delta \mathbf{m}}\right] \right\rangle - \left\langle \rho, \frac{\delta F}{\delta \mathbf{m}} \cdot \frac{\delta G}{\delta \rho} - \frac{\delta G}{\delta \mathbf{m}} \cdot \frac{\delta F}{\delta \rho} \right\rangle  - \left\langle \psi, \left(\frac{\delta F}{\delta \mathbf{m}}\right)^{\#} \cdot \frac{\delta G}{\delta \psi} - \left(\frac{\delta G}{\delta \mathbf{m}}\right)^{\#} \cdot \frac{\delta F}{\delta \psi} \right\rangle, \nonumber \\
    =& \{F,G\}_{fluids} + \{F,G\}_{\psi},
\end{align}
where we have separated the usual fluid bracket and the $\psi$-subbracket as follows:
\begin{align} 
\{F,G\}_{fluids} &= -\left\langle \mathbf{m},\left[\frac{\delta F}{\delta \mathbf{m}},\frac{\delta G}{\delta \mathbf{m}}\right] \right\rangle - \left\langle \rho, \frac{\delta F}{\delta \mathbf{m}} \cdot \frac{\delta G}{\delta \rho} - \frac{\delta G}{\delta \mathbf{m}} \cdot \frac{\delta F}{\delta \rho} \right\rangle, \\
\label{psibracket1}
\{F,G\}_{\psi} &= - \left\langle \psi, \left(\frac{\delta F}{\delta \mathbf{m}}\right)^{\#} \cdot \frac{\delta G}{\delta \psi} - \left(\frac{\delta G}{\delta \mathbf{m}}\right)^{\#} \cdot \frac{\delta F}{\delta \psi} \right\rangle .
\end{align}
Now focus on $M = \mathbb{R}^n$ and consider Hamiltonians of the form
\begin{align}
    H = H_{fluids}[\mathbf{m},\rho] + H_s[\psi],
\end{align}
where $H_{fluids}$ is the usual ideal compressible fluid Hamiltonian (\ref{Hfluids}), and $H_s[\psi]$ is some internal (free) energy of the bead-spring pairs, which is unspecified for now.

The coordinate expression of the subbracket (\ref{psibracket1}) conincides with the $\psi$-subbracket in \cite{Grmela88}, obtained from direct inspection. In fact, letting $u^i = {\delta H}/{\delta m_i}$, the terms in the subbracket $\{F,H\}_{\psi}$ are explicitly
\begin{align}
    \{F,H\}_{\psi} =& - \int\mathrm{d}^n{x}\mathrm{d}^n{y} \ \psi \left[ \frac{\delta F}{\delta m_i}\frac{\partial}{\partial x^i}\left(\frac{\delta H_s}{\delta \psi}\right) + \frac{\partial}{\partial x^j}\left(\frac{\delta F}{\delta m_i}\right)y^j \frac{\partial}{\partial y^i}\left(\frac{\delta H_s}{\delta \psi}\right) \right] \nonumber \\
    & \qquad - \int\mathrm{d}^n{x}\mathrm{d}^n{y} \ \psi \left[ - u^i\frac{\partial}{\partial x^i}\left(\frac{\delta F}{\delta \psi}\right) - \frac{\partial u^i}{\partial x^j}y^j \frac{\partial}{\partial y^i}\left(\frac{\delta F}{\delta \psi}\right) \right], \nonumber \\
    =& \int\mathrm{d}^n{x}\mathrm{d}^n{y} \ \frac{\delta F}{\delta m_i} \left[ -\psi \frac{\partial}{\partial x^i}\left(\frac{\delta H_s}{\delta \psi}\right) + \frac{\partial}{\partial x^j}\left( \psi y^j \frac{\partial}{\partial y^i} \left(\frac{\delta H_s}{\delta \psi}\right) \right)\right] \nonumber \\
    & \qquad + \int\mathrm{d}^n{x}\mathrm{d}^n{y} \  \frac{\delta F}{\delta \psi}\left[ -\frac{\partial}{\partial x^i}\left(u^i \psi \right) - \frac{\partial}{\partial y^i}\left(\frac{\partial u^i}{\partial x^j}y^j \psi\right) \right], \nonumber \\
    =& \left\langle \frac{\delta F}{\delta m_i} , \int \mathrm{d}^n y \ \left[ -\psi \frac{\partial}{\partial x^i}\left(\frac{\delta H_s}{\delta \psi}\right) + \frac{\partial}{\partial x^j}\left( \psi y^j \frac{\partial}{\partial y^i} \left(\frac{\delta H_s}{\delta \psi}\right) \right) \right] \right\rangle \nonumber \\
    & \qquad + \left\langle \frac{\delta F}{\delta \psi}, -\frac{\partial}{\partial x^i}\left(u^i \psi \right) - \frac{\partial}{\partial y^i}\left(\frac{\partial u^i}{\partial x^j}y^j \psi\right) \right\rangle ,
\end{align}
from which we obtain
\begin{align}
    \label{psieom1}
    \dt{\psi} &= -\frac{\partial}{\partial x^i}\left(u^i \psi \right) - \frac{\partial}{\partial y^i}\left(\frac{\partial u^i}{\partial x^j}y^j \psi\right), \\
    \label{force1}
    \mathcal{F}_i &= \int \mathrm{d}^n y \ \left[ -\psi \frac{\partial}{\partial x^i}\left(\frac{\delta H_s}{\delta \psi}\right) + \frac{\partial}{\partial x^j}\left( \psi y^j \frac{\partial}{\partial y^i} \left(\frac{\delta H_s}{\delta \psi}\right) \right) \right],
\end{align}
where $\mathcal{F}_i$ is the extra force on the fluid due to the suspension.

The Lie--Poisson bracket gives the terms in the \enquote{Liouville equation} for the distribution function of bead-spring pairs due to beads advecting with the flow like Lagrangian markers (see \cite{Bird87,Renardy00}). Note that:
\begin{enumerate}
\item If there are no bead-bead interactions and there is no diffusion, the system is Hamiltonian, because the motion of Lagrangian marker particles in a fluid is time-reversible. It happens that a sufficiently small spherical bead in a Stokes flow is well-approximated by a Lagrangian marker. This is the reason behind the fact that the \enquote{advective} part of the Liouville equation for the distribution function can be captured by the semidiriect product Lie--Poisson formulation.

However, while the motion of a Lagrangian marker in Stokes flow is time-reversible, the motion of an elastic body, e.g. a Hookean spring, in a Stokes flow is \emph{not} time reversible, We will need a piece of \emph{dissipation bracket} to describe the effect of the internal elastic forces within the body.
\item The semidirect product Lie--Poisson system implements the following physical principles in a convenient manner:
\begin{itemize}
\item Energy is conserved: $\dt{H} = \{H,H\} = 0$.
\item The bead-spring pairs evolve in time as Lagrangian marker particles, in the sense made precise by the construction of $\mathbf{u}^{\#}$ in section \textbf{\ref{subsec:liftTM}}.
\end{itemize}
In principle, from these two facts, one can deduce the force of the bead-spring pairs on the fluid by invoking energy conservation and Newton's third law, since we know how the fluid acts on the suspension. If done from the equations of motion directly, this often involves a lot of uninformative manipulations. However, the semidirect product Lie--Poisson formulation provides an expedient way for the force $\mathcal{F}_i$ on the fluid by the suspension to be calculated from the energy $H_s[\psi]$, via direct manipulation of the terms in the subbracket due to $\psi$. This has been emphasised in \cite{Grmela89} and applied in \cite{Beris94book} as a uniform way to derive expressions for stress tensors for different semidirect product Lie--Poisson structures.
\end{enumerate}
For our purposes, since $\mathcal{F}_i$ is a $y$-integral against $\psi$, we will find that for reasonable choices of $H_s[\psi]$, the force can be written as the divergence of a stress tensor, and that the stress tensor depends on \emph{$y$-moments of $\psi$} i.e. combinations $\int \mathrm{d}^ny \  p(y)\psi$ for polynomials $p(y)$.

\subsection{The dissipation bracket and double bracket dynamics}\label{subsec:disbra}
To extend the Lie--Poisson dynamical system to dissipative systems, it is customary to include an additional symmetric positive semidefinite \emph{dissipation bracket} $(\cdot,\cdot)$, such that
\begin{align}
    \dt{F} = \{F,H\} - \frac{1}{\zeta}(F,H),
\end{align}
where $1/\zeta > 0$ is the \emph{mobility} parameter. For our purposes, the dissipation bracket can be thought of as an (approximate) implementation of the mobility relations in a Stokes flow, which relate the velocity of a bead relative to the fluid around it to any additional forces on the bead.

In this formulation $H$ must be interpreted as a \emph{free energy}, since
\begin{align}
    \dt{H} = - \frac{1}{\zeta}(H,H) \leq 0 .
\end{align}
There is no obvious geometric justification for the dissipation bracket, in contrast to the Poisson bracket. Attempts to justify dissipation brackets have mainly been motivated by thermodynamic principles, such that the double bracket formulation is a model for non-equilibrium thermodynamics. Discussion on double bracket dynamics can be found in, for example, \cite{Morrison86, Beris94book}. A recent review in the context of viscoelastic fluids can also be found in \cite{Mackay19}.

It is common to require the bracket to be \emph{bilinear} and satisfy the \emph{Leibniz/product rule}:
\begin{align}
    (FG,H) = F(G,H) + (F,H)G ,
\end{align}
whenever the terms are well-defined. These requirements allows us to write the dissipation bracket of two functionals in a similar manner to (\ref{chainrule}). If $F,H$ are two functionals on the vector space $V$, the bracket $(F,H)$ can be written as
\begin{align}
(F,H)[w] = \left\langle \frac{\delta F}{\delta v} , K[w] \left( \frac{\delta H}{\delta v}[w] \right) \right\rangle,
\end{align}
where $w$ is an arbitrary element in $V$, and $K[w] : V^* \rightarrow V$ is some linear operator that varies with $w$. By a similar argument used in (\ref{chainrule}), the trajectory of a point $w\in V$ is now given in terms of the Hamiltonian functional $H$, the Poisson tensor $J$, and the new linear operator $K$ as
\begin{align}
\dt{w} = J[w] \left(\frac{\delta H}{\delta v}[w]\right) + K[w] \left(\frac{\delta H}{\delta v}[w]\right).
\end{align}

We will now describe a dissipation bracket that gives the correct evolution equation for the distribution function $\psi$. If $M = \mathbb{R}^n$, then $TM$ can be thought of as $\mathbb{R}^{2n}$ with coordinates $(x^i,y^i)$. Let $\delta_{ij}$ denote the standard metric on $\mathbb{R}^n$. Then we can introduce a metric on $TM$ given by
\begin{align}
    \mathrm{d}s^2 = \delta_{ij}\mathrm{d}x^i\mathrm{d}x^j + \delta_{ij}\mathrm{d}y^i\mathrm{d}y^j .
\end{align}
This is positive definite. The generalisation of this metric to general tangent bundles $TM$ of a Riemannian manifold $M$ is called the \emph{Sasaki metric} as studied in \cite{Yano73book}. (See appendix \textbf{\ref{app:vect}} for more details.) Note that in principle, for any $\lambda >0$
\begin{align}\label{ds^2lambda}
\mathrm{d}s^2(\lambda) = \delta_{ij}\mathrm{d}x^i\mathrm{d}x^j + \lambda \delta_{ij}\mathrm{d}y^i\mathrm{d}y^j,
\end{align}
is also a valid Riemannian metric on $TM$. We will choose $\lambda = 1/2$ so that the equations we derive later coincide with those in \cite{Renardy00, Bird87}. This is equivalent to redefining the $y$-coordinate by scaling -- the hydrodynamic part of the evolution equation for $\psi$ is invariant under rescaling of the $y$-coordinate. In rheological applications, it is typical to consider the bead-spring length scale $\lvert y\rvert$ to be much shorter than the flow length scale $\lvert x\rvert$, and we can adjust the factors in the metric as appropriate to reflect this.

To mimic the Lie--Poisson bracket as closely as possible, consider
\begin{align}
(F,G) = \left\langle \psi, \widetilde{g}\left( \mathrm{d} \left(\frac{\delta F}{\delta \psi}\right), \mathrm{d} \left(\frac{\delta G}{\delta \psi}\right) \right) \right\rangle,
\end{align}
where $\widetilde{g}$ is the inverse of the metric (\ref{ds^2lambda}) with $\lambda = 1/2$ on $TM$. In coordinates:
\begin{align}
    (F,G) =& \int \mathrm{d}^n x \mathrm{d}^n y \ \psi \left[ \delta^{ij} \frac{\partial}{\partial x^i}\left( \frac{\delta F}{\delta \psi} \right) \frac{\partial}{\partial x^j}\left( \frac{\delta G}{\delta \psi} \right) + 2\delta^{ij} \frac{\partial}{\partial y^i}\left( \frac{\delta F}{\delta \psi} \right) \frac{\partial}{\partial y^j}\left( \frac{\delta G}{\delta \psi} \right) \right] \nonumber \\
    =& - \left\langle \frac{\delta F}{\delta \psi}, \frac{\partial}{\partial x^i}\left( \delta^{ij}\psi \frac{\partial}{\partial x^j}\left(\frac{\delta G}{\delta \psi}\right) \right) + 2 \frac{\partial}{\partial y^i}\left( \delta^{ij}\psi \frac{\partial}{\partial y^j}\left(\frac{\delta G}{\delta \psi}\right) \right) \right\rangle.
\end{align}
This type of dissipation bracket coming from a metric (or indeed a more complicated quadratic form) allows us to write the evolution of $\psi$ as a \enquote{Liouville equation} $\dt{\psi} + \nabla_x \cdot(\dt{x}\psi) + \nabla_y \cdot(\dt{y}\psi) = 0$, since the extra terms it produces can be factored into some \enquote{flow velocity} in configuration space, proportional to the gradient of the free energy per bead-spring pair ${\delta H}/{\delta \psi}$.

For $H = H_{fluids}[\mathbf{m},\rho] + H_s[\psi]$, the equation of motion for $\psi$ is
\begin{align} \label{psieom1dis}
    \dt{\psi} +& \frac{\partial}{\partial x^i}\left(u^i \psi \right) + \frac{\partial}{\partial y^i}\left(\frac{\partial u^i}{\partial x^j}y^j \psi\right) = \frac{1}{\zeta}\left[ \frac{\partial}{\partial x^i}\left( \delta^{ij}\psi \frac{\partial}{\partial x^j}\left(\frac{\delta H_s}{\delta \psi}\right) \right) + 2 \frac{\partial}{\partial y^i}\left( \delta^{ij}\psi \frac{\partial}{\partial y^j}\left(\frac{\delta H_s}{\delta \psi}\right) \right) \right] ,
\end{align}
and $H_s[\psi]$ should now be interpreted as the free energy associated with the bead-spring pairs.

The dissipation bracket implements bead-bead interactions, since it couples $H_s[\psi]$ to $\psi$, but does not add any extra coupling between $\mathbf{m}$ and $\psi$. As there are no extra fluid-bead couplings, it is still valid to calculate the force on the fluid due to the suspension with the semidirect product Lie--Poisson formulation.

With this dissipation bracket, the kinematics of the dissipation mechanism is that the dissipative forces are generated by gradients of the free energy, and the mobility relation that converts forces to velocities is linear.

One can implement more sophisticated mobility relations by modifying the dissipation bracket \cite{Beris94book}. One possible effect to include is the \emph{hydrodynamic interaction} between the beads. This can be done by using a slightly different dissipation bracket. For $n=3$, let
\begin{align}
(F,G) =& \int \mathrm{d}^3 x \mathrm{d}^3 y \ \psi \left[ \delta^{ij} \frac{\partial}{\partial x^i}\left( \frac{\delta F}{\delta \psi} \right) \frac{\partial}{\partial x^j}\left( \frac{\delta G}{\delta \psi} \right) + 2D^{ij}(y) \frac{\partial}{\partial y^i}\left( \frac{\delta F}{\delta \psi} \right) \frac{\partial}{\partial y^j}\left( \frac{\delta G}{\delta \psi} \right) \right],
\end{align}
where $D^{ij}(y)$ is related to the \emph{Oseen--Burgers tensor} $ \Omega^{ij} = {(|y|^2\delta^{ij}+y^i y^j)}/{|y|^3}$, where $|y| = \sqrt{\delta_{ij}y^iy^j}$, as follows. For some parameter $\gamma$,
\begin{align}
    D^{ij}(y) = \delta^{ij} + {\gamma}\Omega^{ij}.
\end{align}
The Oseen--Burgers tensor $\Omega^{ij}$ can be used to describe the disturbance flow field around a small sphere in a Stokes flow \cite{Kim05book}. The physical effect included here is the advection of one bead by the disturbance flow field of the other, and vice versa.

Unfortunately, this bracket poses a considerable complication to the theory, since it is not known how to obtain a reduced system of equations for $y$-moments of $\psi$ that are sufficient to describe the stress state, even for Hookean springs. This property is commonly called (finite and exact) \emph{closure} \cite{Renardy00}. Ad hoc techniques such as \enquote{pre-averaging} have been used to obtain an effective constant mobility \cite{Bird87}.

Another piece of physics that the dissipation bracket can implement is the \emph{Newtonian viscous stress} on the fluid. This can be done by including an additional term in the dissipation bracket which is bilinear positive semidefinite in ${\partial}/{\partial x^i}({\delta F}/{\delta m_j})$ and ${\partial}/{\partial x^i}({\delta G}/{\delta m_j})$ with no $\psi$-dependence \cite{Morrison84,Beris90,Edwards90,Beris94book}. (The effect that would be modelled by allowing $\psi$-dependence would be \emph{suspension-enhanced} and possibly \emph{anisotropic viscosity} due to the suspended bodies.) 

It is typical in rheological applications to consider the bead-spring pairs to be suspended in a Newtonian fluid, so that the mechanism responsible for the motion of the beads is Stokes drag. The \emph{elastic particle-contributed stress}, which is the back-reaction on the fluid from doing work to the bead-spring pairs, is energy conserving. The purely dissipative Newtonian viscous stress, which comes from the friction between adjacent fluid parcels, is a separate effect and does not affect the particle-contributed stress. Since we will focus on the evolution equation for the distribution function $\psi$ and the elastic particle-contributed stress for the rest of the paper, we will omit the Newtonian viscous stress term, under the knowledge that it can be added back without modifying any of the expressions obtained.

We can also justify the omission of the Newtonian viscous stress term by a scaling argument. If the lengthscales of the individual bead-spring pairs are sufficiently small, we can model them as being immersed in a Stokes flow; if the macroscopic lengthscales of the fluid are sufficiently large, the Newtonian viscous stress is negligible compared to the other terms in the momentum equation. There is no conflict between these two asymptotic regimes in principle. If we take these limits simultaneously, the resulting system describes the coupling between an ideal compressible fluid and a distribution function of the bead-spring pairs that evolves in a non-conservative manner. This is the type of system that we will consider for the rest of the paper. 

In the next section, we will show that the upper-convected Maxwell model can be derived from the double bracket dynamics of a Hookean bead-spring pair suspension. In this case, including the Newtonian viscous stress term amounts to generalising the upper-convected Maxwell model to the Oldroyd-B model.

\subsection{The upper-convected Maxwell model from double bracket dynamics}\label{subsec:maxwell}

So far we have considered the Poisson and dissipation brackets that are suitable for describing the dynamics of a bead-spring pair suspension. Now we specialise to certain forms of the Hamiltonian functional, and show that the upper-convected Maxwell model can be derived from this double bracket system. Consider an isothermal fluid, with its temperature held at a fixed constant $T$ throughout. As before, we postulate a Hamiltonian functional of the form $H = H_{fluids}[\mathbf{m},\rho] + H_s[\psi]$, where $H_{fluids}$ is the usual ideal compressible fluid Hamiltonian (\ref{Hfluids}), and $H_s[\psi]$ is the free energy of the bead-spring pairs, given by 
\begin{align}\label{freeE1}
H_s[\psi] = \int \mathrm{d}^n x \mathrm{d}^n y \ \left(E(y)\psi + k_BT\psi \log (\psi) \right),
\end{align}
where $k_B$ is the Boltzmann constant, and $E(y)$ is the internal energy of a bead-spring pair with bead-to-bead displacement $y$. In terms of thermodynamics, this is a free energy $U - TS$, where
\begin{align}
U[\psi] = \int \mathrm{d}^n x \mathrm{d}^n y \ E(y)\psi, \quad S[\psi]= -k_B \int \mathrm{d}^n x \mathrm{d}^n y \  \psi \log (\psi) 
\end{align}
are the \emph{internal energy} and the \emph{(Boltzmann) entropy} of the distribution function $\psi$, respectively.

Note that $\log(\psi)$ is only ever defined up to a constant, since we are actually comparing $\psi \mathrm{d}^n x \mathrm{d}^n y$ to the standard volume element $\mathrm{d}^n x \mathrm{d}^n y$ in $\mathbb{R}^{2n}$, which is unique up to scaling. If we replace $\mathrm{d}^n x \mathrm{d}^n y$ with $\Lambda \mathrm{d}^n x \mathrm{d}^n y$ for some positive $\Lambda$, then $\log(\psi) \mapsto \log(\psi/\Lambda) = \log(\psi) -\log(\Lambda)$, and the Boltzmann entropy changes by
\begin{align}
S[\psi] \mapsto S[\psi] + k_B\log(\Lambda) \int \mathrm{d}^n x \mathrm{d}^n y \  \psi .
\end{align}
Fortunately, the functional $N = \int \mathrm{d}^n x \mathrm{d}^n y \  \psi$ is a \emph{Casimir functional} of the Lie--Poisson bracket, since ${\delta N}/{\delta \mathbf{m}} = 0, {\delta N}/{\delta \rho} = 0, {\delta N}/{\delta \psi} = 1$, so for any functional $F$,
\begin{align}
\{F,N\} = - \left\langle \psi, \left(\frac{\delta F}{\delta \mathbf{m}}\right)^{\#} \cdot 1 \right\rangle = 0 .
\end{align}
In addition, since $\mathrm{d}\left( {\delta N}/{\delta \psi} \right) = 0$, $(F,N) = 0$ for all functionals $F$ as well. So the dynamics are unaltered by such a rescaling.

Now we can calculate the functional derivative of the free energy $H_s[\psi]$ in (\ref{freeE1}):
\begin{align}
    \frac{\delta H_s}{\delta \psi} = E(y) + k_B T \left(\log(\psi)+1\right).
\end{align}
Substituting this into the equation for the force on the fluid gives 
\begin{align}
\mathcal{F}_i = \frac{\partial}{\partial x^i} \left(-2k_B T \int \mathrm{d}^n y \ \psi \right) + \frac{\partial}{\partial x^j} \left( \int \mathrm{d}^n y \ y^j\frac{\partial E}{\partial y^i} \psi \right) = \frac{\partial\sigma^j_i}{\partial x^j},
\end{align}
where
\begin{equation}
\sigma^j_i = -2k_B T \int \mathrm{d}^n y \ \psi \delta^j_i + \int \mathrm{d}^n y \ y^j\frac{\partial E}{\partial y^i} \psi 
\end{equation}
is the \emph{stress tensor} exerted on the fluid by the bead-spring pairs. The fact that the force can be written as the divergence of a stress tensor implies that linear momentum is conserved. This is related to the translational invariance of the free energy $H_s[\psi]$ in (\ref{freeE1}). The evolution equation (\ref{psieom1dis}) for $\psi$ becomes a \emph{Fokker--Planck equation}:
\begin{align}\label{psi1eomfinal}
\dt{\psi} + & \frac{\partial}{\partial x^i}\left(u^i \psi \right) + \frac{\partial}{\partial y^i}\left(\frac{\partial u^i}{\partial x^j}y^j \psi\right) = \frac{2}{\zeta}  \frac{\partial}{\partial y_i} \left(\delta^{ij}\frac{\partial E}{\partial y^j} \psi \right) + \frac{k_B T}{\zeta}\left( \nabla^2_x \psi + 2 \nabla^2_y \psi \right),
\end{align}
where $\nabla^2_x = \delta^{ij}{\partial^2}/{\partial x^i \partial x^j}$ and $\nabla^2_y = \delta^{ij}{\partial^2}/{\partial y^i \partial y^j}$.
This coincides with the kinetic equation for bead-spring pairs with arbitrary internal energy $E(y)$ obtained in \cite{Bird87, Renardy00}, from a double bracket formulation.

There is no known general procedure to obtain closed evolution equations for the relevant moments of $\psi$ appearing in the stress tensor for general $E(y)$. However, this is possible if the energy is quadratic, i.e. $E(y) = (\kappa /2)\delta_{ij}y^iy^j$, which will reproduce the upper-convected Maxwell model. Attempts in obtaining closed evolution equations for general $E(y)$ include the \emph{Peterlin approximation}, which approximates the energy $E(y)$ as that of a Hookean spring, with an effective spring constant depending on the second $y$-moment of $\psi$ \cite{Bird87, Renardy00}. This approach produces nonlinear, but closed, evolution equations for the relevant moments of $\psi$.

Writing $\sigma^{jk} = \sigma^j_i \delta ^{ik}$ for convenience, the stress tensor is
\begin{align} \label{model1stress}
\sigma^{jk} = -2k_B T \delta^{jk} \int \mathrm{d}^ny \ \psi + \kappa \int \mathrm{d}^n y \ y^j y^k \psi = -2k_BT\delta^{jk} \langle 1 \rangle + \kappa \left\langle y^j y^k \right\rangle,
\end{align}
where $\left\langle \cdots \right\rangle = \int \mathrm{d}^ny \ \psi \left(\cdots \right)$. Note that $\left\langle 1 \right\rangle = \int \mathrm{d}^ny \ \psi = n(x)$ is the number density of bead-spring pairs in real space, and so it is not surprising that each bead in a pair contributes $n(x)k_BT$ to the isotropic pressure.

The evolution equation for the distribution function $\psi$ is then
\begin{align} \label{uppermaxwell1}
\dt{\psi} + & \frac{\partial}{\partial x^i}\left(u^i \psi \right) + \frac{\partial}{\partial y^i}\left(\frac{\partial u^i}{\partial x^j}y^j \psi\right) = \frac{2\kappa}{\zeta}  \frac{\partial}{\partial y^i} \left(y^i \psi \right) + \frac{k_B T}{\zeta}\left( \nabla^2_x \psi + 2 \nabla^2_y \psi \right).
\end{align}
Taking the $1$ and $y^jy^k$ moments of (\ref{uppermaxwell1}) gives
\begin{align}
\label{uppermaxwelln}
\dt{\left\langle 1 \right\rangle} + \frac{\partial}{\partial x^i}\left(u^i \left\langle 1 \right\rangle\right) =& \frac{k_BT}{\zeta}\nabla^2_x \left\langle 1 \right\rangle, \\
\label{uppermaxwellC}
\dt{\left\langle y^j y^k \right\rangle} + \frac{\partial}{\partial x^i}\left( u^i \left\langle y^j y^k \right\rangle\right) - \frac{\partial u^j}{\partial x^l}\left\langle y^l y^k \right\rangle - \frac{\partial u^k}{\partial x^l}\left\langle y^j y^l \right\rangle =& -\frac{4\kappa}{\zeta}\left\langle y^j y^k \right\rangle  + \frac{4k_BT}{\zeta}\delta^{jk}\left\langle 1 \right\rangle + \frac{k_BT}{\zeta}\nabla^2_x \left\langle y^j y^k \right\rangle.
\end{align} 
We can derive the upper-convected Maxwell model from (\ref{uppermaxwelln},\ref{uppermaxwellC}) as follows. Let $n = \left\langle 1 \right\rangle$ be the number density of bead-spring pairs, and $C^{jk} = \left\langle y^j y^k \right\rangle$ be the \emph{conformation tensor}. Note that the stress tensor in (\ref{model1stress}) can be decomposed into two parts: $\sigma^{jk} = \sigma_{(0)}^{jk} + \sigma_{(1)}^{jk}$, where $\sigma_{(0)}^{jk} = -nk_BT\delta^{jk}$ is the pressure of suspended particles without internal structure. The stress due to internal structure $\sigma_{(1)}^{jk} = -nk_BT\delta^{jk} + \kappa C^{jk} $ evolves according to the following equation:
\begin{align}
\dt{\sigma}_{(1)}^{jk} &+ \frac{\partial}{\partial x^i}\left( u^i \sigma_{(1)}^{jk}\right) - \frac{\partial u^j}{\partial x^l}\sigma_{(1)}^{lk} - \frac{\partial u^k}{\partial x^l}\sigma_{(1)}^{jl} \nonumber = -\frac{4\kappa}{\zeta}\sigma_{(1)}^{jk}  + nk_B T \left(\frac{\partial u^j}{\partial x^l}\delta^{lk} + \frac{\partial u^k}{\partial x^l}\delta^{jl}  \right) + \frac{k_BT}{\zeta}\nabla^2_x\sigma_{(1)}^{jk},
\end{align}
which is the usual upper-convected Maxwell model, with an additional diffusion term. The diffusion term is usually absent because derivations often assume a separation of length scales: $\lvert x\rvert \gg \lvert y \rvert$, i.e. that the macroscopic fluid properties vary on length scales much larger than the length scale of the suspended bodies. We can adjust the factors in the metric (\ref{ds^2lambda}) to make the $x$-diffusion term arbitrarily small relative to the other terms to reflect this separation of scales.

The important point here is that the moments $n = \left\langle 1 \right\rangle,  C^{jk} = \left\langle y^j y^k \right\rangle$ form a closed system of evolution equations, given the fluid flow $\mathbf{u}$, and are sufficient to describe the stress tensor (\ref{model1stress}). This property of \emph{finite closure} allows one to avoid solving for the full distribution function in the configuration space of suspensions.

\section{Higher order tangent bundles and semidirect product Lie algebra set-up for multibead-chains}\label{sec:TNMmodel}

In this section we will describe the construction of the \emph{$N^{th}$ order tangent bundle} from a manifold, which can be considered as the configuration space of a small $(N+1)$-bead chain \cite{Yano73book}. This construction enjoys many analogous properties to the tangent bundle considered in section \textbf{\ref{sec:2bead}}, which allows us to consider the advection of distribution functions for multibead-chains as semidirect product Lie--Poisson system.

\subsection{The higher order tangent bundle $T^{(N)}M$}\label{subsec:defineTNM}
There are better finite-dimensional approximations to the space of all paths $\alpha: I \rightarrow M$, where as before $I$ is some closed interval containing $0$ in its interior. They can be constructed by considering the equivalence relation $\stackrel{(N)}{\sim}$ on paths, defined by the following.

Let $\alpha, \beta$ be paths, and $x^i$ be local coordinates around $\alpha(0)$, so that $\alpha^i(t)$ are the coordinates of the path $\alpha$. Then $\alpha$ is said to \emph{have $N^{th}$ order contact with $\beta$ at $t=0$}, written as $\alpha \stackrel{(N)}{\sim} \beta$, if and only if
\begin{align}
\alpha^i(0) &= \beta^i(0), \nonumber \\
\frac{\mathrm{d}\alpha^i}{\mathrm{d} t}(0) &= \frac{\mathrm{d}\beta^i}{\mathrm{d} t}(0), \nonumber \\
&\vdots  \nonumber \\
\frac{\mathrm{d}^N\alpha^i}{\mathrm{d} t^N}(0) &= \frac{\mathrm{d}^N\beta^i}{\mathrm{d} t^N}(0).
\end{align} 
There are $n(N+1)$ equations to be satisfied. Conversely, given $x^i, y^i_{(1)}, \ldots, y^i_{(N)}$, it is possible to construct a path $\alpha(t)$ with $x^i$ as the coordinates of $\alpha(0)$, and $y^i_{(a)}$ as the coordinates of its $a^{th}$ derivative at $t=0$ for $a=1,\ldots,N$, using coordinates:
\begin{align}
\alpha^i(t) = x^i + \sum_{a=1}^{N} \frac{t^a}{a!}y^i_{(a)} + O(t^{N+1}),
\end{align}
so each of these equivalence classes is nonempty, and the space of all such equivalence classes is called the \emph{$N^{th}$ order tangent bundle} $T^{(N)}M$ of $M$, and is an $n(N+1)$-dimensional manifold. (The $N^{th}$ order tangent bundle is also known as the \emph{space of $N$-jets of $\mathbb{R}$ into $M$ with fixed source} and is commonly denoted by $J^N_0(\mathbb{R},M)$.) The case $N=1$ gives the usual tangent bundle from section \textbf{\ref{subsec:defineTM}}. We will use $(x^i, y^i_{(1)}, \ldots, y^i_{(N)})$ to denote the coordinates of a point in $T^{(N)}M$.

Informally, this is like an $(N+1)$-point approximation to a small segment of a curve attached to $M$ by an $N^{th}$ order polynomial. Note however that this is a local approximation to a smooth path based at a point, and knowing the full Taylor series of a smooth path at one point is not in general sufficient to determine the value of the path at another arbitrarily close point (for example, the functions $f(t) = \exp(-1/t^2)$ and $f(t)=0$ have the same Taylor series at $t=0$).

\subsection{Complete lifts of vector fields to $T^{(N)}M$}\label{liftTNM}

By adapting the same computation for how vector fields act on $TM$, we can allow vector fields to act on $T^{(N)}M$ in a natural way \cite{Yano73book}.

Let $\mathbf{u}$ be a vector field on $M$. We will construct a vector field $\mathbf{u}^{\#}$ on $T^{(N)}M$, called the \emph{complete lift} of $\mathbf{u}$, as follows. Let $\alpha(t)$ be a path on $M$. The flow of $\mathbf{u}$ by a small parameter $s$ deforms the path to $\widetilde{\alpha}(t,s)$, see figure \ref{fig:completelift}, whose components are
\begin{align}
\widetilde{\alpha}^i(t,s) = \alpha^i(t) + su^i(\alpha(t)) + O(s^2).
\end{align}
Again, let 
\begin{align}
\widetilde{x}^i(s) &= \widetilde{\alpha}^i(0,s), \nonumber \\
\widetilde{y}^i_{(a)}(s) &= \frac{\partial^a}{\partial t^a} \bigg\rvert_{t=0} \widetilde{\alpha}^i(t,s), \quad \text{for $a = 1, \ldots, N$}.
\end{align}
and denote their values at $s=0$ as
\begin{align}
\widetilde{x}^i(0) = x^i, \quad \widetilde{y}^i_{(a)}(0) = y^i_{(a)} \quad \text{for $a = 1, \ldots, N$}.
\end{align}
If we apply $({\partial}/{\partial t})^a \rvert_{t=0}$ to $\widetilde{\alpha}(t,s)$ for $a = 1, \ldots, N$, we obtain the following expressions:
\begin{align}
\widetilde{x}^i(s) &= x^i + su^i(x) + O(s^2), \nonumber \\
\widetilde{y}^i_{(1)}(s) &= y^i_{(1)} + s\frac{\partial u^i}{\partial x^j}(x) y^j_{(1)} + O(s^2), \nonumber \\
\widetilde{y}^i_{(2)}(s) &= y^i_{(1)} + s\left(\frac{\partial^2 u^i}{\partial x^j x^k}(x) y^j_{(1)} y^k_{(1)} + \frac{\partial u^i}{\partial x^j}(x) y^j_{(2)}\right) + O(s^2), \nonumber \\
& \vdots \nonumber \\
\widetilde{y}^i_{(N)}(s) &= y^i_{(N)} + s \left( \frac{\mathrm{d}^N}{\mathrm{d} t^N} u^i(\alpha(t)) \right) \bigg\rvert_{t=0} + O(s^2).
\end{align}
This flow, which is parametrised by $s$, defines a vector field $\mathbf{u}^{\#}$ on $T^{(N)}M$ called the \emph{complete lift}. The components of $\mathbf{u}^{\#}$ relative to the coordinate system $(x^i,y^i_{(1)},\cdots, y^i_{(N)})$ can be obtained from taking $({\partial}/{\partial s} )\rvert_{s=0}$ of the expressions above:
\begin{align}
\mathbf{u}^{\#}(x,y_{(1)},\ldots, y_{(N)}) &= \left(u^i(\alpha(0)), \frac{\mathrm{d}}{\mathrm{d} t} u^i(\alpha(t)) \bigg\rvert_{t=0}, \ldots,   \frac{\mathrm{d}^N}{\mathrm{d} t^N} u^i(\alpha(t)) \bigg\rvert_{t=0} \right), \\
\text{where} \quad \alpha^i(t) &= x^i + \sum_{a=1}^{N} \frac{t^a}{a!}y^i_{(a)} + O(t^{N+1}).
\end{align}
The choice of the $O(t^{N+1})$ term will not affect the components of $\mathbf{u}^{\#}$, which can be seen by Taylor expanding and comparing powers of $t$.

The crucial property here is that the complete lift of vector fields is a Lie algebra homomorphism $\mathrm{Vect}(M)\rightarrow\mathrm{Vect}(T^{(N)}M)$, as
\begin{align}\label{homoN}
\left([\mathbf{u},\mathbf{v}]_M\right)^{\#} = \left[\mathbf{u}^{\#},\mathbf{v}^{\#}\right]_{T^{(N)}M}.
\end{align}
This can be checked by direct computation or otherwise. A sketch of proof is given in appendix \textbf{\ref{app:diff}}.

This means the action of vector fields $\mathbf{u}$ on $M$ on smooth functions $f$ on $T^{(N)}M$ given by $f \mapsto \mathbf{u}^{\#} \cdot f$ is a Lie algebra representation. Hence one can form the corresponding semidirect product with vector fields $\mathbf{u}$ on $M$ acting on functions $f$ on $T^{(N)}M$ by the complete lift.

The dual space to $C^\infty(T^{(N)}M)$ can be thought of as the space of volume forms $\Omega^{n(N+1)}(T^{(N)}M)$, with typical element $\psi \mathrm{d}^nx\mathrm{d}^ny_{(1)}\ldots\mathrm{d}^ny_{(N)}$. Such a $\psi$ can be thought of as a distribution function on the configuration space of $(N+1)$-bead chains.

\subsection{Subbracket for the distribution function in the semidirect product Lie algebra formulation}\label{subsec:Nsubbracket}
The semidirect product Lie algebra relevant to a compressible fluid advecting $(N+1)$-bead chains is
\begin{equation}
\mathfrak{g}_s = \mathrm{Vect}(M)\ltimes \left( C^\infty(M) \oplus C^\infty(T^{(N)}M) \right).
\end{equation}
The dual space $\mathfrak{g}_s^{*}$ is isomorphic to $\mathrm{Vect}(M)^{*} \oplus \Omega^n(M) \oplus \Omega^{n(N+1)}(T^{(N)}M)$ as a vector space. An element of $\mathfrak{g}_s^{*}$ is a triple $(\mathbf{m},\rho,\psi)$, where $\mathbf{m}$ is the fluid momentum density, $\rho$ is the fluid mass density, and $\psi$ is the distribution function of the suspended bead-chains in configuration space.

The minus Lie--Poisson bracket for functionals $F = F[\mathbf{m},\rho,\psi]$ and $G = G[\mathbf{m},\rho,\psi]$ is
\begin{align}
\{F,G\}[\mathbf{m},\rho,\psi] =& -\left\langle \mathbf{m},\left[\frac{\delta F}{\delta \mathbf{m}},\frac{\delta G}{\delta \mathbf{m}}\right] \right\rangle - \left\langle \rho, \frac{\delta F}{\delta \mathbf{m}} \cdot \frac{\delta G}{\delta \rho} - \frac{\delta G}{\delta \mathbf{m}} \cdot \frac{\delta F}{\delta \rho} \right\rangle - \left\langle \psi, \left(\frac{\delta F}{\delta \mathbf{m}}\right)^{\#} \cdot \frac{\delta G}{\delta \psi} - \left(\frac{\delta G}{\delta \mathbf{m}}\right)^{\#} \cdot \frac{\delta F}{\delta \psi} \right\rangle \nonumber \\
=& \{F,G\}_{fluids} + \{F,G\}_{\psi},
\end{align}
where $\{F,G\}_{fluids}$ denotes the usual compressible fluid bracket, and $\{F,G\}_{\psi}$ denotes $\psi$-subbracket i.e. the terms that explicitly involve $\psi$.

\section{The $3$-bead chain model}\label{sec:3beads}

As an application of the ideas of the previous section, consider the case $N=2$. Then $T^{(2)}M$ is the configuration space of a $3$-bead chain. The extra degrees of freedom $y_{(1)} = y, y_{(2)} = z$ (renamed for notational clarity) can be thought of as the average extension and the bending of the chain, respectively, as shown in figure \ref{fig:3bead}.

\begin{figure}
\centering
\begingroup%
  \makeatletter%
  \providecommand\color[2][]{%
    \errmessage{(Inkscape) Color is used for the text in Inkscape, but the package 'color.sty' is not loaded}%
    \renewcommand\color[2][]{}%
  }%
  \providecommand\transparent[1]{%
    \errmessage{(Inkscape) Transparency is used (non-zero) for the text in Inkscape, but the package 'transparent.sty' is not loaded}%
    \renewcommand\transparent[1]{}%
  }%
  \providecommand\rotatebox[2]{#2}%
  \newcommand*\fsize{\dimexpr\f@size pt\relax}%
  \newcommand*\lineheight[1]{\fontsize{\fsize}{#1\fsize}\selectfont}%
  \ifx\svgwidth\undefined%
    \setlength{\unitlength}{294.49954772bp}%
    \ifx\svgscale\undefined%
      \relax%
    \else%
      \setlength{\unitlength}{\unitlength * \real{\svgscale}}%
    \fi%
  \else%
    \setlength{\unitlength}{\svgwidth}%
  \fi%
  \global\let\svgwidth\undefined%
  \global\let\svgscale\undefined%
  \makeatother%
  \begin{picture}(1,0.55038817)%
    \lineheight{1}%
    \setlength\tabcolsep{0pt}%
    \put(0,0){\includegraphics[width=\unitlength]{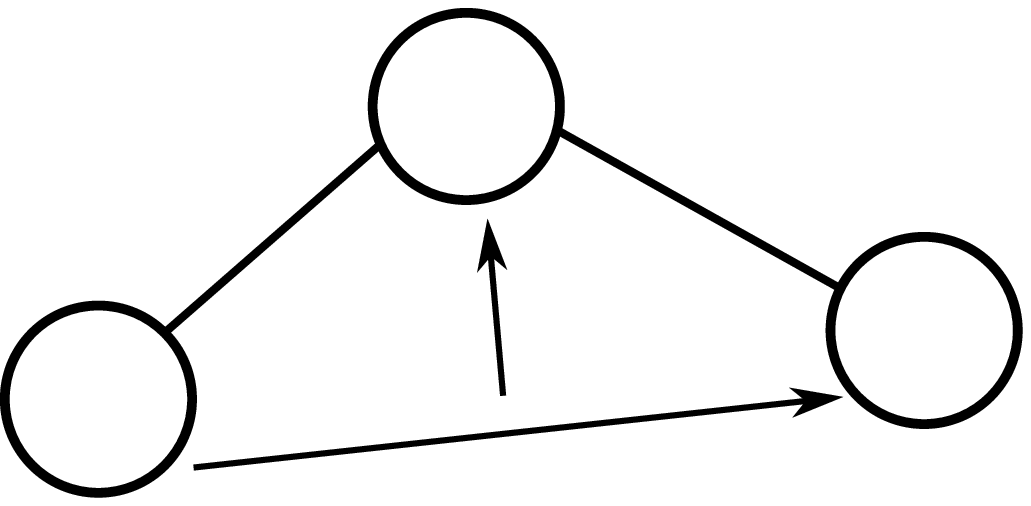}}%
    \put(0.50499129,0.22795843){\color[rgb]{0,0,0}\makebox(0,0)[lt]{\lineheight{1.25}\smash{\begin{tabular}[t]{l}\Huge $z$\end{tabular}}}}%
    \put(0.48490868,0.07278334){\color[rgb]{0,0,0}\makebox(0,0)[lt]{\lineheight{1.25}\smash{\begin{tabular}[t]{l}\Huge $y$\end{tabular}}}}%
  \end{picture}%
\endgroup%

\caption{The coordinates $y$ and $z$ describe the internal degrees of freedom of the $3$-bead chain.}\label{fig:3bead}
\end{figure}

Forming the semidirect product $\mathfrak{g}_s = \mathrm{Vect}(M)\ltimes \left( C^\infty(M) \oplus C^\infty(T^{(2)}M) \right)$ and looking at the Lie--Poisson dynamics for functionals on the dual $\mathfrak{g}_s^*$, we obtain the $\psi$-subbracket as follows:
\begin{align}
\{F,G\}_{\psi} =& - \left\langle \psi, \left(\frac{\delta F}{\delta \mathbf{m}}\right)^{\#} \cdot \frac{\delta G}{\delta \psi} - \left(\frac{\delta G}{\delta \mathbf{m}}\right)^{\#} \cdot \frac{\delta F}{\delta \psi} \right\rangle \nonumber \\
=& - \int \mathrm{d}^nx\mathrm{d}^ny\mathrm{d}^nz \ \psi \Bigg[ \frac{\delta F}{\delta m_i} \frac{\partial}{\partial x^i}\left(\frac{\delta G}{\delta \psi}\right)+ \frac{\partial}{\partial x^j}\left(\frac{\delta F}{\delta m_i}\right)y^j\frac{\partial}{\partial y^i}\left(\frac{\delta G}{\delta \psi}\right), \nonumber \\
& \qquad + \left( \frac{\partial^2}{\partial x^j\partial x^k}\left(\frac{\delta F}{\delta m_i}\right)y^j y^k + \frac{\partial}{\partial x^j}\left(\frac{\delta F}{\delta m_i}\right)z^j \right)\frac{\partial}{\partial z^i}\left( \frac{\delta G}{\delta \psi} \right) \nonumber \\
& \qquad - \frac{\delta G}{\delta m_i} \frac{\partial}{\partial x^i}\left(\frac{\delta F}{\delta \psi}\right)+ \frac{\partial}{\partial x^j}\left(\frac{\delta G}{\delta m_i}\right)y^j\frac{\partial}{\partial y^i}\left(\frac{\delta F}{\delta \psi}\right) \nonumber \\
& \qquad - \left( \frac{\partial^2}{\partial x^j\partial x^k}\left(\frac{\delta G}{\delta m_i}\right)y^j y^k + \frac{\partial}{\partial x^j}\left(\frac{\delta G}{\delta m_i}\right)z^j \right)\frac{\partial}{\partial z^i}\left( \frac{\delta F}{\delta \psi} \right) \Bigg].
\end{align}
If we consider Hamiltonians of the form $H = H_{fluids}[\mathbf{m},\rho] + H_s[\psi]$ and let $\mathbf{u} = {\delta H}/{\delta \mathbf{m}}$, then the evolution equation for $\psi$ and the extra force $\mathcal{F}_i$ on the fluid are 
\begin{align}
\label{psieom2}
\dt{\psi} =& -\frac{\partial}{\partial x^i}\left(u^i \psi \right) - \frac{\partial}{\partial y^i}\left(\frac{\partial u^i}{\partial x^j}y^j \psi\right) - \frac{\partial}{\partial z^i}\left(\left[ \frac{\partial^2 u^i}{\partial x^j \partial x^k}y^j y^k + \frac{\partial u^i}{\partial x^j}z^j \right]\psi \right), \\
\label{force2}
\mathcal{F}_i =& \int \mathrm{d}^n y \mathrm{d}^n z \ \psi \Bigg[ - \frac{\partial}{\partial x^i}\left(\frac{\delta H_s}{\delta \psi}\right) + \frac{\partial}{\partial x^j}\left(  y^j \frac{\partial}{\partial y^i} \left(\frac{\delta H_s}{\delta \psi}\right) \right) \nonumber \\
& \qquad+ \frac{\partial}{\partial x^j}\left(  z^j \frac{\partial}{\partial z^i} \left(\frac{\delta H_s}{\delta \psi}\right) \right) - \frac{\partial^2 }{\partial x^j \partial x^k}\left(y^j y^k \frac{\partial}{\partial z^i} \left(\frac{\delta H_s}{\delta \psi}\right) \right) \Bigg].
\end{align}
This is obtained by the usual procedure of equating
\begin{align}
\dt{F} = \left\langle \frac{\delta F}{\delta \mathbf{m}}, \dt{\mathbf{m}} \right\rangle + \left\langle \frac{\delta F}{\delta \rho}, \dt{\rho} \right\rangle + \left\langle \frac{\delta F}{\delta \psi}, \dt{\psi} \right\rangle = \{F,H\} = \{F,H\}_{fluids} + \{F,H\}_{\psi}
\end{align}
for arbitrary $F$, and collecting the terms proportional to $\delta F / \delta \mathbf{m}$ and $\delta F /\delta \psi$ respectively due to the $\psi$-subbracket, after some integrations by parts.

The term $({\partial^2 u^i}/{\partial x^j \partial x^k})y^j y^k$ in (\ref{psieom2}) can be thought of as the \emph{bending rate} of the bead chain due to a difference between the stretching velocities across the bead chain. Since it couples to the second derivative of the flow velocity $\mathbf{u}$, and hence can detect \emph{vorticity gradients} across the flow, this model supports a force term $\mathcal{F}_i$ with nonzero torque i.e. the stress tensor can be non-symmetric \cite{CondiffDahler64, Cosserat09book}, as we will demonstrate in section \textbf{\ref{subsec:closure2}}. If we drop this term, the kinematic situation will reduce to two bead-spring pairs having a fixed common centre, which is the one considered in \cite{Bird87} when considering multibead-chain models.

In more geometrical terms, without the second derivative term, the flow vector field $\mathbf{u}$ will be acting on the configuration space $(T\oplus T)M$ instead of $T^{(2)}M$. Each point in $(T\oplus T)M$ consists of a pair of tangent vectors at a point on $M$. Figures \ref{fig:TplusTM} and \ref{fig:T2M} illustrate the differences graphically. As a consequence, the resulting semidirect product Lie--Poisson structure involving $(T\oplus T)M$ will generally have different kinematic properties to the structure involving $T^{(2)}M$.

This difference is difficult to see if we only consider the configuration space itself but not how it is constructed. For example, when $M=\mathbb{R}^n$, both $(T\oplus T)M$ and $T^{(2)}M$ can be considered as (or more properly, are diffeomorphic to) $\mathbb{R}^{3n}$, but there are nontheless significant differences, namely that for quadratic energy, one model supports torque while the other does not!
(In fact, we can construct an isomorphism of $(T\oplus T)M$ and $T^{(2)}M$ as fibre bundles using a metric -- see appendix \textbf{\ref{subapp:vectTNM}} for more details. However, the isomorphism depends on the metric, and the complete lifts of vector fields to $(T\oplus T)M$ and $T^{(2)}M$ respectively do not coincide under the isomorphism.)

\noindent\makebox[\textwidth][c]{
\begin{minipage}[t]{.48\textwidth}

\begingroup%
  \makeatletter%
  \providecommand\color[2][]{%
    \errmessage{(Inkscape) Color is used for the text in Inkscape, but the package 'color.sty' is not loaded}%
    \renewcommand\color[2][]{}%
  }%
  \providecommand\transparent[1]{%
    \errmessage{(Inkscape) Transparency is used (non-zero) for the text in Inkscape, but the package 'transparent.sty' is not loaded}%
    \renewcommand\transparent[1]{}%
  }%
  \providecommand\rotatebox[2]{#2}%
  \newcommand*\fsize{\dimexpr\f@size pt\relax}%
  \newcommand*\lineheight[1]{\fontsize{\fsize}{#1\fsize}\selectfont}%
  \ifx\svgwidth\undefined%
    \setlength{\unitlength}{290.29578308bp}%
    \ifx\svgscale\undefined%
      \relax%
    \else%
      \setlength{\unitlength}{\unitlength * \real{\svgscale}}%
    \fi%
  \else%
    \setlength{\unitlength}{\svgwidth}%
  \fi%
  \global\let\svgwidth\undefined%
  \global\let\svgscale\undefined%
  \makeatother%
  \begin{picture}(1,0.61114104)%
    \lineheight{1}%
    \setlength\tabcolsep{0pt}%
    \put(0,0){\includegraphics[width=\unitlength]{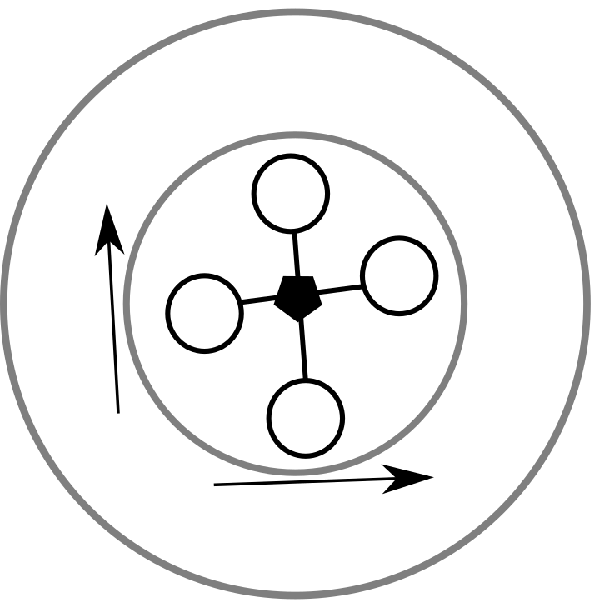}}%
    \put(0.42651937,0.41233539){\color[rgb]{0,0,0}\makebox(0,0)[lt]{\lineheight{1.25}\smash{\begin{tabular}[t]{l}\large$linear$\end{tabular}}}}%
    \put(0.46473652,0.55262468){\color[rgb]{0,0,0}\makebox(0,0)[lt]{\lineheight{1.25}\smash{\begin{tabular}[t]{l}\large$quadratic$\end{tabular}}}}%
    \put(0.08010154,0.27410017){\color[rgb]{0,0,0}\makebox(0,0)[lt]{\lineheight{1.25}\smash{\begin{tabular}[t]{l}\large$y$\end{tabular}}}}%
    \put(0.29392671,0.08300262){\color[rgb]{0,0,0}\makebox(0,0)[lt]{\lineheight{1.25}\smash{\begin{tabular}[t]{l}\large$z$\end{tabular}}}}%
  \end{picture}%
\endgroup%

\captionof{figure}{$(T\oplus T)M$ consists of a point on $M$, and a pair of tangent vectors $y,z$ attached to that point. It can be though of as the configuration space of two bead-spring pairs with a common centre.}\label{fig:TplusTM}

\end{minipage}

\begin{minipage}[t]{.04\textwidth}
\quad
\end{minipage}

\begin{minipage}[t]{.48\textwidth}

\begingroup%
  \makeatletter%
  \providecommand\color[2][]{%
    \errmessage{(Inkscape) Color is used for the text in Inkscape, but the package 'color.sty' is not loaded}%
    \renewcommand\color[2][]{}%
  }%
  \providecommand\transparent[1]{%
    \errmessage{(Inkscape) Transparency is used (non-zero) for the text in Inkscape, but the package 'transparent.sty' is not loaded}%
    \renewcommand\transparent[1]{}%
  }%
  \providecommand\rotatebox[2]{#2}%
  \newcommand*\fsize{\dimexpr\f@size pt\relax}%
  \newcommand*\lineheight[1]{\fontsize{\fsize}{#1\fsize}\selectfont}%
  \ifx\svgwidth\undefined%
    \setlength{\unitlength}{290.29575064bp}%
    \ifx\svgscale\undefined%
      \relax%
    \else%
      \setlength{\unitlength}{\unitlength * \real{\svgscale}}%
    \fi%
  \else%
    \setlength{\unitlength}{\svgwidth}%
  \fi%
  \global\let\svgwidth\undefined%
  \global\let\svgscale\undefined%
  \makeatother%
  \begin{picture}(1,0.61114109)%
    \lineheight{1}%
    \setlength\tabcolsep{0pt}%
    \put(0,0){\includegraphics[width=\unitlength]{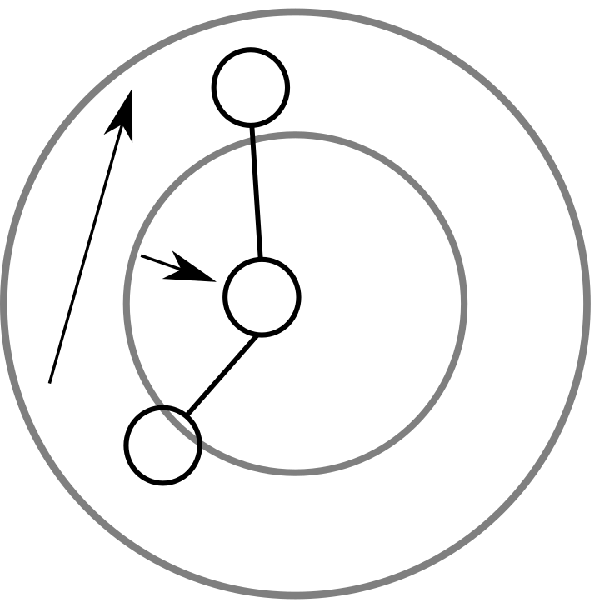}}%
    \put(0.42651938,0.4123354){\color[rgb]{0,0,0}\makebox(0,0)[lt]{\lineheight{1.25}\smash{\begin{tabular}[t]{l}\large$linear$\end{tabular}}}}%
    \put(0.46473645,0.55262472){\color[rgb]{0,0,0}\makebox(0,0)[lt]{\lineheight{1.25}\smash{\begin{tabular}[t]{l}\large$quadratic$\end{tabular}}}}%
    \put(0.05537261,0.34306543){\color[rgb]{0,0,0}\makebox(0,0)[lt]{\lineheight{1.25}\smash{\begin{tabular}[t]{l}\large$y$\end{tabular}}}}%
    \put(0.15042652,0.34511585){\color[rgb]{0,0,0}\makebox(0,0)[lt]{\lineheight{1.25}\smash{\begin{tabular}[t]{l}\large$z$\end{tabular}}}}%
  \end{picture}%
\endgroup%

\captionof{figure}{$T^{(2)}M$ consists of equivalence classes of paths $I\rightarrow M$ with the same Taylor series up to second order, with $y,z$ being the first and second order coefficients in the Taylor expansion respectively. It can be though of as the configuration space of a $3$-bead chain.}\label{fig:T2M}

\end{minipage}

}

\subsection{Dissipation bracket for $3$-bead chains}\label{subsec:diss3bead}

We continue to follow the same strategy as the bead-spring pairs to construct the dissipation bracket for the $3$-bead chain suspension.

Let $M = \mathbb{R}^n$, so we can identify $T^{(2)}M$ with $\mathbb{R}^{3n}$ with coordinates $(x^i,y^i,z^i)$. Consider the usual Riemannian metric $g$ on $\mathbb{R}^{3n}$, which can be written as
\begin{align}
\mathrm{d}s^2 = \delta_{ij}\mathrm{d}x^i\mathrm{d}x^j + \delta_{ij}\mathrm{d}y^i\mathrm{d}y^j + \delta_{ij}\mathrm{d}z^i\mathrm{d}z^j .
\end{align}
The construction of Riemannian metrics on higher order tangent bundles of general Riemannian manifolds is considered in appendix \textbf{\ref{app:vect}}. 

We can define the dissipation bracket
\begin{align}
(F,G) = \int \mathrm{d}^nx\mathrm{d}^ny\mathrm{d}^nz \ \psi \ \widetilde{g}\left(  \mathrm{d}\left(\frac{\delta F}{\delta \psi}\right), \mathrm{d}\left(\frac{\delta G}{\delta \psi}\right) \right),
\end{align}
where $\widetilde{g}$ denotes the inverse of $g$, so the dynamics is given by
\begin{align}
\dt{F} = \{F,H\} - \frac{1}{\zeta}(F,H),
\end{align}
for some positive parameter $1/\zeta$ called the \emph{mobility}. This dissipation bracket provides an implementation of a linear mobility relation between the applied force and the relative velocity to the surrounding fluid.

In coordinates, this dissipation bracket is
\begin{align}
(F,G) =& \int \mathrm{d}^nx\mathrm{d}^ny\mathrm{d}^nz \ \psi  \Bigg[ \delta^{ij}\frac{\partial}{\partial x^i}\left( \frac{\delta F}{\delta \psi} \right)\frac{\partial}{\partial x^j}\left( \frac{\delta F}{\delta \psi} \right) \nonumber \\
& \qquad + \delta^{ij}\frac{\partial}{\partial y^i}\left( \frac{\delta F}{\delta \psi} \right)\frac{\partial}{\partial y^j}\left( \frac{\delta F}{\delta \psi} \right) + \delta^{ij}\frac{\partial}{\partial z^i}\left( \frac{\delta F}{\delta \psi} \right)\frac{\partial}{\partial z^j}\left( \frac{\delta F}{\delta \psi} \right)\Bigg] .
\end{align}
With the dissipation bracket, the Hamiltonian should now be interpreted as a free energy. As a generalisation to (\ref{freeE1}), consider free energies of the form
\begin{align}
H_s[\psi] = \int \mathrm{d}^nx\mathrm{d}^ny\mathrm{d}^nz \ \left(E(y,z)\psi +k_BT\psi \log \psi\right),
\end{align}
with variational derivative ${\delta H_s}/{\delta \psi} = E(y,z) + k_B T \left(\log\psi + 1\right)$. The force can be written as the divergence of a stress tensor, $\mathcal{F}_i = {\partial \sigma^j_{i}}/{\partial x^j}$ in (\ref{force2}), where
\begin{align}
\label{stress2}
\sigma^j_{i} = -3k_B T \int \mathrm{d}^ny\mathrm{d}^nz \ \psi + \int \mathrm{d}^ny\mathrm{d}^nz \ \psi \left(y^j\frac{\partial E}{\partial y^j} + z^j\frac{\partial E}{\partial z^j}\right) - \frac{\partial}{\partial x^k} \int \mathrm{d}^ny\mathrm{d}^nz \ \psi y^jy^k \frac{\partial E}{\partial z^k},
\end{align}
where we have used integration by parts.

The evolution equation for $\psi$ is obtained similarly:
\begin{align} \label{psieom2dis}
\dt{\psi} &+ \frac{\partial}{\partial x^i}\left(u^i \psi \right) + \frac{\partial}{\partial y^i}\left(\frac{\partial u^i}{\partial x^j}y^j \psi\right) + \frac{\partial}{\partial z^i}\left(\left[ \frac{\partial^2 u^i}{\partial x^j \partial x^k}y^j y^k + \frac{\partial u^i}{\partial x^j}z^j \right]\psi \right) \nonumber \\
&= \frac{1}{\zeta}\frac{\partial}{\partial y^i}\left(\delta^{ij}\frac{\partial E}{\partial y^j}\right) + \frac{1}{\zeta}\frac{\partial}{\partial z^i}\left(\delta^{ij}\frac{\partial E}{\partial z^j}\right) + \frac{k_B T}{\zeta}\left(  \nabla^2_x\psi + \nabla^2_y\psi + \nabla^2_z\psi\right),
\end{align}
where $\nabla^2_x = \delta^{ij}{\partial^2}/{\partial x^i \partial x^j}$, $\nabla^2_y = \delta^{ij}{\partial^2}/{\partial y^i \partial y^j}$, and $\nabla^2_z = \delta^{ij}{\partial^2}/{\partial z^i \partial z^j}$.

For general forms of $E(y,z)$, there is no known exact method to express the evolution of the relevant moments appearing in the stress tensor in terms of a finite set of moments in $\psi$, i.e. there is no finite closure in general. However, a finite closure exists for a quadratic energy of the form
\begin{align}\label{E2}
E(y,z) = \frac{\kappa_1}{2}\delta_{ij}y^iy^j + \frac{\kappa_2}{2}\delta_{ij}z^iz^j.
\end{align}
By writing $\sigma^{jk} = \sigma^{j}_i\delta^{ik}$ and using the notation $\left \langle \cdots \right\rangle = \int \mathrm{d}^ny\mathrm{d}^nz \ \psi \left(\cdots\right)$, we can write the stress tensor (\ref{stress2}) as
\begin{align} \label{stress2final}
\sigma^{jk} = -3k_BT\left \langle 1 \right\rangle\delta^{jk} + \kappa_1\left \langle y^j y^k \right\rangle + \kappa_2\left \langle z^j z^k \right\rangle - \kappa_2 \frac{\partial}{\partial x^l} \left \langle y^jy^lz^k \right\rangle .
\end{align}
The evolution equation (\ref{psieom2dis}) for $\psi$ then simplifies to
\begin{align} \label{psieom2final}
\dt{\psi} &+ \frac{\partial}{\partial x^i}\left(u^i \psi \right) + \frac{\partial}{\partial y^i}\left(\frac{\partial u^i}{\partial x^j}y^j \psi\right) + \frac{\partial}{\partial z^i}\left(\left[ \frac{\partial^2 u^i}{\partial x^j \partial x^k}y^j y^k + \frac{\partial u^i}{\partial x^j}z^j \right]\psi \right) \nonumber \\
&= \frac{1}{\zeta}\frac{\partial}{\partial y^i}\left(y^i \psi\right) + \frac{1}{\zeta}\frac{\partial}{\partial z^i}\left(z^i \psi\right) + \frac{k_B T}{\zeta}\left(  \nabla^2_x\psi + \nabla^2_y\psi + \nabla^2_z\psi\right) .
\end{align}
Note that
\begin{enumerate}
\item The stress tensor $\sigma^{jk}$ in (\ref{stress2final}) is not manifestly symmetric, but the system nonetheless conserves total fluid angular momentum. The torque of the suspension on the fluid is the antisymmetrised stress tensor: $\tau^{jk} = \sigma^{jk} - \sigma^{kj}$. For this specific form of $\sigma^{jk}$, the torque $\tau^{jk}$ is a divergence of a $3$-index tensor. This means the torque terms are not sources or sinks of angular momentum, but \emph{angular momentum fluxes} \cite{CondiffDahler64}. These extra angular momentum fluxes can be thought of as the transmission of angular momentum between adjacent fluid parcels across a common material surface, through the bending of chains that cross the surface. Since the chain has no inertia, the torques on the chain must balance at all times, which means the transmission of angular momentum is instantaneous.

There is a slightly different description of this asymmetric stress tensor in terms of generalised continuum systems that can possibly have internal angular momentum, such as polar fluids \cite{CondiffDahler64, Rosensweig85}. These generalised continuum systems were first considered by Cosserat and Cosserat \cite{Cosserat09book}. In such a system, the rate of change of total angular momentum in a Lagrangian control volume consists of sources such as body forces and body torques, as well as fluxes through the boundary of the control volume. There are two types of angular momentum fluxes, the first being the \emph{hydrodynamic angular momentum flux} $-\sigma \times \mathbf{x}$, which is the angular momentum flux generated by the fluid stress, and the second being a \emph{couple stress}, an angular momentum flux generated by the interaction of the internal degrees of freedom. The couple stress depends on the intrinsic properties of the fluid, in the same way the hydrodynamic stress depends on the thermodynamic equation of state, viscosity etc. of the fluid.
The hydrodynamic torque $\mathbf{x} \times \nabla \cdot \sigma$ can be separated into two parts
\begin{align}
\mathbf{x} \times \nabla \cdot \sigma = - \nabla \cdot \left( \sigma \times \mathbf{x} \right) - \tau,
\end{align}
where the first term $- \nabla \cdot \left( \sigma \times \mathbf{x} \right)$ is the divergence of the hydrodynamic angular momentum flux, and the second term $\tau$ is (up to conventions on the sign and factors of $2$) the antisymmetric, or pseudovector part of the hydrodynamic stress tensor $\sigma$. The hydrodynamic torque is not necessarily the divergence of the hydrodynamic angular momentum flux, and the \enquote{excess} term $\tau$ represents the exchange of angular momentum between the internal and fluid degrees of freedom.

In our system of Hookean $3$-bead chains, there are no body forces or body torques. Moreover, since the $3$-bead chains are immersed in a Stokes flow, they have no inertia and hence no internal angular momentum. Therefore, there must be an instantaneous torque balance -- the  \enquote{excess} term $\tau$, which is the antisymmetric part of the hydrodynamic stress tensor $\sigma$, must be balanced by the divergence of a couple stress. In our scenario, $\tau$ itself is also an exact divergence, so this is possible, and we can identify $\tau$ with (the negative of) the couple stress itself. This is to be contrasted with systems with internal \enquote{spin} degrees of freedom such as ferrofluids, where the couple stress alone cannot balance the \enquote{excess} term in (or antisymmetric part of) the hydrodynamic stress \cite{Shliomis74, Rosensweig85}.

The fact the system without the dissipation terms conserves linear and angular momentum is consistent with \emph{Noether's theorem}, since the energy $E(y,z)$ in (\ref{E2}) is translationally and rotationally invariant. Moreover, since the linear and angular momentum do not depend on $\psi$, they are not affected by the dissipation bracket, so the double bracket system as a whole conserves linear and angular momentum.

\item If we apply $\int \mathrm{d}^n z$ to the evolution equation (\ref{psieom2final}) of $\psi$, we will obtain the previous evolution equation (\ref{psi1eomfinal}) for the bead-spring pairs (up to numerical factors in the dissipation bracket). This is analogous to \enquote{forgetting} the middle bead.
\end{enumerate}

As we shall see in the next section, the moments appearing in the stress tensor (\ref{stress2final}) do not quite form a closed system, but this can be fixed by including a few extra moments.

\subsection{Closure of the 3-bead chain model with quadratic energy}\label{subsec:closure2}
By a (finite and exact) \emph{closure} we mean that there is a finite set of moments $\mu_1,\cdots,\mu_I$ of $\psi$, i.e. quantities of the form $\left\langle p(y,z)\right\rangle$ for a polynomial $p$, such that the evolution of each moment can be expressed using this set of moments, together with $\mathbf{u}$ and its spatial derivatives:
\begin{align}
\dt{\mu}_J = f_J\left( \mu_1,\cdots,\mu_I; \mathbf{u} \right) \quad \text{for $J = 1,\ldots,I$},
\end{align}
and such that $\mathbf{\sigma} = \mathbf{\sigma}\left( \mu_1,\cdots,\mu_I; \mathbf{u} \right) $ i.e. the stress tensor is completely described by these moments. 

We will explicitly show that the 3-bead chain model with quadratic energy, described by the equations (\ref{stress2final}, \ref{psieom2final},) has such a closure. Since $\left\langle 1 \right\rangle, \left\langle y^jy^k \right\rangle, \left\langle z^jz^k \right\rangle, \left\langle y^iy^lz^k \right\rangle$ all appear in the expression for $\sigma^{jk}$, let us look at their time derivatives:
\begin{align}
\dt{\left\langle 1 \right\rangle} &+ \frac{\partial}{\partial x^i}\left(u^i \left\langle 1 \right\rangle\right) = \frac{k_BT}{\zeta}\nabla^2_x \left\langle 1 \right\rangle, \\
\dt{\left\langle y^j y^k \right\rangle} &+ \frac{\partial}{\partial x^i}\left( u^i \left\langle y^j y^k \right\rangle\right) - \frac{\partial u^j}{\partial x^l}\left\langle y^l y^k \right\rangle - \frac{\partial u^k}{\partial x^l}\left\langle y^j y^l \right\rangle \nonumber \\
=& -\frac{2\kappa_1}{\zeta}\left\langle y^j y^k \right\rangle  + \frac{2k_BT}{\zeta}\delta^{jk}\left\langle 1 \right\rangle + \frac{k_BT}{\zeta}\nabla^2_x \left\langle y^j y^k \right\rangle, \\
\dt{\left\langle z^j z^k \right\rangle} &+ \frac{\partial}{\partial x^i}\left( u^i \left\langle z^j z^k \right\rangle\right) - \frac{\partial u^j}{\partial x^l}\left\langle z^l z^k \right\rangle - \frac{\partial u^k}{\partial x^l}\left\langle z^j z^l \right\rangle - \frac{\partial^2 u^j}{\partial x^m \partial x^l} \left\langle y^my^l z^k\right\rangle -  \frac{\partial^2 u^k}{\partial x^m \partial x^l} \left\langle y^my^l z^j\right\rangle \nonumber \\
=& -\frac{2\kappa_2}{\zeta}\left\langle z^j z^k \right\rangle  + \frac{2k_BT}{\zeta}\delta^{jk}\left\langle 1 \right\rangle + \frac{k_BT}{\zeta}\nabla^2_x \left\langle z^j z^k \right\rangle, \\
\dt{\left\langle y^jy^l z^k\right\rangle} &+ \frac{\partial}{\partial x^i}\left( u^i \left\langle y^jy^l z^k\right\rangle \right) - \frac{\partial u^j}{\partial x^m}\left\langle y^my^l z^k\right\rangle - \frac{\partial u^l}{\partial x^m}\left\langle y^jy^m z^k\right\rangle - \frac{\partial u^k}{\partial x^m}\left\langle y^jy^l z^m\right\rangle - \frac{\partial^2 u^k}{\partial x^m \partial x^n} \left\langle y^my^ny^jy^l\right\rangle \nonumber \\
=& -\frac{2\kappa_1}{\zeta}\left\langle y^j y^l z^k \right\rangle  - \frac{\kappa_2}{\zeta}\left\langle y^j y^l z^k \right\rangle + \frac{2k_BT}{\zeta}\delta^{jl}\left\langle z^k \right\rangle + \frac{k_BT}{\zeta}\nabla^2_x \left\langle y^j y^l z^k \right\rangle.
\end{align}
So the moments $\left\langle 1 \right\rangle, \left\langle y^jy^k \right\rangle, \left\langle z^jz^k \right\rangle, \left\langle y^iy^lz^k \right\rangle$ do not quite form a closed system, as their time evolution depends on the extra moments $\left\langle y^my^ny^jy^l\right\rangle$ and $\left\langle z^k \right\rangle$. However, we can close the system by including the time evolution of just these extra moments:
\begin{align}
\dt{\left\langle y^my^ny^jy^l\right\rangle} &+ \frac{\partial}{\partial x^i}\left( u^i \left\langle y^my^ny^jy^l\right\rangle \right) - \frac{\partial u^m}{\partial x^i}\left\langle y^iy^ny^jy^l\right\rangle - \frac{\partial u^n}{\partial x^i}\left\langle y^my^iy^jy^l\right\rangle - \frac{\partial u^j}{\partial x^i}\left\langle y^my^ny^iy^l\right\rangle - \frac{\partial u^l}{\partial x^i}\left\langle y^my^ny^jy^i\right\rangle \nonumber \\
=& -\frac{4\kappa_1}{\zeta}\left\langle y^m y^n y^j y^l \right\rangle  + \frac{k_BT}{\zeta}\nabla^2_x \left\langle y^m y^n y^j y^l \right\rangle \nonumber \\
& \qquad + \frac{2k_BT}{\zeta} \left( \delta^{mn} \left\langle y^j y^l \right\rangle +  \delta^{mj} \left\langle y^n y^l \right\rangle +  \delta^{ml} \left\langle y^n y^j \right\rangle +  \delta^{nj} \left\langle y^m y^l\right\rangle +  \delta^{nl} \left\langle y^m y^j \right\rangle + \delta^{jl} \left\langle y^m y^n \right\rangle  \right), \\
\dt{\left\langle z^k \right\rangle} &+ \frac{\partial}{\partial x^i}\left(u^i \left\langle z^k \right\rangle\right) - \frac{\partial u^k}{\partial x^i}\left\langle z^i \right\rangle -  \frac{\partial^2 u^k}{\partial x^m \partial x^n} \left\langle y^my^n\right\rangle = -\frac{\kappa_2}{\zeta}\left\langle z^k \right\rangle + \frac{k_BT}{\zeta}\nabla_x^2 \left\langle z^k \right\rangle.
\end{align}
Thus the moments $\left\langle 1 \right\rangle, \left\langle y^jy^k \right\rangle, \left\langle z^jz^k \right\rangle, \left\langle y^iy^lz^k \right\rangle, \left\langle y^my^ny^jy^l\right\rangle, \left\langle z^k \right\rangle$ form a closed system of evolution equations given the flow field $\mathbf{u}$, and are also sufficient to describe the stress tensor $\sigma^{jk}$ in (\ref{stress2final}) completely. It is possible to interpret these moments as tensor fields -- see appendix \textbf{\ref{app:vect}} for more details.

This is potentially a viable model for describing suspensions of molecules with stretching and bending degrees of freedom. This type of suspension can transmit angular momentum between adjacent fluid parcels.

\section{Explicit formulae for the multibead-chain model}\label{sec:TNMmodelcont}

Having investigated the $3$-bead chain as a concrete example, we now return to the general case. In section \textbf{\ref{sec:TNMmodel}} we have considered the action of vector fields $\mathbf{u}\in\mathrm{Vect}(M)$ on functions on the $N^{th}$ order tangent bundle $T^{(N)}M$, which can be thought of as the configuration space of small $(N+1)$-bead chains living on a manifold $M$. We also considered the semidirect product relevant to the advection of such multibead chains in an ideal compressible fluid, and obtained the following semidirect Lie--Poisson bracket.

If $(\mathbf{m},\rho,\psi)$ denote the fluid momentum density, fluid mass density, and the distribution function of multibead-chains in configuration space, respectively, then for functionals $F,G$, the Poisson bracket is
\begin{align}
\{F,G\}[\mathbf{m},\rho,\psi] =& -\left\langle \mathbf{m},\left[\frac{\delta F}{\delta \mathbf{m}},\frac{\delta G}{\delta \mathbf{m}}\right] \right\rangle - \left\langle \rho, \frac{\delta F}{\delta \mathbf{m}} \cdot \frac{\delta G}{\delta \rho} - \frac{\delta G}{\delta \mathbf{m}} \cdot \frac{\delta F}{\delta \rho} \right\rangle - \left\langle \psi, \left(\frac{\delta F}{\delta \mathbf{m}}\right)^{\#} \cdot \frac{\delta G}{\delta \psi} - \left(\frac{\delta G}{\delta \mathbf{m}}\right)^{\#} \cdot \frac{\delta F}{\delta \psi} \right\rangle, \nonumber \\
    =& \{F,G\}_{fluids} + \{F,G\}_{\psi},
\end{align}
where $\{F,G\}_{fluids}$ denotes the usual compressible fluid bracket, and the subbracket $\{F,G\}_{\psi}$ denotes the terms that explicitly involve $\psi$.

In the following we work exclusively in coordinates. As usual, let $x^i$ be coordinates on $M$, and $(x^i,y^i_{(1)},\ldots,y^i_{(N)})$ be the induced coordinates on $T^{(N)}M$. For brevity, we denote the standard volume element in the space of the \enquote{internal degrees of freedom} with respect to this coordinate system as
\begin{align}
\mathrm{d}\Gamma = \mathrm{d}^n y_{(1)} \cdots \mathrm{d}^n y_{(N)} .
\end{align}
To write the components of the complete lift $\mathbf{u}^{\#}$ in terms of the components of $\mathbf{u}(x)$ and its derivatives, we define the \emph{exponential} of an operator $X$ by the formal power series $\exp(X) = 1+X+X^2/2! + \ldots$, and write
\begin{align}
u^i(x + \delta x) = \exp\left( \delta x^j \frac{\partial}{\partial x^j} \right) u^i(x) = \sum_{n=0}^{\infty} \frac{1}{n!}\left( \delta x^j \frac{\partial}{\partial x^j} \right)^n u^i(x).
\end{align}
In the following we will only need a finite number of terms in the series, so we will never encounter situations where the convergence of the series is an issue.

The components of $\mathbf{u}^{\#}$ in the coordinate system $(x^i,y^i_{(1)},\ldots,y^i_{(N)})$ can be written as $(u^i,u^i_{(1)}\ldots,u^i_{(N)})$, where
\begin{align}\label{uia}
u^i_{(a)} &= \left(\frac{\partial}{\partial t}\right)^a\bigg\vert_{t=0} u^i \left( x + \sum_{b=1}^{N} \frac{t^b}{b!}y_{(b)} \right), \nonumber \\ 
&= \left(\frac{\partial}{\partial t}\right)^a\left[ \exp\left( \sum_{b=1}^{N} \frac{t^b}{b!}y^j_{(b)} \frac{\partial}{\partial x^j} \right) \right]_{t=0} \cdot u^i(x), \nonumber \\
&= \mathcal{P}_a\left( y_{(1)}, \ldots, y_{(N)}, \frac{\partial}{\partial x} \right) \cdot u^i(x) .
\end{align}
The expressions $\mathcal{P}_a$ are defined to be the $a^{th}$ $t$-derivative of the exponential operator in the second line of (\ref{uia}). We can equivalently characterise $\mathcal{P}_a$ in terms of a generating function 
\begin{align}\label{defPa}
\exp\left( \sum_{b=1}^{N} \frac{t^b}{b!}y^j_{(b)} \frac{\partial}{\partial x^j} \right) = 1 + \sum_{a=1}^N \frac{t^a}{a!} \mathcal{P}_a\left( y_{(1)}, \ldots, y_{(N)}, \frac{\partial}{\partial x} \right) + O(t^{N+1}).
\end{align}
By collecting powers of $t$ in the exponential series, it can be directly seen that $\mathcal{P}_a$ is a polynomial in $y_{(1)}, \ldots, y_{(N)}, {\partial}/{\partial x}$. The variables commute because ${\partial}/{\partial x}$ commutes with $y_{(1)}, \ldots, y_{(N)}$.
For brevity, we will write $\mathcal{P}_a = \mathcal{P}_a( y, \partial/\partial x)$ to indicate its dependence on the $y$-coordinates, and emphasise that it is a differential operator acting on quantities with $x$-dependence.

We can verify a few crucial properties of $\mathcal{P}_a$ by collecting powers of $t$ in (\ref{defPa}):
\begin{enumerate}[(i)]
\item $a$ is the highest power of ${\partial}/{\partial x}$ appearing $\mathcal{P}_a$. 
\item $1$ is the lowest power of ${\partial}/{\partial x}$ appearing $\mathcal{P}_a$. This means there is no \enquote{constant term} in $\mathcal{P}_a$ as a polynomial in ${\partial}/{\partial x}$, which is a fact we use later to express the force as the divergence of a stress tensor for some specific forms of the Hamiltonian functional.
\item If we define the order of each monomial $y_{(b)}$ to be $b$, then in this sense $\mathcal{P}_a$ is a polynomial of order $a$ in the variables $y_{(1)}, \ldots, y_{(N)}$ with coefficients in ${\partial}/{\partial x}$. More concretely, if we perform the scalings $y_{(b)} \mapsto \lambda^b y_{(b)}$ for some real $\lambda$, then $\mathcal{P}_a \mapsto \lambda^a \mathcal{P}_a$. In particular, $\mathcal{P}_a$ does not depend on $y_{(b)}$ for $b>a$. We will return to this property when we consider the problem of exact closure.
\end{enumerate}

We can thus write the $\psi$-subbracket of functionals $F, G$ on $(\mathbf{m},\rho,\psi)$ as
\begin{align}
\{F,G\}_{\psi}  =& - \int \mathrm{d}^n x\mathrm{d}\Gamma \ \psi \left[ \frac{\delta F}{\delta m_i}\frac{\partial}{\partial x^i}\left( \frac{\delta G}{\delta \psi} \right) - \frac{\delta G}{\delta m_i}\frac{\partial}{\partial x^i}\left( \frac{\delta F}{\delta \psi} \right)\right] \nonumber \\
& \qquad - \int \mathrm{d}^n x\mathrm{d}\Gamma \ \psi \sum_{a=1}^N \left(\mathcal{P}_a\left(y,\frac{\partial}{\partial x}\right) \cdot \frac{\delta F}{\delta m_i} \right) \left(\frac{\partial}{\partial y^i_{(a)}} \frac{\delta G}{\delta \psi} \right) \nonumber \\
& \qquad + \int \mathrm{d}^n x\mathrm{d}\Gamma \ \psi \sum_{a=1}^N \left(\mathcal{P}_a\left(y,\frac{\partial}{\partial x}\right) \cdot \frac{\delta G}{\delta m_i} \right) \left(\frac{\partial}{\partial y^i_{(a)}} \frac{\delta F}{\delta \psi} \right), \nonumber \\
=&  \int \mathrm{d}^n x\mathrm{d}\Gamma \ \frac{\delta F}{\delta m_i} \left[ -\psi\frac{\partial}{\partial x^i}\left( \frac{\delta G}{\delta \psi} \right) - \sum_{a=1}^N \mathcal{P}_a\left(y,-\frac{\partial}{\partial x}\right) \cdot \left(  \psi \frac{\partial}{\partial y^i_{(a)}}\left( \frac{\delta G}{\delta \psi} \right)\right)  \right] \nonumber \\
& \qquad + \int \mathrm{d}^n x\mathrm{d}\Gamma \ \frac{\delta F}{\delta \psi} \left[ -\frac{\partial}{\partial x^i}\left( \frac{\delta G}{\delta m_i} \psi \right) - \sum_{a=1}^N \frac{\partial}{\partial y^i_{(a)}}\left( \left( \mathcal{P}_a\left(y,\frac{\partial}{\partial x}\right) \cdot \frac{\delta G}{\delta m_i} \right) \psi \right) \right].
\end{align}
We have repeatedly integrated by parts to obtain the terms multiplying to $\delta F / \delta m_i$ and $\delta F / \delta \psi$, using the fact that $\mathcal{P}_a$ is a polynomial of degree $a$ in $\partial /\partial x$. Given a Hamiltonian functional $H$, we can now read off the extra body force on the fluid $\mathcal{F}_i$ and the time evolution of the distribution function $\psi$ as
\begin{align}\label{forceN}
\mathcal{F}_i & =  \int \mathrm{d}\Gamma \ \left[ -\psi\frac{\partial}{\partial x^i}\left( \frac{\delta H}{\delta \psi} \right) - \sum_{a=1}^N \mathcal{P}_a\left(y,-\frac{\partial}{\partial x}\right) \cdot \left(  \psi \frac{\partial}{\partial y^i_{(a)}}\left( \frac{\delta H}{\delta \psi} \right)\right)  \right], \\
\label{psiN}
\dt{\psi} & = -\frac{\partial}{\partial x^i}\left( \frac{\delta H}{\delta m_i} \psi \right) - \sum_{a=1}^N \frac{\partial}{\partial y^i_{(a)}}\left( \psi  \mathcal{P}_a\left(y,\frac{\partial}{\partial x}\right) \cdot \frac{\delta H}{\delta m_i} \right).
\end{align}

\subsection{Conservation of fluid momentum for the multibead-chain models}\label{subsec:momentumN}

We now specialise to Hamiltonians of the form
\begin{align}\label{energyN}
H[\mathbf{m},\rho,\psi] = H_{fluids}[\mathbf{m},\rho] + H_s[\psi], \quad H_s[\psi] = \int \mathrm{d}^nx \mathrm{d} \Gamma \left( E(y)\psi + k_B T \psi\log\psi \right),
\end{align}
where $E(y)$ is some function of the variables $y_{(1)}, \ldots, y_{(N)}$. The force on the fluid is given by (\ref{forceN}) as
\begin{align}
\mathcal{F}_i = - (N+1) k_B T \frac{\partial}{\partial x^i}\left( \int \mathrm{d}\Gamma \ \psi \right) - \int \mathrm{d}\Gamma \ \sum_{a=1}^N \mathcal{P}_a\left(y,-\frac{\partial}{\partial x}\right) \cdot \left(  \psi \frac{\partial E}{\partial y^i_{(a)}}\right).
\end{align}
For each $a$, the polynomial $\mathcal{P}_a( y, -\partial/\partial x)$ can be written as 
\begin{align}
\mathcal{P}_a\left( y, -\frac{\partial}{\partial x}\right) = \mathcal{Q}_a^j  \left( y, -\frac{\partial}{\partial x} \right) \left( - \frac{\partial}{\partial x^j}\right) = - y_{(a)}^j \frac{\partial}{\partial x^j} + O\left( \frac{\partial^2}{\partial x^2} \right),
\end{align}
where $\mathcal{Q}_a^j  ( y, -\partial/\partial x)$ is some other polynomial. This can be directly verified from the definition of $\mathcal{P}_a( y, -\partial/\partial x)$, as in (\ref{defPa}), using property (ii). The important property here is that $\mathcal{Q}_a^j  ( y, -\partial/\partial x) = y^j_{(a)} + O(\partial/\partial x)$, meaning that the only term in the operator $\mathcal{Q}_a^j( y, -\partial/\partial x)$ that does not take $x$-derivatives of its argument is $y^j_{(a)}$.

We can thus write the force as the divergence of a stress tensor, $\mathcal{F}_i = {\partial \sigma^j_{i}}/{\partial x^j}$, for
\begin{align}\label{stressN1}
\sigma^j_{i} = - (N+1) k_B T \delta^j_{i} \int \mathrm{d}\Gamma \ \psi + \int \mathrm{d}\Gamma \ \sum_{a=1}^N \mathcal{Q}^j_a\left(y,-\frac{\partial}{\partial x}\right) \cdot \left(  \psi \frac{\partial E}{\partial y^i_{(a)}}\right).
\end{align}
In particular, if $E(y)$ is a \emph{polynomial} in $y_{(1)}, \cdots, y_{(N)}$ , then $\sigma^j_{i}$ depends on a finite number of \emph{polynomial moments} of $\psi$, and their $x$-derivatives.

The fact that the force can be written as the divergence of a stress tensor implies the conservation of fluid linear momentum. For certain forms of the energy per chain $E(y)$, we can show that the antisymmetric part of the stress tensor can be written as the divergence of a $3$-index tensor, a condition that is sufficient to guarantee conservation of fluid angular momentum.

For example, suppose $E(y)$ depends on the variables $y^i_{(a)}$ only through their \emph{squares} $s_{(a)} = ({1}/{2})\delta_{ij}y^i_{(a)}y^j_{(a)}$, so that
\begin{align}
E(y) = E\left(s_{(1)},\ldots,s_{(N)}\right).
\end{align}
As ${\partial E}/{\partial y^i_{(a)}} = \delta_{ik}y^k_{(a)}{\partial E}/{\partial s_{(a)}}$, the stress tensor $\sigma^{jk} = \sigma^j_{i}\delta^{ik}$ can be written as
\begin{align}\label{stressN2}
\sigma^{jk} = - (N+1) k_B T \delta^{jk}\left( \int \mathrm{d}\Gamma \ \psi \right) + \sum_{a=1}^N\left(\int \mathrm{d}\Gamma \ y^j_{(a)}y^k_{(a)}\frac{\partial E}{\partial s_{(a)}}\psi\right) + \frac{\partial}{\partial x^l} T^{jlk},
\end{align}
where the tensor $T^{jlk}$ depends on $\psi$ and the explicit form of $E$. This automatically guarantees that the antisymmetric part of the stress tensor, which is the torque, is a divergence, i.e. can be considered as an angular momentum flux $A^{jlk} = T^{jlk}-T^{klj}$. In terms of generalised continuum systems, this term has a similar interpretation to the asymmetric stress for the $3$-bead chain suspensions considered in section \textbf{\ref{subsec:diss3bead}}, as a term that is balanced by the divergence of a couple stress.

These conservation laws correspond to the isometries of Euclidean space, by Noether's theorem, and on general Riemannian manifolds they have to be replaced with the isometries of the corresponding metric.

\subsection{Closure of the multibead-chain models without dissipation}\label{subsec:closureN1}

Having investigated the force term in (\ref{forceN}) and the relationship between the energy and the conservation laws in section \textbf{\ref{subsec:momentumN}}, we proceed to investigate the evolution equation (\ref{psiN}) for $\psi$ in the multibead-chain model without dissipation. If $E(y)$ is a polynomial in $y_{(1)}, \cdots, y_{(N)}$, then as we have seen in the last section, the stress tensor depends on a finite number of polynomial moments of $\psi$. We will show that, in this system, we can achieve closure with a finite set of polynomial moments.

We will start by defining the \emph{order} of a monomial in $y_{(1)}, \ldots, y_{(N)}$ to be the sum of its subscripts. For example, $y^i_{(1)}y^j_{(3)}$ has order $4$, while $y^i_{(5)}y^j_{(5)}y^k_{(5)}$ has order $15$. A simple way to keep track of the order is to attach a factor of $\lambda^a$ to each $y_{(a)}$, for $a = 1,\ldots,N$. The \emph{order} of a polynomial $p(y)$ in $y_{(1)}, \ldots, y_{(N)}$ is defined as the maximum order of its monomial terms. We also say a polynomial $p(y)$ is \emph{homogeneous of order $b$} if the scaling $y_{(a)}\mapsto \lambda^a y_{(a)}$ sends $p(y) \mapsto \lambda^b p(y)$. Property (iii) of $\mathcal{P}_a( y, \partial/\partial x)$ derived from (\ref{defPa}) means that $\mathcal{P}_{a}$ is homogeneous with order $a$ as a polynomial in $y$ with coefficients in ${\partial}/{\partial x}$.

Given a polynomial $p(y)$ which is homogeneous of order $b$, consider the time evolution of the corresponding polynomial moment $\left\langle p(y) \right\rangle = \int \mathrm{d}\Gamma \ p(y) \psi$. We will use angle brackets to denote integration over all $y$-space (or internal degrees of freedom) against $\psi$. This is explicitly given by
\begin{align}
\dt{\left\langle p(y) \right\rangle} &= - \frac{\partial}{\partial x^i}\left( u^i(x)  \left\langle p(y) \right\rangle \right) - \sum_{a=1}^N \int \mathrm{d}\Gamma \  p(y) \frac{\partial}{\partial y^i_{(a)}} \left( \psi \mathcal{P}_a\left(y,\frac{\partial}{\partial x}\right) \cdot u^i(x) \right), \nonumber \\
& = - \frac{\partial}{\partial x^i}\left( u^i(x)  \left\langle p(y) \right\rangle \right) + \sum_{a=1}^N \left\langle  \frac{\partial p(y)}{\partial y^i_{(a)}} \mathcal{P}_a\left(y,\frac{\partial}{\partial x}\right)  \right\rangle \cdot u^i(x),
\end{align}
where the factors of ${\partial}/{\partial x}$ in $\mathcal{P}_a$ act on $u^i$ (and not on $\psi$). If $p(y)$ is homogeneous with order $b$, then ${\partial p(y)}/{\partial y^i_{(a)}}$ is either identically zero, or a homogeneous polynomial of order $b-a$. Therefore
\begin{align*}
\frac{\partial p(y)}{\partial y^i_{(a)}} \mathcal{P}_a\left(y,\frac{\partial}{\partial x}\right)\cdot u^i(x)
\end{align*}
is a homogeneous polynomial in $y$ with order $b$, with coefficients in $u^i(x)$ and its spatial derivatives.

This immediately implies that, given the flow field $u^i(x)$, the collection of homogeneous polynomials of order $b$ form a closed system, for each $b$. If in addition, $E(y)$ is a polynomial in $y_{(1)}, \ldots, y_{(N)}$, then the stress tensor will depend on a finite number of polynomial moments of $\psi$ (and their $x$-derivatives). If the highest order of the polynomials that appear in the expression of the stress tensor is $b_{max}$, we can achieve a finite closure by simply collect up the evolution equations for all monomial moments with order at most $b_{max}$.

Note that closure can still be achieved if the time evolution of higher order polynomial moments depends on both equal and lower order polynomial moments; but if the time evolution of lower order polynomial moments depend on higher order polynomial moments, then an exact, finite closure is impossible in general. We will return to this point when we consider the multibead-chain model with dissipation.

\subsection{The multibead-chain models with dissipation}\label{subsec:closureN2}

Now we consider the multibead-chain model with a dissipation bracket. Again, consider the Riemannian metric $g$ on the configuration space of the multibead chain, with line element written as
\begin{align}
\mathrm{d}s^2 = \delta_{ij} \mathrm{d}x^i\mathrm{d}x^j + \sum_{a=1}^N \delta_{ij} \mathrm{d}y_{(a)}^i\mathrm{d}y_{(a)}^j ,
\end{align}
and let $\widetilde{g}$ denote the inverse of $g$. (See appendix \textbf{\ref{app:vect}} for a discussion for generalisations to arbitrary Riemannian manifolds.) As before, we define a dissipation bracket on functionals $F,G$ by
\begin{align}
(F,G) & = \int \mathrm{d}^n x \mathrm{d}\Gamma \ \psi \widetilde{g}\left( \mathrm{d}\left(\frac{\delta F}{\delta \psi}\right),  \mathrm{d}\left(\frac{\delta G}{\delta \psi}\right) \right), \nonumber \\
& = \int \mathrm{d}^n x \mathrm{d}\Gamma \ \psi \left[ \delta^{ij}\frac{\partial}{\partial x^i}\left(\frac{\delta F}{\delta \psi}\right)\frac{\partial}{\partial x^j}\left(\frac{\delta G}{\delta \psi}\right) + \sum_{a=1}^N  \delta^{ij}\frac{\partial}{\partial y_{(a)}^i}\left(\frac{\delta F}{\delta \psi}\right)\frac{\partial}{\partial y_{(a)}^j}\left(\frac{\delta G}{\delta \psi}\right) \right].
\end{align}
Given a Hamiltonian functional $H$, which should now be interpreted as the free energy, the time evolution of a functional $F$ is given by
\begin{align}
\dt{F} = \{F,H\} - \frac{1}{\zeta}(F,H)
\end{align}
for some parameter ${\zeta} >0$. This can be thought of as implementing a linear mobility relation with mobility ${1}/{\zeta}$. The expression for the stress tensor in terms of the functional derivatives of $H$ is not altered by this dissipation bracket. However, the time evolution of $\psi$ will be altered, which in turn affects the closure properties of the system.

Again, consider free energies of the form (\ref{energyN}), and further specialse to energy functions $E(y)$ which are \emph{quadratic} in each internal degree of freedom, i.e.
\begin{align}\label{EN}
E(y) = \sum_{a=1}^{N} \frac{\kappa_a}{2}\delta_{ij}y_{(a)}^iy_{(a)}^j ,
\end{align}
where $\kappa_a > 0$ are parameters that describe the stiffness of each normal mode. As seen in section \textbf{\ref{subsec:closureN1}}, this energy function produces a stress that conserves angular momentum.

The evolution equation for $\psi$ will be
\begin{align}\label{psiNeomfinal}
\dt{\psi} & + \frac{\partial}{\partial x^i}\left( u^i(x) \psi \right) + \sum_{a=1}^N \frac{\partial}{\partial y^i_{(a)}}\left(  \psi \mathcal{P}_a\left(y,\frac{\partial}{\partial x}\right) \cdot u^i(x) \right) \nonumber \\
& = \frac{k_B T}{\zeta}\nabla_{x}^2 \psi + \sum_{a=1}^N \left( \frac{1}{\zeta}\frac{\partial}{\partial y^i_{(a)}}\left(\delta^{ij}\frac{\partial E}{\partial y^j_{(a)}}\psi\right) + \frac{k_B T}{\zeta}\nabla_{y_{(a)}}^2 \psi \right), \nonumber \\
& = \frac{k_B T}{\zeta}\nabla_{x}^2 \psi + \sum_{a=1}^N \left( \frac{\kappa_a}{\zeta}\frac{\partial}{\partial y^i_{(a)}}\left(y^i_{(a)}\psi\right) + \frac{k_B T}{\zeta}\nabla_{y_{(a)}}^2 \psi \right),
\end{align}
where $\nabla^2_x = \delta^{ij}{\partial^2}/{\partial x^i \partial x^j}$, and $\nabla^2_{y_{(a)}} = \delta^{ij}{\partial^2}/{\partial y^i_{(a)} \partial y^j_{(a)}}$ for $a = 1,\ldots,N$. The terms due to the dissipation bracket have been collected to the right-hand side. These new terms can be interpreted as the drift terms and the diffusive terms in a \emph{Fokker--Planck equation}. The drift terms can be attributed to the internal energy term in the free energy $H_s$, while the diffusive terms can be attributed to the entropy term in $H_s$. 

Now note that for this specific form of the energy, the time evolution of a polynomial moment $\left\langle p(y) \right\rangle$ of order $b$ depends on polynomial moments of order $b$ or lower, as can be directly verified by integrating by parts -- the drift terms do not alter the order (or gives zero), while the diffusion terms always lower the order. So the multibead-chain model with linear dissipation has a finite closure for quadratic energy.

Explicitly, we see that the stress tensor is, using (\ref{stressN1},\ref{stressN2}):
\begin{align}\label{stressN3}
\sigma^{jk} =& - (N+1) k_B T \delta^{jk} \int \mathrm{d}\Gamma \ \psi + \sum_{a=1}^N\int \mathrm{d}\Gamma \ \kappa_a y^k_{(a)} \mathcal{Q}^j_a\left(y,-\frac{\partial}{\partial x}\right) \cdot \psi, \nonumber \\ 
=& - (N+1) k_B T \delta^{jk} \int \mathrm{d}\Gamma \ \psi + \sum_{a=1}^N\int \mathrm{d}\Gamma \ \kappa_a y^j_{(a)}y^k_{(a)} \psi
+ \frac{\partial}{\partial x^l} \left(\sum_{a=1}^N\int \mathrm{d}\Gamma \ y^k_{(a)}\mathcal{R}^{jl}_a \left(y,-\frac{\partial}{\partial x}\right) \cdot \psi \right),
\end{align}
where $\mathcal{R}^{jl}_a$ is defined in terms of $\mathcal{P}_a$ and $\mathcal{Q}^j_a$ by
\begin{align}
\mathcal{P}_a\left( y, -\frac{\partial}{\partial x}\right) & = \mathcal{Q}_a^j  \left( y, -\frac{\partial}{\partial x} \right) \left( - \frac{\partial}{\partial x^j}\right) = - y_{(a)}^j \frac{\partial}{\partial x^j} + \mathcal{R}^{jl}_a\left( y, -\frac{\partial}{\partial x}\right) \frac{\partial^2}{\partial x^j \partial x^l}.
\end{align}
$\mathcal{R}^{jl}_a$ can be checked to be a polynomial from the definition (\ref{defPa}) of $\mathcal{P}_a$.  Note that $\mathcal{R}^{jl}_a$ is identically zero for $a=1$, but nonzero for $a\geq 2$. As a polynomial in $y$, $\mathcal{R}^{jl}_a$ is homogeneous of order $a$ for $a\geq 2$. So the order of the polynomial moments needed to describe the stress tensor $\sigma^{jk}$ in (\ref{stressN3}) does not exceed $2N$. This can be used to find a concrete upper bound for the number of moments required to produce a closed system.

In fact, in the evolution equation (\ref{psiNeomfinal}) for $\psi$, the hydrodynamic terms, the drift terms and the $x$-diffusion term only couples monomial moments of order $b$ to monomial moments of the same order, while the $y_{(a)}$-diffusion term couples monomial moments of order $b$ to monomial moments of order $b-2a$ (or gives $0$), so it is sufficient to consider the even order monomial moments up to order $2N$. This is consistent with the Hookean $3$-bead chain considered in section \textbf{\ref{subsec:closure2}}, where the monomial moments required are precisely those of orders $0$,$2$ and $4$.

A slightly more general sufficient condition on $E(y)$ for finite closure to be possible is that
\begin{align}\label{admissible}
\text{for all $a = 1,\ldots,N$,} \ \frac{\partial E}{\partial y^j_{(a)}} \ \text{is a polynomial in $y$ of order less than or equal to $a$.}
\end{align}
This property is evidently satisfied for quadratic energies. We will call a polynomial $p(y)$ satisfying property (\ref{admissible}) \emph{admissible}, and other polynomials \emph{inadmissible}.

To see why an admissible $E(y)$ leads to a finite closure, observe that the drift term in the evolution equation for $\psi$ due to internal relaxation is
\begin{align*}
\frac{1}{\zeta} \sum_{a=1}^N  \frac{\partial}{\partial y^i_{(a)}}\left(\delta^{ij}\frac{\partial E}{\partial y^j_{(a)}}\psi\right) .
\end{align*}
As discussed above, the hydrodynamic and diffusion terms only couple polynomial moments to other moments of equal or lower order. When $E(y)$ is admissible, the expression above suggests that a similar situation prevails -- the drift terms also only couple moments to other moments of equal or lower order. This means an \emph{order-counting} argument can be used to demonstrate the possibility of an exact, finite closure.

In more detail, since we are only considering internal energies $E(y)$ that are polynomials, the stress tensor can be described by a finite number of moments. If the maximum order of the required moments is $b_{max}$, we can show that, when $E(y)$ is admissible, collecting all monomials or order less than or equal to $b_{max}$ will result in a closed system, by the same argument used to demonstrate finite closure for quadratic energies of the form (\ref{EN}).

While $E(y)$ being admissible is nominally more general than $E(y)$ being of the Hookean-like form (\ref{EN}), we have the following highly constraining result:
\begin{prop*}
$E(y)$ is admissible (in the sense of (\ref{admissible})) if and only if \[E(y) = \sum_{a=1}^N\left( \lambda_{(a)i}y^i_{(a)} + \frac{1}{2}\kappa_{(a)ij}y^i_{(a)}y^j_{(a)} \right) + C,\] where $C, \lambda_{(a)i}, \kappa_{(a)ij}$ are constants, with  $\kappa_{(a)ij} = \kappa_{(a)ji}$.
\end{prop*}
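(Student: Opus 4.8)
The plan is to dispatch the easy \emph{if} direction by direct differentiation, and to obtain the substantive \emph{only if} direction by an order-counting argument organised around the grading by order introduced just above the proposition.

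For the \emph{if} direction I would simply differentiate the claimed expression: using $\kappa_{(a)ij}=\kappa_{(a)ji}$ one gets $\partial E/\partial y^j_{(a)} = \lambda_{(a)j} + \kappa_{(a)jk}y^k_{(a)}$. The constant term has order $0$ and the linear term, involving a single factor of $y_{(a)}$, has order $a$; hence $\partial E/\partial y^j_{(a)}$ has order $\le a$ for every $a$, so $E$ is admissible. This is immediate and carries no difficulty.

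For the \emph{only if} direction I would first note that admissibility presupposes every first partial $\partial E/\partial y^j_{(a)}$ is a polynomial, so $E$ itself is a polynomial (as is assumed throughout this section), and decompose it into homogeneous pieces $E = \sum_b E_b$, where $E_b$ collects the monomials of order $b$. The key structural fact is that the operator $\partial/\partial y^j_{(a)}$ lowers the order by exactly $a$, so $\partial E_b/\partial y^j_{(a)}$ is homogeneous of order $b-a$. Because these lie in distinct order-graded pieces, no cancellation can occur between different values of $b$; moreover, for a fixed differentiation variable $y^j_{(a)}$ distinct monomials of $E_b$ have distinct nonzero images under $\partial/\partial y^j_{(a)}$ (the original monomial is recovered from its reduced image), so $\partial E_b/\partial y^j_{(a)}$ vanishes precisely when no monomial of $E_b$ contains $y^j_{(a)}$. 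Admissibility therefore forces $b \le 2a$ whenever some order-$b$ monomial of $E$ involves the variable $y_{(a)}$.

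Then I would run the order count on a single monomial $m$ of ordinary degree $d$ and order $b$. Listing the subscripts of its factors with multiplicity and letting $a_{\min}$ be the smallest of them, admissibility gives $b \le 2a_{\min}$, while by construction $b \ge d\,a_{\min}$ since every subscript is at least $a_{\min}$. As $a_{\min}\ge 1$, these combine to $d \le 2$. Finally, for $d=2$ a monomial $y^i_{(a)}y^j_{(a')}$ with $a\le a'$ has $a_{\min}=a$ and order $a+a'$, so $a+a'\le 2a$ forces $a'=a$, leaving only same-subscript quadratics. Collecting the surviving degree-$0$, degree-$1$, and diagonal degree-$2$ monomials yields exactly the stated form, the symmetry of $\kappa_{(a)ij}$ being a harmless normalisation of coefficients. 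I expect the only genuine subtlety to be the cancellation worry, which is precisely what the order-grading resolves: since $\partial/\partial y^j_{(a)}$ shifts order by a fixed amount and is injective on monomials containing $y^j_{(a)}$, an upper bound on the order of $\partial E/\partial y^j_{(a)}$ transfers cleanly to each homogeneous component of $E$.
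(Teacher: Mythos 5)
Your proof is correct, and it is organised genuinely differently from the paper's. The paper's pivotal step is a multiplicativity lemma --- if $p$ is inadmissible and $q$ is not identically zero then $pq$ is inadmissible --- which is used to rule out all cross terms at once (any monomial mixing $y_{(a)}$ and $y_{(b)}$ with $a\neq b$ is a multiple of the inadmissible cross quadratic $\tau_{ij}y^i_{(a)}y^j_{(b)}$); this reduces an admissible $E$ to a sum $\sum_a E_a\left(y_{(a)}\right)$ of single-block polynomials, after which the linear-quadratic form of each $E_a$ is a routine one-block check. You never prove or invoke multiplicativity: instead you decompose $E$ into order-homogeneous pieces, observe that $\partial/\partial y^j_{(a)}$ shifts order by exactly $a$ and is injective on monomials, and extract the two inequalities $b\le 2a_{\min}$ and $b\ge d\,a_{\min}$, which give the degree bound $d\le 2$ in one stroke and then eliminate cross quadratics. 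The trade-off is instructive: your route deals explicitly with the cancellation issue that the paper leaves tacit --- the paper's inference that an admissible $E$ \enquote{cannot contain any cross terms} silently requires precisely your observation that derivatives of distinct monomials cannot cancel, since the lemma as stated applies to products $pq$, not to polynomials that merely contain an inadmissible monomial --- so your write-up is more complete on that point; the paper's lemma, on the other hand, is a cleaner structural statement that disposes of cross terms of arbitrary degree without any counting and confines all remaining computation to polynomials in a single $y_{(a)}$.
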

\begin{proof}
Suppose $p(y)$ is inadmissible, and $q(y)$ is an arbitrary polynomial that is not identically zero. We claim that $p$ being inadmissible implies that $pq$ is inadmissible. To see this, suppose that
\[\frac{\partial p}{\partial y^j_{(a)}} \ \text{has order strictly greater than} \ a , \ \text{for some $a$.}\]
Then
\[\frac{\partial }{\partial y^j_{(a)}}\left(pq\right) = \frac{\partial p}{\partial y^j_{(a)}}q + p\frac{\partial q}{\partial y^j_{(a)}}\]
has order strictly greater than $a$, since $q$ is not identically zero, and ${\partial p}/{\partial y^j_{(a)}}$ has order strictly greater than $a$.

Now we can use this to show that all polynomials containing cross terms are inadmissible, since the \enquote{lowest} cross term is inadmissible. More precisely, if $p(y) = \tau_{ij}y^i_{(a)}y^j_{(b)}$ for $a \neq b$, where $\tau_{ij}$ are constants which are not identically zero, then without loss of generality we can assume $a<b$, and thus
\[ \frac{\partial}{\partial y^k_{(a)}}\left( \tau_{ij}y^i_{(a)}y^j_{(b)} \right) = \tau_{kj}y^j_{(b)} \ \text{has order $b>a$,} \]
so this $p(y)$ is inadmissible. Therefore if $E(y)$ is admissible, it cannot contain any cross terms, and hence can be written in the form
\[E\left(y_{(1)},\ldots,y_{(N)}\right) = \sum_{a=1}^N E_a\left( y_{(a)} \right)\]
for polynomials $E_a\left( y_{(a)} \right)$. Now admissibility is equivalent to
\[ \frac{\partial E(y)}{\partial y^j_{(a)}} = \frac{\partial E_a\left( y_{(a)} \right)}{\partial y^j_{(a)}} \ \text{has order less than or equal to $a$ for all $a=1,\ldots,N$}. \]
This can now be routinely checked to be equivalent to $E_a$ being linear-quadratic in $y_{(a)}$. The constants of integration for each $E_a$ can be collected together into a single constant $C$.
\end{proof}

This means that, among all the admissible (in the sense of (\ref{admissible})) choices of energy functions $E(y)$, the Hookean-like energy functions considered in (\ref{EN}) are the only ones that are rotationally invariant. Many physically plausible energy functions, such as that for a non-Hookean bead-spring pair, are not admissible. For such choices, we cannot achieve an exact, finite closure by an order-counting argument in the multibead-chain model with dissipation. In these cases, unless a new argument is found, we can only hope to find some reasonable approximations that would give \emph{approximate closure}, similar to the \emph{Peterlin approximation} for a bead-spring pair with a non-Hookean spring \cite{Bird87, Renardy00}.

\section{Conclusion}\label{sec:conclusion}

The main focus of this paper has been on the formulation and analysis of models for multibead-chain suspensions in an ideal fluid. We have modelled the suspension as a double bracket system, with of a Hamiltonian part described by a non-canonical Poisson bracket, and a dissipation bracket to account for the resistive and diffusive effects. The resulting system describes the coupling of ideal compressible hydrodynamics to the distribution function $\psi(x,y)$ of multibead-chains in configuration space, where $x$ and $y$ are the macroscopic and internal degrees of freedom respectively. This description is valid when the macroscopic lengthscales are sufficiently large, and the microscopic lengthscales of the individual chains are sufficiently small, so that the chains are advected by a Stokes flow while the Newtonian viscous stress on the fluid is negligible. If the fluid domain is a manifold $M$, then the appropriate configuration space for an $(N+1)$-bead chain is the $N^{th}$ order tangent bundle $T^{(N)}M$. The Hamiltonian part of the system, which consists of the advection of the beads in the multibead-chain by the fluid as Lagrangian markers, can be described by a Poisson bracket, which has been constructed using the machinery of the semidirect product Lie--Poisson formulation. The dissipative part of the system, consisting of the effects of internal relaxation and diffusion, can be effectively captured by a metric dissipation bracket. For suitable choices of the free energy, the non-Hamiltonian terms can be considered as a Fokker--Planck diffusion operator on the distribution function $\psi(x,y)$.

One of the main advantages for such a double bracket formulation is that, given a Hamiltonain functional (or free energy) that can be written in the form $H = H_{fluids}[\mathbf{m},\rho] + H_s[\psi]$, where $H_{fluids}$ is the usual ideal fluid Hamiltonian, the elastic body force exerted by the chains on the fluid depends on $\delta H/\delta \psi$ only and can be calculated from a direct manipulation of the Poisson bracket. In this manner we have obtained explicit expressions for the particle-contributed stress tensor for a wide range of free energies in the multibead-chain model. We found that the stress tensor is generically asymmetric, but nonetheless for reasonable choices of the internal energy $E(y)$, the antisymmetric part of the stress tensor can be written as the divergence of a $3$-index tensor, which we interpret as an angular momentum flux, as in \cite{CondiffDahler64, Cosserat09book}. This effect is absent in the bead-spring pair models, and any model that assumes a linear flow around the multibead-chain is in fact equivalent to modelling a number of bead-spring pairs with a common centre.

A major concern in using a distribution function description for the multibead-chains is that the evolution equation of $\psi(x,y)$ involves both the macroscopic ($x$) and internal ($y$) degrees of freedom. In other words, it is a partial differential equation in a large number of dimensions, which is computationally expensive to solve. However, the particle-contributed stress typically depends only on some statistical properties of the internal degrees of freedom, i.e. some $y$-integrals of the distribution function $\psi(x,y)$. Therefore it is desirable to find a finite set of these $y$-integrals, such that:
\begin{enumerate}
\item The stress tensor is completely described by these $y$-integrals.
\item These $y$-integrals form a closed system of evolution equations, given the fluid velocity field $\mathbf{u}(x,t)$ and its spatial derivatives. 
\end{enumerate}
When this is possible, we can model the multibead-chain suspension by evolving a number of macroscopic fields (depending on $x$ only), instead of having to consider both the macroscopic ($x$) and internal ($y$) degrees of freedom. We have shown that, if we choose a certain quadratic form for the internal energy $E(y)$, such a finite closure is possible for an arbitrary multibead-chain. These can be considered as analogues of the upper-convected Maxwell model for bead-spring pairs.

Within the framework of the distribution function approach, exact closure is a rare property that is only satisfied for rather restrictive choices of the internal energy. For other choices of the internal energy, we have to evolve the full distribution $\psi(x,y)$, which is much more computationally expensive than evolving $x$-dependent fields. The closure problem is clearly visible in the bead-spring pair model for non-Hookean springs, and our work heavily suggests that an entirely analogous obstacle exists for multibead-chain models.

A parallel line of development, which resolves this particular problem, would be to start with \emph{internal state variables}, which are phenomenological $x$-dependent fields that are assumed to completely characterise the internal structure of the complex fluids, inasmuch as the macroscopic fluid properties, such as the particle-contributed stress, are concerned. Mathematically, these internal state variables are typically sections of some naturally constructed fibre bundles, e.g. tensor fields, and the hydrodynamic part of the evolution of such variables would be the effect of the fluid flow as an infinitesimal coordinate transformation. For example, if the conformation tensor $C^{jk}(x)$ is chosen to be the internal state variable, the appropriate material derivative would be the familiar \emph{upper-convected derivative}, or for more general tensor fields the \emph{Lie derivative}. Analogous semidirect product Lie--Poisson structures exists for such a description, and have been extensively studied in \cite{Marsden84a,Marsden84b}, with various applications to complex fluids \cite{Grmela88, Grmela89, Beris90, Edwards90, Beris94book, Mackay19} and to magnetohydrodynamics \cite{Morrison80, Marsden84a}.

However, in this description, we have different problem to that of the distribution function approach -- the equations are easy to solve, but difficult to write down. Not only is it difficult to connect the postulated form of the free energy to a microscopic toy model of the suspended bodies, the imposition of the mobility relations has also become much more arbitrary. Unlike in the distribution function approach, there is no obvious way to convert a microscopic toy model for the suspended bodies to the relaxation and dissipation terms for the internal state variables. In other words, the appropriate form of the dissipation bracket has to be guessed. The form of dissipation bracket can be constrained by requiring material invariance and the satisfaction of thermodynamic principles, which rules out some choices as unphysical, but this does not fundamentally eliminate the arbitrariness of such a choice. Another perhaps more glaring source of arbitrariness is the choice of the internal state variables -- how many of them do we assume to be sufficient to describe the internal state phenomenologically?

The main point here is that the evolution equations for the distribution function are easy to write down but difficult to solve, which is essentially the opposite situation with the internal state variable approach. To elaborate, note that the microphysics of the suspended body is directly implemented into the evolution equation of the distribution function $\psi(x,y)$, which gives a complete (ensemble) description of the suspended bodies. The internal relaxation force is simply a gradient of the internal energy, and the diffusive Brownian force can be expediently captured in terms of the Boltzmann entropy of the distribution function $\psi(x,y)$ \cite{Grmela88}. The mobility relation is in principle determined by the internal structure of the suspended body, although we have only worked with linear approximations. The resulting equation has a clear physical interpretation as a Fokker--Planck equation, consisting of the hydrodynamic drift, internal relaxation and diffusion of the distribution function $\psi(x,y)$. Thus there is no room for arbitrariness -- once the internal microphysics of the suspended body is determined, we can immediately write down the governing equations for the fluid suspension based on this information. Moreover, when an exact closure is possible, the distribution function approach reduces to the internal state variable approach exactly, as far as the macroscopic behaviour of the fluid suspension is concerned. This process also selects the appropriate internal state variables that are necessary for a complete macroscopic description of the suspension.

Having made the case for the complementary nature of the two approaches, we note however that the distribution function approach has been largely abandoned since \cite{Grmela88,Grmela89}. We hope that this paper will revive the interest in modelling fluid suspensions with distribution functions, in particular as part of a \emph{combined approach}, where one uses the distribution function as a starting point to implement the microphysics of the suspended bodies, then proceed to make approximations to obtain effective equations in terms of internal state variables. This approach is largely unexplored beyond the bead-spring pair models, and we believe that, at the very least, this can serve as a way to inform the phenomenological approach based on internal state variables.

\section*{Ackowledgements}

The author wishes to thank Paul Dellar for bringing to our attention the connection between the asymmetric stress tensors encountered in the multibead-chain suspensions and the couple stresses in a generalised continuum system, among his numerous comments and suggestions. This work was supported by the Mathematical Institute, University of Oxford, which played no other role in the research, or in the preparation and submission of the manuscript.

\appendix
\section{Proof sketch for the homomorphism property of complete lifts to $T^{(N)}M$}\label{app:diff}

In this appendix we sketch a proof for the homomorphism property of complete lifts of vector fields $\mathbf{u}\in\mathrm{Vect}(M)$ to $\mathbf{u}^{\#}\in\mathrm{Vect}(T^{(N)}M)$ (\ref{homo1},\ref{homoN}). For more details, see \cite{Yano73book}.

Let $M$ be a manifold.  Define a \emph{diffeomorphism} $\varphi: M\rightarrow M$ to be a smooth bijective map from $M$ to $M$ with a smooth inverse. In the language of tensor calculus, a diffeomorphism $\varphi$ can be thought of as a coordinate transformation $x^i \mapsto \varphi^i(x)$. The collection of diffeomorphisms of a manifold $M$ is denoted by $\mathrm{Diff}(M)$, which can be thought of as an infinite-dimensional Lie group \cite{Khesin08,EbinMarsden70}.

Let $\varphi,\psi$ be diffeomorphisms from $M$ to itself. Consider a path $\alpha:I\rightarrow M$, where $I$ is some closed interval containing $0$ in its interior. Then the diffeomorphism $\varphi$ can act on paths $\alpha(t)$ by
\begin{align}
\varphi^{\#}: \alpha(t) \mapsto (\varphi \circ \alpha)(t).
\end{align}
Since the composition of functions is associative, this action satisfies the homomorphism property:
\begin{align}
\left(\varphi \circ \psi\right)^{\#}(\alpha) = \varphi^{\#}\left(\psi^{\#}(\alpha)\right) = \left(\varphi\circ\psi\circ\alpha\right)(t).
\end{align}
Now consider equivalence classes of paths under the equivalence relation $\stackrel{(N)}{\sim}$ defined by
\begin{align}
\alpha(t) \stackrel{(N)}{\sim} \beta(t) \Leftrightarrow \alpha(0)=\beta(0), \frac{\mathrm{d}^a}{\mathrm{d}t^a}\alpha(t)\bigg\rvert_{t=0} = \frac{\mathrm{d}^a}{\mathrm{d}t^a}\beta(t)\bigg\rvert_{t=0} \ \text{for $a=1,\cdots,N$.}
\end{align}
The equivalence relation $\stackrel{(N)}{\sim}$ identifies paths that have the same Taylor series up to order $N$ at $t=0$. Now suppose we have a path $\alpha(t)$ with coordinate expression
\begin{align}
\alpha^i(t) = x^i + \sum_{a=1}^N y^i_{(a)} \frac{t^a}{a!} + O(t^{N+1}).
\end{align}
If $\varphi^i(x)$ is the coordinate expression for $\varphi$, then comparing powers of $t$ gives coordinates of the path $\left(\varphi^{\#}(\alpha)\right)(t)$ as
\begin{align}
\left(\varphi^{\#}(\alpha)\right)^i (t) = \varphi^i(\alpha(t)) = \varphi^i(\alpha(0)) + \sum_{a=1}^N \frac{t^a}{a!} \left(\frac{\mathrm{d}^a}{\mathrm{d}t^a}\varphi^i(\alpha(t))\bigg\rvert_{t=0}\right) + O(t^{N+1}).
\end{align}
Note that
\begin{align}
\frac{\mathrm{d}^a}{\mathrm{d}t^a}\varphi^i\left( x^i + \sum_{a=1}^N y^i_{(a)} \frac{t^a}{a!} + O(t^{N+1}) \right)\bigg\rvert_{t=0}
\end{align}
does not depend on the $O(t^{N+1})$ terms when $a \leq N$.

So $\varphi^{\#}$ descends to a well-defined map on equivalence classes of paths under $\stackrel{(N)}{\sim}$, i.e. equivalence classes of paths with the same Taylor series at $t=0$ up to order $N$. We will also denote this map by $\varphi^{\#}$.

Thus we have obtained an association
\begin{align}
\varphi \in \mathrm{Diff}(M) \mapsto \varphi^{\#} \in \mathrm{Diff}(T^{(N)}M),
\end{align}
such that $\left(\varphi \circ \psi\right)^{\#} = \varphi^{\#}\circ\psi^{\#}$, i.e. it is a \emph{group homomorphism}.
By differentiating this correspondence, i.e. by writing
\begin{align}
\varphi^i(x) &= x^i + su^i(x) + O(s^2), \\
\psi^i(x) & = x^i + rv^i(x) + O(r^2),
\end{align}
and comparing the two sides of the expression
\begin{align}
\frac{\partial^2}{\partial s\partial r}\bigg\rvert_{s=0, r=0} \left(\varphi \circ\psi\circ\varphi^{-1}\right)^{\#} = \frac{\partial^2}{\partial s\partial r}\bigg\rvert_{s=0, r=0} \left(\varphi^{\#} \circ\psi^{\#}\circ\left(\varphi^{-1}\right)^{\#}\right),
\end{align}
we obtain a homomorphism of Lie algebras
\begin{align}
\mathbf{u}\in \mathrm{Vect}(M) \mapsto \mathbf{u}^{\#}\in\mathrm{Vect}\left(T^{(N)}M\right),
\end{align}
which is precisely the complete lift given in (\ref{homoN}).

\section{Vector bundle structure and metrics on $T^{(N)}M$}\label{app:vect}

This appendix addresses the geometric interpretation of some of the expressions we have encountered. We will assume some familiarity with vector bundles and Riemannian geometry. A detailed exposition can be found in, for example, \cite{Tu17book, Jost06book}.

\subsection{Vector bundle structure on $T^{(N)}M$}\label{subapp:vectTNM}

Let $T^{(N)}M$ be the $N^{th}$ order tangent bundle of the manifold $M$. As before, if $x^i$ is a coordinate system on $M$, it induces a coordinate system $(x^i,y^i_{(a)})$ on $T^{(N)}M$, where the fibre coordinates $y^i_{(a)}, \ a = 1,\cdots,N$ denote the $a^{th}$ derivatives of the equivalence class of paths attached to $x$. 

Given a coordinate transformation $x^i \mapsto {\tilde{x}}^i$, the induced coordinate transformation $(x^i, y^i_{(a)}) \mapsto (\tilde{x}^i, \tilde{y}^i_{(a)})$ is not linear for $a\geq 2$. For example,
\begin{align}
\tilde{y}^i_{(2)} = \frac{\partial \tilde{x}^i}{\partial x^j} y^j_{(2)} + \frac{\partial^2 \tilde{x}^i}{\partial x^j \partial x^k} y^j_{(1)}y^k_{(1)}.
\end{align}
(See appendix \textbf{\ref{app:diff}}.) In particular, the fibre coordinates do not transform like vectors under an arbitrary coordinate transformation.

In previous sections we have considered moments of the distribution function of the form
\begin{align}\label{appBexample}
\int \mathrm{d} \Gamma \ y^i_{(a)} \psi(x,y) \quad \text{for $a\geq 2$ (say)}, 
\end{align}
where $\mathrm{d}\Gamma = \mathrm{d}^n y_{(1)} \cdots \mathrm{d}^n y_{(N)}$ as before. Since the fibre coordinates do not transform linearly, we cannot not make invariant sense of linear operations on the fibre coordinates, such as sums and integrals.

In more geometric terms, the transition maps between two overlapping charts $(x^i, y^i_{(a)}) \mapsto (\tilde{x}^i, \tilde{y}^i_{(a)})$ of $T^{(N)}M$ are not linear in the fibre coordinates for $N\geq2$. These charts therefore do not give $T^{(N)}M$ the \emph{structure of a vector bundle}.

However, we can produce a noncanonical isomorphism from $T^{(N)}M$ to $(T\oplus\cdots\oplus T)M$ ($N$ times) with the help of a \emph{metric}, or more generally a \emph{connection} on the tangent bundle \cite{Dodson82a,Dodson82b}. We will sketch the construction as follows:

Given a metric $g$ on $M$, we can consider its \emph{Riemannian connection} $\nabla$ (also known as the \emph{Levi-Civita connection}) , which (roughly speaking) defines the infinitesimal parallel transport of vector field $Y=Y^i{\partial}/{\partial x^i}$ along vector field $X = X^i{\partial}/{\partial x^i}$ to be $\nabla_X Y$, given in coordinates by
\begin{align}
\nabla_X Y = X^j\left(\frac{\partial Y^i}{\partial x^j} + \Gamma^i_{jk} Y^k\right)\frac{\partial}{\partial x^i},
\end{align}
where $\Gamma^i_{jk}$ are the usual \emph{Christoffel symbols} of the connection, defined by
\begin{align}
\nabla_{\frac{\partial}{\partial x^j}}\frac{\partial}{\partial x^k} = \Gamma^i_{jk}\frac{\partial}{\partial x^i}.
\end{align}
The following construction will work for any connection on $TM$, not necessarily a Riemannian connection. General connections are considered in appendix \textbf{\ref{subapp:metricE}}.

Let $\alpha(t)$ be a curve on $M$, with $v(t) = \mathrm{d}\alpha(t) / \mathrm{d}t$ its velocity vector at time $t$, which is a vector field along the curve $\alpha(t)$. If $w(t)$ is another vector field along the curve $\alpha(t)$, we can define its \emph{covariant derivative}, conventionally written as $Dw / \mathrm{d}t$, along the curve $\alpha(t)$ as follows. If $W(x)$ is a vector field on $M$, such that $W(\alpha(t))=w(t)$, i.e. $W$ agrees with $w$ along the curve $\alpha(t)$, then we define
\begin{align}
\frac{Dw}{\mathrm{d}t}(t) = \nabla_{v(t)} W \left( \alpha(t) \right).
\end{align}
Applying the chain rule $({\partial W^i}/{\partial x^j})({\partial \alpha^j}/{\partial t}) = {\partial w^i}/{\partial t}$ gives the following coordinate expression for the covariant derivative:
\begin{align}
\left(\frac{Dw}{\mathrm{d}t}\right)^i = \frac{\partial w^i}{\partial t} + \Gamma^i_{jk}v^j(t)w^k(t).
\end{align}
The components $({Dw}/{\mathrm{d}t})^i$ transform like a vector under a change of coordinates, as long as the Christoffel symbols $\Gamma^i_{jk}$ are transformed appropriately.

In particular, we can apply the covariant derivative $D/\mathrm{d}t$ iteratively on the velocity vector field $v(t) = \mathrm{d}\alpha(t) / \mathrm{d}t$ along the curve $\alpha(t)$ to obtain the \emph{covariant acceleration} $Dv / \mathrm{d}t$ and higher order covariant derivatives. We will show that we can write the first $N$ coordinate derivatives of the path $\alpha$ at $t=0$ in terms of the covariant derivatives $v(0), Dv / \mathrm{d}t(0),\cdots,(D^{N-1}v/dt^{N-1})(0)$ evaluated at $t=0$, which are genuine tangent vectors at $\alpha(0)$. This will produce an isomorphism from $T^{(N)}M$ to $(T\oplus\cdots\oplus T)M$ ($N$ times), and the latter is a vector bundle by construction.

The isomorphism can be described explicitly in coordinates as follows. Let $\alpha(t)$ be a path, with coordinates
\begin{align*}
\alpha^i(t) = x^i + \sum_{a=1}^N y^i_{(a)} \frac{t^a}{a!} + O(t^{N+1}).
\end{align*}
Its velocity vector field $v(t) = \mathrm{d}\alpha(t) / \mathrm{d}t$ will have coordinates
\begin{align}
v^i(t) = y^i_{(1)} + \sum_{a=1}^{N-1} y^i_{(a)} \frac{t^a}{a!} + O(t^{N}).
\end{align}
Then by iteratively applying $D / \mathrm{d}t$ on $v(t)$, we have:
\begin{align}
v(0)^i &= y_{(1)}^i, \nonumber \\
\left(\frac{Dv}{\mathrm{d}t}(0) \right)^i &= y_{(2)}^i + \Gamma^i_{jk}y_{(1)}^jy_{(1)}^k,  \nonumber \\
\left(\frac{D^2v}{\mathrm{d}t^2}(0) \right)^i &= y_{(3)}^i + \Gamma^i_{jk}y_{(1)}^j\left( y_{(2)}^k + \Gamma^k_{lm}y_{(1)}^ly_{(1)}^m \right), \nonumber \\
 & \vdots \nonumber \\
\left(\frac{D^{N-1}v}{\mathrm{d}t^{N-1}}(0) \right)^i &= y_{(N)}^i + \Gamma^i_{jk}y_{(1)}^j\left(\frac{D^{N-2}v}{\mathrm{d}t^{N-2}}(0) \right)^k.
\end{align}
The Jacobian matrix of the coordinate transformation
\begin{align}\label{TNasvectbund}
\left( x^i,y^i_{(1)},y^i_{(2)},\cdots,y^i_{(N)} \right) \mapsto \left( x^i, v(0)^i, \left(\frac{Dv}{\mathrm{d}t}(0) \right)^i,\left(\frac{D^2v}{\mathrm{d}t^2}(0) \right)^i,\cdots, \left(\frac{D^{N-1}v}{\mathrm{d}t^{N-1}}(0) \right)^i \right)
\end{align}
is upper triangular, with blocks of the identity matrix $\delta^i_j$ on the diagonal, so in particular it is invertible and has determinant $1$.

Since each of the $(D^a v / \mathrm{d}t^a (0))^i$ transform as a tangent (or contravariant) vector, we have produced an isomorphism from $T^{(N)}M$ to $(T\oplus\cdots\oplus T)M$ ($N$ times).

Thus the expressions for $y$-moments of $\psi(x,y)$, such as those in (\ref{appBexample}), can be reinterpreted invariantly using this isomorphism. In more detail:
\begin{enumerate}
\item Perform the coordinate transformation (\ref{TNasvectbund}) on all expressions. Geometrically, this is the isomorphism $T^{(N)}M \simeq (T\oplus\cdots\oplus T)M$, so functions, vector fields, differential forms etc. on $T^{(N)}M$ can be transformed correspondingly to those on $(T\oplus\cdots\oplus T)M$.
\item In particular, transform $\psi(x,y)\mathrm{d}\Gamma$ into a density on $(T\oplus\cdots\oplus T)M$. Since the Jacobian matrix of the coordinate transformation (\ref{TNasvectbund}) has determinant $1$, we can transform the distribution function $\psi(x,y)$ like a scalar function. 
\item Compute all moments of $\psi(x,v(0),D v / \mathrm{d}t (0),\ldots,D^{N-1} v / \mathrm{d}t^{N-1} (0) )$ in the new set of coordinates. Since each of the $(D^a v / \mathrm{d}t^a (0))^i$ transforms as a vector, the new moments can now be interpreted geometrically as tensor fields (more precisely, fields of tensor densities) on $M$.
\end{enumerate}

If the manifold $M$ is the Euclidean space $\mathbb{R}^n$ with the standard metric $\delta_{ij}$, then the Christoffel symbols $\Gamma^i_{jk}$ vanish identically in Cartesian coordinates, so explicit coordinate expressions remain unmodified in this case. Nonetheless, the above procedure gives an invariant interpretation of the moments of $\psi$, which is useful when one wishes to consider non-Cartesian coordinate systems such as cylindrical polar coordinates, or manifolds that are not flat such as the surface of a sphere.

Nonetheless, the isomorphism from $T^{(N)}M$ to $(T\oplus\cdots\oplus T)M$ we have described is \emph{not canonical}, in the sense that it depends on the choice of a \emph{metric}, or more generally a \emph{connection} on $TM$. While there are complete lifts of vector fields to $T^{(N)}M$ and to $(T\oplus\cdots\oplus T)M$ respectively, neither of them depend on the choice of a connection, so their images under the noncanonical isomorphim $T^{(N)}M \simeq (T\oplus\cdots\oplus T)M$ do not coincide in general.

\subsection{Riemannian metrics on the total space of a vector bundle}\label{subapp:metricE}

Let $(M,g)$ be a Riemannian manifold, and let $\pi:E\rightarrow M$ be a vector bundle with a \emph{Riemannian structure} $h$, which is a smoothly varying, symmetric positive definite bilinear form $h_x: E_x \times E_x \rightarrow \mathbb{R}$ on each of the fibres $E_x = \pi^{-1}(x)$. We will sketch a construction of a Riemannian metric on the total space $E$ of the vector bundle, which in some sense is the closest metric to a block diagonal sum of $g$ and $h$, using a connection $\nabla$ on the vector bundle. The idea is as follows: the connection splits the tangent spaces to the total space $E$ into vertical and horizontal subspaces, and we can use $g$ on the horizontal subspace and $h$ on the vertical subspace as the inner product.

Let $q$ be the \emph{rank} of the vector bundle $E$, i.e. the dimension of the fibres $E_x = \pi^{-1}(x)$. Denote a point on the vector bundle $E$ by $(x,\xi)$, where $x$ is a point on $M$, and $\xi \in E_x$ is a vector. A \emph{(smooth) section} of the vector bundle $E$ is a smooth map $\sigma:M\rightarrow E$ such that $\pi \circ \sigma = id_M$ is the identity map on $M$. We will denote the space of all sections as $\Gamma(E)$.

To set out our notation, we will recall a few standard definitions. A \emph{connection} on a vector bundle $E$ is a map $\nabla: \text{Vect}(M) \times \Gamma(E) \rightarrow \Gamma(E)$, which formalises the idea of an infinitesimal parallel transport of a section $\sigma$ along a vector field $X$ on $M$, denoted as $\nabla_X \sigma$. It is required to satisfy the following properties:
\begin{align}
\nabla_{\left(fX+gY\right)} \sigma &= f\nabla_X \sigma + g\nabla_Y \sigma, \label{con1} \\
\nabla_X\left( a\sigma + \tau b\right) &= a\nabla_X \sigma + b\nabla_X\tau, \label{con2} \\
\nabla_X\left(f\sigma\right) &= (X\cdot f) \sigma + f \nabla_X \sigma, \label{con3}
\end{align}
for all smooth functions $f,g$ on $M$, all vector fields $X,Y$ on $M$, all real numbers $a,b$, and all sections $\sigma,\tau$ of $E$.
Condition (\ref{con1}) implies that the expression $\nabla\sigma$ can alternatively be interpreted as a section-valued $1$-form on $M$, since it is $C^\infty(M)$-linear in its first argument. Condition (\ref{con2}) implies that the connection is $\mathbb{R}$-linear in its second argument.  Condition (\ref{con3}) can be interpreted as a form of the \emph{Leibniz rule} -- in terms of section-valued $1$-forms, it is equivalent to
\begin{align}
\nabla\left(f\sigma\right) = \mathrm{d}f \cdot \sigma + f \nabla \sigma,
\end{align}
where the dot denotes the tensor product over $C^\infty(M)$, which we will suppress throughout in our notation. We will call $\nabla \sigma $ the \emph{covariant derivative} of the section $\sigma$. We can allow connections to act on \emph{local sections} $\sigma: U \rightarrow E|_U = \pi^{-1}(U)$, where $U$ is an open set in $M$, by restriction. This will be useful in finding local coordinate expressions later.

Let $U$ be a \emph{framed open set} on $M$, i.e. an open set on $M$ equipped with $q$ local sections $s_\alpha: U \rightarrow E|_U = \pi^{-1}(U), \alpha = 1,\ldots q$ such that $s_1(x),\ldots s_q(x)$ are linearly independent for all $x$ on $U$. The collection of local sections $s_\alpha$ will be called a \emph{frame}. Framed open sets always exists, since vector bundles are \emph{locally trivial}, i.e. for sufficiently small open sets $U$ of $M$, there is always an isomorphism $E|_U \simeq U \times \mathbb{R}^q$, and the choice of frame is equivalent to a choice of such an isomorphism. This is also called a \emph{local trivialisation} of the vector bundle.

Thus, a choice of frame $s_\alpha$ endows $E|_U = \pi^{-1}(U)$ with a coordinate system. In more detail, any point in $E|_U$ can be described by $\dim(M)+q$ coordinates $(x^i,\xi^\alpha)$, where
\begin{align}
(x,\xi) = (x^i,\xi^\alpha s_\alpha).
\end{align}
If $\nabla$ is a connection on the vector bundle $E$, its local expression relative to the frame $s_1(x),\cdots s_q(x)$ can be given in terms of the \emph{connection matrix} $\omega^\beta_\alpha$ as
\begin{align}
\nabla s_\alpha = \omega^\beta_\alpha s_\beta = \omega^\beta_{i \alpha}(x)\mathrm{d}x^i s_\beta,
\end{align}
where $\omega^\beta_\alpha = \omega^\beta_{i \alpha}(x)\mathrm{d}x^i$ should be interpreted as a matrix of $1$-forms on $U$. For example, if $E=TM$ is the tangent bundle, and given a coordinate neighbourhood $U$ of $M$, we choose the frame induced by coordinates $\partial/\partial x^\alpha$, for $\alpha = 1,\ldots \dim(M)$, then the connection matrix is related to the familiar Chrstoffel symbols:
\begin{align}
\nabla \frac{\partial}{\partial x^\alpha} = \Gamma^\beta_{i\alpha}\mathrm{d}x^i \frac{\partial}{\partial x^\beta}.
\end{align} 
A local section $\sigma = \sigma(x)$ can be written as $\sigma = \xi^\alpha(x)s_\alpha$ for $q$ local functions $\xi^1,\cdots,\xi^q$. Using the properties (\ref{con1},\ref{con2},\ref{con3}), we can write the covariant derivative of $\sigma = \xi^\alpha(x)s_\alpha$ as
\begin{align}
\nabla \left(\xi^\alpha s_\alpha\right) = \left( \mathrm{d}\xi^\beta + \xi^\alpha \omega^\beta_\alpha  \right) s_\beta.
\end{align}
The local section $\sigma$ is called \emph{horizontal} if $\nabla \sigma = 0$. Horizontal sections defined on open sets $U$ of $M$ do not exist a priori, but horizontal sections along (smooth) \emph{curves} on $U$ always exist, since the condition of being horizontal becomes an ordinary differential equation with smooth right-hand side -- there is always a unique and smooth solution for short times, given an initial condition. Given a curve $x(t)$ on $U$ and a corresponding point $(x(0),\xi)$ on $E_{x(0)}$, the unique horizontal local section along $x(t)$ that meets $(x(0),\xi)$ defines a curve $(x(t),\xi(t))$ on $E$, where $\xi(t) \in E_{x(t)}$. The curve $(x(t),\xi(t))$ obtained by this procedure is called the \emph{horizontal lift} of the  curve $x(t)$.
Consider the following $1$-forms on $E|_U = \pi^{-1}(U)$:
\begin{align}\label{appBdxi}
\delta \xi^\beta = \mathrm{d}\xi^\beta + \xi^\alpha \omega^\beta_{i \alpha }(x)\mathrm{d}x^i.
\end{align}
By construction, if we have a curve on $E$ which is the horizontal lift of a curve on $U$, then the tangent vectors of the curve on $E$ will be annihilated by the $1$-forms $\delta \xi^\beta$. Furthermore, it can be shown that $\delta \xi^\beta$ transforms as a vector under a change of frame: if $s_\alpha(x) \rightarrow s^\prime_{\alpha}(x) = A^\beta_\alpha(x)s_\beta(x)$, where $A^\beta_\alpha$ is a $(q\times q)$ invertible matrix of functions on $U$, then $\delta \xi^\beta = A^\beta_\alpha \delta \xi^{\prime\alpha}$.

At each point $(x,\xi)$ of a vector bundle $\pi:E\rightarrow M$, there is a canonically defined \emph{vertical subspace} $VE_{(x,\xi)}$ of the tangent space $T_{(x,\xi)}E$, defined as
\begin{align}
VE_{(x,\xi)} = \ker\left( \pi_{*}:T_{(x,\xi)}E\rightarrow T_x M \right) = \ker\left( \mathrm{d}x^i \right).
\end{align}
Equivalently, $VE_{(x,\xi)}$ is the span of the tangent vectors of curves based at $(x,\xi)$ that do not leave the fibre $E_x$. Since $E_x$ is a vector space, this gives a canonical isomorphism $VE_{(x,\xi)} \simeq E_x$, which we will suppress in our notation.

We can (noncanonically) define a \emph{horizontal subspace} $HE_{(x,\xi)}$ of $T_{(x,\xi)}E$, such that $VE_{(x,\xi)} \oplus HE_{(x,\xi)} = T_{(x,\xi)}E$, using a connection. It is defined as
\begin{align}
HE_{(x,\xi)} = \ker\left( \delta \xi^\beta \right) = \ker\left( \mathrm{d}\xi^\beta + \xi^\alpha \omega^\beta_{i \alpha }(x)\mathrm{d}x^i \right).
\end{align}
Equivalently, $HE_{(x,\xi)}$ is the span of tangent vectors of curves based at $(x,\xi)$, which arose from horizontal lifts of curves on the base manifold $M$ based at $x$. Since the $1$-forms (\ref{appBdxi}) transform appropriately under the change of frames, the definition of $HE_{(x,\xi)}$ as the kernel of the $1$-forms $\delta \xi^\beta$ is independent of the choice of frame, and only depends on the connection $\nabla$.

Now we can construct the Riemannian metric on the total space $E$ as follows. Let $X \in T_{(x,\xi)}E$. Since $VE_{(x,\xi)} \oplus HE_{(x,\xi)} = T_{(x,\xi)}E$, there are unique vectors $X_v \in VE_{(x,\xi)}$ and $X_h \in HE_{(x,\xi)}$ such that $X = X_v + X_h$. $X_v$ and $X_h$ are called the \emph{vertical component} and \emph{horizontal component} of $X$, respectively. We will use the subscripts $v$ and $h$ to denote the vertical and horizontal components for tangent vectors on $E$.

In terms of the Riemannian metric $g$ on $M$ and the Riemannian structure $h$ on the vector bundle $E$, we can define a Riemannian metric $G$ on the total space $E$ as follows:
\begin{align}
G_{(x,\xi)}(X,Y) = g_x(\pi_{*}X_h,\pi_{*}Y_h) + h_x(X_v,Y_v), \quad \text{for $X, Y \in T_{(x,\xi)}E$.}
\end{align}
We have suppressed the isomorphism $VE_{(x,\xi)} \simeq E_x$ from the notation. In coordinates, if $h_{\alpha\beta} = h(s_\alpha,s_\beta)$ are the components of the Riemannian structure, then
\begin{align}
G = g_{ij}\mathrm{d}x^i\mathrm{d}x^j + h_{\alpha\beta} \delta \xi^\alpha \delta \xi^\beta.
\end{align}

When $E=TM$ and the connection is the Riemannian connection (also known as the Levi-Civita connection) on $M$, the metric $G$ is precisely the \emph{Sasaki metric} \cite{Yano73book}. We can similarly endow $T^{(N)}M$ with a Riemannian metric, by applying the isomorphism $T^{(N)}M \simeq (T\oplus\cdots\oplus T)M$ using the Riemannian connection on $TM$ (see appendix \textbf{\ref{subapp:vectTNM}}). Since $(T\oplus\cdots\oplus T)M$ can be endowed with a Riemannian metric (using the direct sum Riemannian structure and the direct sum connection inherited from $TM$), we can pull this back to a Riemannian metric on $T^{(N)}M$. When $M = \mathbb{R}^n$ with the standard metric $\delta_{ij}$, the metric so constructed on $T^{(N)}M$ coincides with the standard metric on $\mathbb{R}^{n(N+1)}$, since the Christoffel symbols vanish identically.

\bibliography{references}{}
\bibliographystyle{abbrv}
\end{document}